\documentclass[onecolumn,draftcls,12pt]{IEEEtran}
\usepackage{amsmath,amsfonts}
\usepackage{algorithmic}
\usepackage{array}
\usepackage{textcomp}
\usepackage{stfloats}
\usepackage{url}
\usepackage{verbatim}
\usepackage{graphicx}
\hyphenation{op-tical net-works semi-conduc-tor IEEE-Xplore}
\def\BibTeX{{\rm B\kern-.05em{\sc i\kern-.025em b}\kern-.08em
    T\kern-.1667em\lower.7ex\hbox{E}\kern-.125emX}}
\usepackage{balance}
\usepackage{verbatim}
\usepackage{amsfonts}
\usepackage{amssymb}
\usepackage{stfloats}
\usepackage{cite}
\usepackage{soul}
\usepackage{setspace}
\usepackage{graphicx}
\usepackage{psfrag}
\usepackage{amsmath}
\usepackage{array}
\usepackage{epstopdf}
\usepackage{authblk}
\usepackage{graphicx} 
\usepackage{amsthm} 
\usepackage{lipsum}
\usepackage{verbatim} 
\usepackage{authblk}
\usepackage{mathtools}
\usepackage{cuted}
\usepackage[lined,boxed,ruled]{algorithm2e}
\usepackage{booktabs}
\usepackage{subfigure}

\usepackage[usenames]{color}

\newtheorem{Definition}{Definition}

\newtheorem{Lemma}{Lemma}
\newtheorem{Theorem}{Theorem}

\newtheorem{Remark}{Remark}

\newtheorem{Corollary}{Corollary}

\begin{document}

\title{Analog MIMO Communication for One-shot Distributed Principal Component Analysis}

\author{{Xu~Chen},~{Erik G. Larsson}, and~{Kaibin~Huang}

\thanks{X. Chen and K. Huang are with the Department of Electrical and Electronic Engineering, The University of Hong Kong, Hong Kong (Email: \{chenxu, huangkb\}@eee.hku.hk). E. G. Larsson is with the Department of Electrical Engineering (ISY), Linköping University, 58183 Linköping, Sweden (Email: \{erik.g.larsson\}@liu.se). Corresponding author: K. Huang.}}

\markboth{Transactions on Signal Processing,~Vol.~xx, No.~xx, JUNE~2022}%
{Analog MIMO Communication for One-shot Distributed Principal Component Analysis}

\maketitle
\begin{abstract}
A fundamental algorithm for data analytics at the edge of wireless networks is \emph{distributed principal component analysis} (DPCA), which finds the most important information embedded in a distributed high-dimensional dataset by distributed computation of a reduced-dimension data subspace, called \emph{principal components} (PCs). In this paper, to support one-shot DPCA in wireless systems, we propose a framework of analog MIMO transmission featuring the uncoded analog transmission of local PCs for estimating the global PCs. To cope with channel distortion and noise, two maximum-likelihood (global) PC estimators are presented corresponding to the cases with and without receive \emph{channel state information} (CSI). The first design, termed coherent PC estimator, is derived by solving a Procrustes problem and reveals the form of regularized channel inversion where the regulation attempts to alleviate the effects of both receiver noise and data noise. The second one, termed blind PC estimator, is designed based on the subspace channel-rotation-invariance property and computes a centroid of received local PCs on a Grassmann manifold. Using the manifold-perturbation theory, tight bounds on the \emph{mean square subspace distance} (MSSD) of both estimators are derived for performance evaluation. The results reveal simple scaling laws of MSSD concerning device population, data and channel \emph{signal-to-noise ratios} (SNRs), and array sizes. More importantly, both estimators are found to have identical scaling laws, suggesting the dispensability of CSI to accelerate DPCA. Simulation results validate the derived results and demonstrate the promising latency performance of the proposed analog MIMO.
\end{abstract}

\begin{IEEEkeywords}
Distributed principal component analysis, analog MIMO, subspace estimation.
\end{IEEEkeywords}

\section{Introduction}

Mobile devices have  become  a dominant platform for Internet access and mobile data traffic has been growing at an exponential rate in the last decade. This motivates the relocation  of data analytics and machine learning algorithms originally performed in the cloud to the network edge to gain fast access to enormous  mobile data~\cite{NiyatoFL2020,GZhuMag2020}.  The distilled knowledge and trained AI models can  support a wide-range of mobile applications~\cite{BennisApplications2021}.  Among many data-driven techniques, \emph{principal component analysis} (PCA) is  a fundamental tool in data analytics that finds application in diverse scientific fields ranging from wireless communication (see e.g.,~\cite{WangPCA2020,SunPCA2020}) to machine learning (see e.g.,~\cite{BartlettPCA2002,BelhumeurPCA1997}). This unsupervised learning technique provides  a  simple way to identify a low-dimensional subspace, called \emph{principal components} (PCs), that contains the most important information of a high-dimensional dataset,  thereby facilitating  feature extraction and data compression~\cite{PCA}. Specifically, the principal components are computed  by~\emph{singular value decomposition} (SVD) of the data matrix comprising  data samples as its columns~\cite{gemp2021eigengame}. In a mobile network with distributed data, data uploading from edge devices (e.g., sensors and smartphones) for centralized PCA may not be feasible due to the issues of data privacy and ownership and uplink traffic congestion~\cite{FLsurvey2020}. This issue has motivated researchers to design   \emph{distributed PCA} (DPCA) algorithms~\cite{NIPS2014_DPCA}. A typical algorithm, called \emph{one-shot DPCA} and also considered in this work,  is to compute local PCs at each device  using its local data and then upload local solutions from multiple devices to a server for aggregation to give the \emph{global PCs}, which approximate the \emph{ground-truth}  solution corresponding to centralized PCA \cite{JF2019estimation2019, VC2020Arxiv, DG2017PMLR}. Alternatively, iterative DPCA can be designed by distributed implementation of classic numerical iterative algorithms, for example,   \emph{approximate Newton’s method}~\cite{chen2021distributed} and  \emph{stochastic gradient descent} \cite{zhang2021turning,MP2013Erratum}, which improves the performance of one-shot DPCA at the cost of much higher communication overhead. Fast  DPCA targets latency-sensitive applications such as  autonomous  driving, virtual reality,  and digital twins. To accelerate one-shot DPCA in a wireless network, we propose a novel framework of analog \emph{multiple-input-multiple-output} (MIMO) transmission. Its key feature is that each device directly transmits local PCs over a MIMO channel using uncoded  linear analog modulation, which requires no transmit \emph{channel state information} (CSI). We design optimal PC (subspace) estimators at the server for  both the cases with and without receive CSI. 

To support one-shot DPCA in a wireless system, the  proposed analog MIMO scheme involves the transmission of local PCs from each device as an uncoded and linear analog modulated unitary space-time matrix to a server. The server uses local PCs from multiple devices, which time-share the channel, to estimate the global PCs. The scheme  reduces  communication latency in three ways: 1) the coding and decoding is removed;  2)  CSI feedback is unnecessary since transmit CSI is not required; and 3) blind detection can avoid the need for explicit channel estimation. Like its digital counterpart, analog MIMO also uses antenna arrays to spatially multiplex data symbols but there is a distinctive difference. The symbols transmitted by the classic digital MIMO form parallel data streams, which are separately modulated, encoded and allocated different  transmission  power~\cite{Telatar1999}. Thereby, the channel capacity is achieved without  bit errors. In contrast, analog MIMO is an uncoded joint source-and-channel transmission  scheme customized for DPCA. The lack of coding exposes over-the-air signals to receiver noise, which, however, can be coped with in two    ways.  First, the server's  aggregation of local PC from multiple devices  suppresses not only data noise (i.e., deviation from the ground-truth) but also receiver noise, which diminishes as the number of devices grows~\cite{GXZhuonebit2020}. Second, the receiver noise can be suppressed by optimal PC estimation, which is a  main topic of this work.

The proposed analog MIMO for DPCA shares a common feature with two classic  MIMO schemes, namely non-coherent space-time modulation and analog MIMO channel feedback, in that they all involve transmission of a unitary matrix over a MIMO channel. Being a digital scheme, non-coherent space-time modulation is characterized by  a space-time constellation comprising  unitary matrices as its elements   \cite{Grassmannian2002TIT,NonTimDavidson2009,Marzetta2000Unitary, Marzetta2000Systematic}. The purpose of such a design is to support non-coherent communication where no CSI is required at the transmitter and receiver. This is made possible by the fact that the distortion  of a MIMO channel does not change the subspace represented by a  transmitted flat unitary matrix \cite{Marzetta2000Unitary}. The same principle is exploited in this work to enable blind detection. The capacity-achieving  constellation  design was found to be one that solves the packing problem on a Grassmann manifold, referring to a space of sub-spaces (or equivalently  unitary  matrices) \cite{Grassmannian2002TIT}. On the other hand, the transmitted unitary matrix in the scenario of  analog MIMO channel feedback is a pilot signal known by the receiver; its purpose is to assist the receiver to estimate the MIMO channel for forward-link transmission assuming the presence of channel duality \cite{CSI2006,Cairefeedback2010}. The scheme is found to be more efficient than digital CSI feedback, which quantizes and transmits CSI as bits, especially in a multi-user MIMO system where multi-user CSI feedback causes significant overhead~\cite{Cairefeedback2010}. Despite the above common feature, the current analog MIMO scheme differs from the other two in comparison.  Unlike analog MIMO channel feedback, the current analog MIMO is for data transmission instead of channel estimation. On the other hand, a unitary matrix transmitted by the analog MIMO is directly the payload data while that using the  non-coherent space-time modulation needs demodulation and decoding into bits that represent a quantized  version of the source data. 

We assume  time-sharing of the uplink channel between the devices and the server, and study both the cases with and without receive CSI. In principle, our techniques may be extended to the case when
multiple devices transmit simultaneously using over-the-air aggregation (AirAggregate) techniques that exploit the waveform superposition property of a multi-access channel (see, e.g.,~\cite{KYang2020,GXZhuonebit2020,GXZhuAirComp2019}).
This, however, would require efficient acquisition of transmit CSI and stringent synchronization across devices, which is difficult in our foreseen scenario. For example, we consider the DPCA traffic latency-critical and make no assumptions on uplink/downlink reciprocity (that could otherwise facilitate synchronization for AirAggregate purposes). The key component of the analog MIMO scheme is the global PC estimator that  is referred to simply as PC estimator. The designs associated with the cases with and without receive CSI are termed \emph{coherent} and \emph{blind} PC estimators, respectively. The main contributions made by this work are summarized as follows. 

\begin{itemize}

\item \emph{Optimal PC Estimation:} Given received local PCs, the optimal PC estimator is designed based on the \emph{maximum likelihood} (ML) criterion. First, the problem of optimal coherent PC estimation is solved in closed form by formulating it as a Procrustes problem. The resultant optimal estimator is characterized by \emph{regularized channel inversion} that balances the suppression  of data and receiver noise. On the other hand, to design the optimal blind PC estimator, the knowledge of channel distribution  is leveraged to transform the corresponding ML problem into one that is solved by finding the centroid of the  received local PCs by their projection onto a  Grassmannian. Theoretical analysis shows that both estimates are unbiased.

\item \emph{Estimation Error Analysis:} Measured by the  metric of \emph{mean square subspace distance} (MSSD) with respect to the ground-truth, the performance  of the preceding designs of optimal PC estimators is analyzed  by deriving upper bounds on their MSSD using manifold perturbation theory as a tool. The two bounds, which are tight as verified  by simulation, are observed to have the same form except for a difference by a multiplicative factor of two. As a result, the two estimators  are characterized by identical scaling laws, showing the dispensability  of receive CSI and the effectiveness of blind estimation  for DPCA. In particular, the  estimation error is \emph{inversely proportional} to the number of devices (or equivalently, the global   data size), achieving the same scaling as in the ideal case without receiver noise ~\cite{JF2019estimation2019,VC2020Arxiv}. Furthermore,  given a fixed number of PCs, the error is a monotone decreasing function of the data and channel \emph{signal-to-noise ratios} (SNRs) and the array sizes. It is observed that the spatial diversity gain contributed by arrays suppresses receiver noise more significantly  than data noise. As a result, when large arrays are deployed for analog MIMO, data noise becomes  the performance bottleneck  of DPCA. 

\end{itemize}

In addition, simulation results demonstrate that analog MIMO can achieve much shorter communication latency than its digital counterpart given similar error performance. Finally, it is also worth mentioning that the current design and analytical results can be extended to the case with transmit CSI where analog modulation facilitates over-the-air aggregation of local PCs simultaneously transmitted by devices, further reducing multi-access latency~\cite{GXZhuAirComp2019}. The extension is straightforward since channel inversion  at transmitters as required by the scheme removes MIMO channel distortion. 

The remainder of the paper is organized as follows. Section~\ref{Sec:ModelandMetrics} introduces system models and metrics. The designs and performance analysis of  coherent and blind global PC estimation  are presented in Section~\ref{Sec:coherent} and~\ref{Sec:blind}, respectively, assuming the matching of PCs'  and transmit array's  dimensionality. The assumption is relaxed in Section~\ref{Sec:extension}. Numerical  results are provided  in Section~\ref{Sec:experiments}, followed by concluding remarks in Section~\ref{Sec:conclusion}.

%

\section{Models and Metrics}\label{Sec:ModelandMetrics}
As illustrated in Fig.~\ref{fig:noncoherentcommunication}, we consider a MIMO system for supporting DPCA over  $K$ edge devices and  coordinated by  an edge server.  Relevant models and metrics are described as follows. 

\begin{figure*} 
	\centering
	\begin{minipage}[b]{0.9\textwidth}
		\centering
		\includegraphics[width=\textwidth]{ 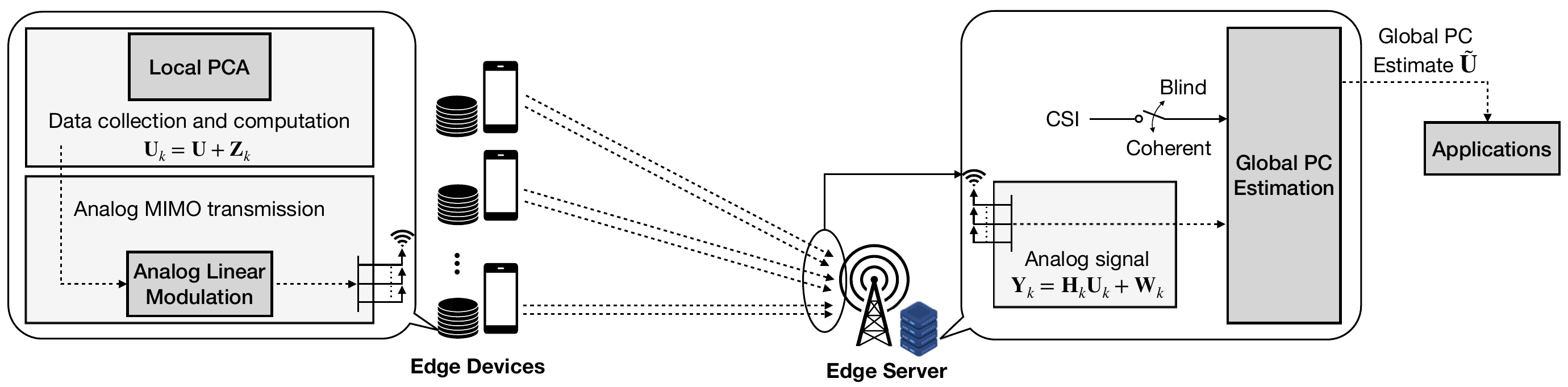}
		\vspace{-8mm}
	\end{minipage}
	\caption{DPCA enabled by analog MIMO communication.}
	\label{fig:noncoherentcommunication}
\end{figure*} 

\subsection{Distributed Principal Component Analysis}

To model DPCA, it is useful to first consider centralized PCA with \emph{independent and identically distributed} (i.i.d.) data samples $\mathbf{x}_{k,l}\in \mathbb{R}^{N\times 1}$ with $k=1,2,...,K$ and $l=1,2,...,L$. The PCs, an $M$-dimensional subspace, can be represented using a fat orthonormal matrix $\mathbf{U}\in \mathbb{R}^{M\times N}$ with $M < N$  since  the span of its row space,  $\mathsf{span}(\mathbf{U})$, specifies the PCs. Then centralized PCA is to find  $\mathbf{U}$ that provides the best representation of the data covariance by solving the following optimization problem: 
\begin{equation}
    \begin{aligned}\label{problem:PCA}
		\min_{\mathbf{U}} &\ \ \frac{1}{KL}\sum_{k=1}^K\sum_{l=1}^L \left\Vert \mathbf{x}_{k,l} -  \mathbf{U}^{\top}\mathbf{U}\mathbf{x}_{k,l}\right\Vert_2^2,\\
		\text{s.t.}&\ \ \mathbf{U}\mathbf{U}^{\top} = \mathbf{I}_M.
    \end{aligned}
\end{equation}
It is well known that the optimal solution, denoted as $\mathbf{U}^\star$,  for  \eqref{problem:PCA} is given by the $M$ dominant eigenvectors of the sample covariance matrix $\mathbf{R} = \frac{1}{KL}\sum_{k=1}^K\sum_{l=1}^{L}\mathbf{x}_{k,l}{\mathbf{x}}^{\top}_{k,l} $~\cite{gemp2021eigengame}. Some relevant, useful notation is introduced as follows.  Given an $M$-by-$M$ real positive semidefinite symmetric matrix $\mathbf{M}$ and its \emph{eigenvalue decomposition} $\mathbf{M} = \mathbf{Q}_{\mathbf{M}}\mathbf{\Sigma}_{\mathbf{M}}\mathbf{Q}_{\mathbf{M}}^{\top}$, the dominant  $M$-dimensional  eigenspace  can be represented by the orthonormal matrix 
$\mathcal{S}_M\left(\mathbf{M}\right)=\left[\mathbf{q}_1,\mathbf{q}_2,...,\mathbf{q}_M\right]^{\top}$,
where  $\mathbf{q}_i$ is the $i$-th column of $\mathbf{Q}_{\mathbf{M}}$. Using the notation allows the PCs to be written as $\mathcal{S}_M\left(\mathbf{R}\right)$. 

Next, DPCA differs  from its centralized counterpart mainly in that the $KL$ samples are uniformly distributed over $K$ edge devices. The local PCs at each  device, say $\mathbf{U}_k$ at device $k$,  are computed also using eigenvalue decomposition but based on the local dataset. Due to the reduced dataset, $\mathbf{U}_k$ deviates from the ground-truth $\mathbf{U}^\star$ and the deviation is modeled as additive noise, termed \emph{data noise}:  
$\mathbf{U}_k = \mathbf{U}^\star + \mathbf{Z}_k$, 
where  $\{\mathbf{Z}_k\}$, which  model data noise,  are i.i.d. Gaussian $\mathcal{N}\left(0,\sigma_{\text{d}}^2\right)$ random variables. The data-noise model  yields theoretic tractability and is commonly adopted both in the  distributed PCA   (see, e.g.,~\cite{Calderbank2015,grammenos2020federated}) and  the  distributed learning literature  (see, e.g.,~\cite{signSGD2018}). With this model, the noise  variance models the effect of the limited   local date size.  The ratio between the local and global data size is termed  the local dataset reduction factor; an increase of this ratio corresponds to  a growing data noise variance, and vice versa. The data SNR, denoted as $\gamma_{\text{d}}$,  is defined as the ratio between the \textit{average power} of desired data $\mathbf{U}$, i.e. $\mathsf{E}\left[\Vert \mathbf{U} \Vert_F^2\right]/N=M/N$, and the data noise variance: 
\begin{equation}\label{eq:datanoise}
	\gamma_{\text{d}}=\frac{M}{N\sigma_{\text{d}}^2}.
\end{equation}

\subsection{Modelling Analog MIMO Communication}
First, the transmission  and channel models for analog MIMO communication are described as follows. Let $N_\text{r}$ and $N_{\text{t}}$ with $N_{\text{r}} \geq N_{\text{t}}$ denote the numbers of antennas at the edge server and each device, respectively. The path losses of different devices are assumed to have been compensated for by power control at the base station, such that all devices have the same effective average path gain. Herein, we consider a standard Rayleigh fading model, where the MIMO channel coefficients are represented by a complex random  matrix  $\mathbf{H}_k\in \mathbb{C}^{N_{\text{r}}\times N_{\text{t}}}$ with i.i.d.  $\mathcal{CN}(0,1)$ entries. For other fading models, for example  Rician fading, the effective channel rank is likely less than in the  Rayleigh fading model. In this case, a matrix symbol may need to be partitioned and transmitted over multiple symbol durations following the procedure in Section~\ref{Sec:extension}; the proposed estimator designs, however, remain applicable. For ease of notation and exposition, $N_{\text{t}}$ is assumed to be identical to  the PC dimensions,  $M$; this assumption is relaxed in Section~\ref{Sec:extension}. 

We further consider a \emph{matrix symbol} as a data unit for transmission, which is defined as follows. In each time slot,  an $N_t\times 1$ vector of complex scalar symbols is transmitted over $N_t$ antennas. Then a (matrix) symbol of duration $N$ slots supports the transmission of an  $N_t\times N$ matrix. Thereby, an orthonormal matrix representing the local PCs computed at a device,  say $\{\mathbf{U}_k\}$ at device $k$, is transmitted using a single matrix symbol duration. Devices time-share the channel for transmission. It is assumed that the channel coherence time is longer than a matrix symbol duration, such that the channel is  constant within each symbol duration  but may vary between different symbols.

Consider the transmission of an arbitrary device, say device $k$. The devices have no  CSI, and the  (matrix) symbol, $\mathbf{U}_k$,  is  transmitted using uncoded linear analog modulation as in non-coherent MIMO \cite{Grassmannian2002TIT}. The transmission of  $\mathbf{U}_k$ leads to the server receiving the  following matrix $\mathbf{Y}_k $:
\begin{equation}\label{eq:model}
	\mathbf{Y}_k = \mathbf{H}_k\mathbf{U}_k + \mathbf{W}_k = \mathbf{H}_k\left(\mathbf{U} + \mathbf{Z}_k\right) + \mathbf{W}_k,
\end{equation} 
where $\mathbf{W}_k\in \mathbb{C}^{N_{\text{r}}\times N}$ indicates receiver noise with i.i.d. complex Gaussian $\mathcal{CN}\left(0,\sigma_{\text{c}}^2\right)$ elements. The average power of each transmitted matrix symbol is $P = \mathsf{E}[\Vert \mathbf{U}_k \Vert_F^2]/N= \left(M + M^2/\gamma_{\text{d}}\right)/N$. Then let $\gamma_{\text{c}}$ denote the channel SNR defined as
\begin{equation}\label{eq:channelnoise}
	\gamma_{\text{c}} = \frac{P}{\sigma_\text{c}^2}= \frac{M + M^2/{\gamma_{\text{d}}}}{N\sigma_\text{c}^2}.
\end{equation}
Since each  matrix symbol is real but the channel is complex, the server obtains two observations of the transmitted symbol   from each received symbol. It is useful to separate the real and imaginary parts of the received symbol, $\mathbf{Y}_k$, denoted as  $\mathbf{Y}_k^{\text{Re}}$ and $\mathbf{Y}_k^{\text{Im}}$, respectively. They can be written  as 
\begin{align}
    \mathbf{Y}_k^{\text{Re}} 
    &= \Re\{\mathbf{Y}_k\} = \mathbf{H}_k^{\text{Re}} \left(\mathbf{U} + \mathbf{Z}_k\right) + \mathbf{W}_k^{\text{Re}},\nonumber\\
    \mathbf{Y}_k^{\text{Im}} 
    &= \Im\{\mathbf{Y}_k\} = \mathbf{H}_k^{\text{Im}} \left(\mathbf{U} + \mathbf{Z}_k\right) + \mathbf{W}_k^{\text{Im}}\nonumber,
\end{align}
where $\mathbf{H}_k^{\text{Re}}$ and $\mathbf{H}_k^{\text{Im}}$ represent  the real and imaginary parts of the channel and are both distributed with  i.i.d. Gaussian $\mathcal{N}(0,1/2)$ entries, and $\mathbf{W}_k^{\text{Re}}$ and $\mathbf{W}_k^{\text{Im}}$ are similarly defined with i.i.d. Gaussian $\mathcal{N}(0,\sigma_{\text{c}}^2/2)$ entries. 
Using this notation allows the following alternative expressions of the received matrix symbol: 
\begin{align}\label{eq:newModel}
	\hat{\mathbf{Y}}_k 
	&= \sqrt{2}
		\begin{bmatrix}
			\mathbf{Y}_k^{\text{Re}}\\
			\mathbf{Y}_k^{\text{Im}}
		\end{bmatrix},\nonumber\\
	& = \sqrt{2}
		\begin{bmatrix}
			\mathbf{H}_k^{\text{Re}}\\
			\mathbf{H}_k^{\text{Im}}
		\end{bmatrix} 
		\left(\mathbf{U} + \mathbf{Z}_k\right) + \sqrt{2}
		\begin{bmatrix}
			\mathbf{W}_k^{\text{Re}}\\
			\mathbf{W}_k^{\text{Im}}
		\end{bmatrix},\nonumber\\
	& \overset{\triangle}{=}\hat{\mathbf{H}}_k\left(\mathbf{U} + \mathbf{Z}_k\right) + \hat{\mathbf{W}}_k. 
\end{align}

 Next, consider the PC estimation at the receiver. For the case with receive CSI, the receiver is assumed to have perfect   knowledge of  the channel matrix $\mathbf{H}_k$ and channel SNR $\gamma_{\text{c}}$ as well as of the data SNR $\gamma_{\text{d}}$. Such knowledge is not required in the blind case. Both the coherent PC estimator in the former case and the blind one in the latter are designed using the ML criterion. Let  $\mathcal{L}(\mathbf{U};\hat{\mathbf{Y}}_k)$ denote  the likelihood function of the estimate $\mathbf{U}$ given the observations in $\hat{\mathbf{Y}}_k$. Then the PC estimation can be formulated as  
\begin{equation}
    \begin{aligned}\label{problem:ML}
    	\max_{\mathbf{U}} &\ \ \mathcal{L}(\mathbf{U};\hat{\mathbf{Y}}_k),\\
    	\text{s.t.}&\ \ \mathbf{U}\mathbf{U}^{\top} = \mathbf{I}_M.
    \end{aligned}
\end{equation}
The problem is solved in the following sections   to design coherent and blind PC estimators. A key property that underpins both the    current analog MIMO as well as  non-coherent MIMO   is that the subspace represented by a transmitted matrix, say $\mathbf{U}_k$, is invariant to a MIMO channel rotation, which can be mathematically described  as~\cite{Yuqing2018AutomaticRecognition}
\begin{equation}\label{eq:rotation}
	\mathcal{S}_M\left(\mathbf{U}_k^{\top}\mathbf{H}_k^{\top}\mathbf{H}_k\mathbf{U}_k\right)=\mathcal{S}_M\left(\mathbf{U}_k^{\top}\mathbf{U}_k\right).
\end{equation}

\subsection{PC Error Metric}\label{subsection:definition}

The PC estimation  aims at estimating a subspace containing the ground-truth PCs but their particular orientations within the subspace is irrelevant. In view of this fact, it is most convenient to consider a Grassmann manifold, also called a Grassmannian and referring to a space of subspaces, since a subspace appears as a single point in the manifold~\cite{Yuqing2018AutomaticRecognition}.   Let  $\mathcal{G}_{N,M}$ denote a Grassmannian comprising $M$-dimensional subspaces embedded in an $N$-dimensional space. For convenience, a point in $\mathcal{G}_{N,M}$ is commonly represented using an orthonormal matrix, say $\mathbf{U}$. However, it should be clarified that the point more precisely  corresponds to $\mathsf{span}\left(\mathbf{U}\right)$ and hence the set of subspaces, $\{\mathbf{U}'\}$, which satisfy the equality,  $\mathsf{span}(\mathbf{U})=\mathsf{span}\left(\mathbf{Q}\mathbf{U}'\right)$,  for some $M\times M$ orthonormal matrix $\mathbf{Q}$. It follows from the above discussion that a suitable PC distortion metric should measure the distance between two points in the Grassmannian, which correspond to the subspaces of the estimated PCs and their ground-truth. Among many available subspace distance measures, 
the Euclidean subspace distance,  which  is adopted in this work,  is a popular choice for its tractability \cite{JF2019estimation2019,Yuqing2018AutomaticRecognition}. Using this metric and with the estimated PCs represented by the orthonormal matrix $\tilde{\mathbf{U}}$, its  distance to the ground-truth $\mathbf{U}$ can be defined as 
\begin{equation}\label{eq:distance}
	d(\mathbf{U},\tilde{\mathbf{U}}) = \left\Vert \tilde{\mathbf{U}}^{\top}\tilde{\mathbf{U}} - \mathbf{U}^{\top}\mathbf{U}\right\Vert_F.
\end{equation}
Its geometric meaning is reflected by the  alternative expression  in terms of the principal angles between the subspaces  spanned by $\tilde{\mathbf{U}}$ and $\mathbf{U}$, denoted as $\{\theta_i\}_{i=1}^M$. Define the diagonal matrix $\mathbf{\Theta}(\mathbf{U},\tilde{\mathbf{U}}) = \mathsf{diag}\left(\theta_1,\theta_2,...,\theta_M\right)$ for those angles;  then $d^2(\mathbf{U},\tilde{\mathbf{U}}) =2\sum_{i=1}^M\mathop{\sin}^2(\theta_i)$~\cite{JF2019estimation2019}. Building on the subspace distance measure, we define the PC error metric as follows.  

\begin{Definition}[Mean square subspace distance]\emph{
The \emph{mean square subspace distance} (MSSD) between  $\mathbf{U}$ and $\tilde{\mathbf{U}}$ is defined as $d_{\text{ms}}(\mathbf{U},\tilde{\mathbf{U}})=\mathsf{E}\left[ d^2(\mathbf{U},\tilde{\mathbf{U}}) \right]$, i.e.
\begin{equation} \label{eq:MSSD}
	d_{\text{ms}}(\mathbf{U},\tilde{\mathbf{U}}) = 2M-2\mathsf{E}\left[\mathsf{Tr}(\tilde{\mathbf{U}}^{\top}\tilde{\mathbf{U}}\mathbf{U}^{\top}\mathbf{U})\right].
\end{equation}
where the expectation is taken over the distributions of  data noise $\mathbf{Z}$, receiver noise $\mathbf{W}$, and also channel gain $\mathbf{H}$ in the case of blind PC estimation. 
}
\end{Definition}

%

\section{PC Estimation for Coherent MIMO Receivers} \label{Sec:coherent}

\subsection{Design of Coherent PC Estimator}
First, the usefulness of receive CSI is to facilitate coherent combination of the received data. To this end, channel inversion is applied to a received matrix symbol, say $\hat{\mathbf{Y}}_k$,  by multiplication with the zero-forcing matrix,   $\hat{\mathbf{H}}_k^+ \overset{\triangle}{=}(\hat{\mathbf{H}}_k^{\top}\hat{\mathbf{H}}_k)^{-1}\hat{\mathbf{H}}_k^{\top}$, which is the pseudo inverse of the channel $\hat{\mathbf{H}}_k$. Using \eqref{eq:newModel}, this yields the following observations for PC estimation: $\hat{\mathbf{H}}_k^+\hat{\mathbf{Y}}_k = \mathbf{U} + \mathbf{Z}_k + \hat{\mathbf{H}}_k^+\hat{\mathbf{W}}_k, \forall k$.
The processed observations follow a matrix Gaussian distribution $\mathcal{MN}\left(\mathbf{U},\mathbf{\Sigma}_k,\mathbf{I}_N\right)$ with $\mathbf{\Sigma}_k = \sigma_{\text{d}}^2\mathbf{I}_M+\sigma_{\text{c}}^2{\hat{\mathbf{H}}_k^+}\left({\hat{\mathbf{H}}_k^+}\right)^{\top}$. Specifically, 
\begin{align} 
    &p(\hat{\mathbf{H}}_k^+\hat{\mathbf{Y}}_k|\mathbf{U},\mathbf{H}) \nonumber\\
    &=\frac{\exp\left(-\frac{1}{2}\mathsf{Tr}\left((\hat{\mathbf{H}}_k^+\hat{\mathbf{Y}}_k-\mathbf{U})^{\top}\mathbf{\Sigma}_k^{-1}(\hat{\mathbf{H}}_k^+\hat{\mathbf{Y}}_k-\mathbf{U})\right)\right)}{(2\pi)^{\frac{MN}{2}}\mathsf{det}\left(\mathbf{\Sigma}_k\right)^{\frac{N}{2}}} \nonumber.
\end{align}
Next, the ML PC estimator is derived by maximizing the likelihood function that is defined as the joint probability of the observations $\{\hat{\mathbf{H}}_k^+\hat{\mathbf{Y}}_k\}_{k=1}^K$ conditioned on the ground-truth $\mathbf{U}$ and channel realizations $\{\hat{\mathbf{H}}_k\}_{k=1}^K$. Since  $\{\hat{\mathbf{H}}_k^+\hat{\mathbf{Y}}_k\}_{k=1}^K$ are mutually independent, the logarithm of the likelihood function can be written as
\begin{align}
	&\mathcal{L}(\mathbf{U;\hat{\mathbf{Y}}},\hat{\mathbf{H}})\nonumber\\
	&= \ln\left(\prod_{k=1}^{K} p(\hat{\mathbf{H}}_k^+\hat{\mathbf{Y}}_k|\mathbf{U},\hat{\mathbf{H}}_k)\right), \nonumber\\ 
	&= -\frac{1}{2}\sum_{k=1}^K\mathsf{Tr}\left((\hat{\mathbf{H}}_k^+\hat{\mathbf{Y}}_k-\mathbf{U})^{\top}{\mathbf{\Sigma}_k}^{-1}(\hat{\mathbf{H}}_k^+\hat{\mathbf{Y}}_k-\mathbf{U})\right) \nonumber \\
	&\,\quad - \frac{1}{2}MNK\ln(2\pi) - \frac{N}{2}\sum_{k=1}^{K}\ln\mathsf{det}\left(\mathbf{\Sigma}_k\right) \nonumber.
\end{align}
One can observe from the above expression that the variable $\mathbf{U}$ only enters into the first term which is further expanded as: 
\begin{align}
	&\sum_{k=1}^K\mathsf{Tr}\left(\mathbf{U}^{\top}\mathbf{\Sigma}_k^{-1}\hat{\mathbf{H}}_k^+\hat{\mathbf{Y}}_k\right) \nonumber\\
	& - \frac{1}{2}\sum_{k=1}^K\mathsf{Tr}\left(\hat{\mathbf{Y}}_k^{\top}(\hat{\mathbf{H}}_k^+)^{\top}\mathbf{\Sigma}_k^{-1}\hat{\mathbf{H}}_k^+\hat{\mathbf{Y}}_k + \mathbf{\Sigma}_k^{-1}\right) \nonumber.
\end{align}
It follows that the likelihood function is determined by $\mathbf{U}$ through $\sum_{k=1}^K\mathsf{Tr}\left(\mathbf{U}^{\top}\mathbf{\Sigma}_k^{-1}\hat{\mathbf{H}}_k^+\hat{\mathbf{Y}}_k\right)$. 

This allows the ML problem in \eqref{problem:ML} to be particularized for the current case as 
\begin{equation}
    \begin{aligned}\label{eq:procrustes}
    	\mathop{\max}_{\mathbf{U}}&\ \ \frac{1}{K}\sum_{k=1}^K\mathsf{Tr}\left(\mathbf{U}^{\top}\mathbf{\Sigma}_k^{-1}\hat{\mathbf{H}}_k^+\hat{\mathbf{Y}}_k\right), \\
    	\mathrm{s.t.}&\ \ \mathbf{U}\mathbf{U}^{\top}=\mathbf{I}_M.
    \end{aligned}
\end{equation}
The problem in  \eqref{eq:procrustes} is non-convex due to the feasible region restricted to a hyper-sphere. One solution method is to transform it into an   equivalent, tractable  orthogonal Procrustes problem~\cite{JCGower2004}. To this end, consider the following SVD
\begin{equation} \label{eq:summation}
	\mathbf{J} = \frac{1}{K}\sum_{k=1}^K\mathbf{\Sigma}_k^{-1}\hat{\mathbf{H}}_k^+\hat{\mathbf{Y}}_k = \mathbf{U}_{\mathbf{J}}\mathbf{\Lambda}_{\mathbf{J}}\mathbf{V}_{\mathbf{J}}^{\top},
\end{equation}
where the diagonal matrix $\mathbf{\Lambda}_{\mathbf{J}}=\mathsf{diag}\left(\lambda_{\mathbf{J},1},...,\lambda_{\mathbf{J},M}\right)$ with $\lambda_{\mathbf{J},m}$ representing the $m$-th singular value, and  $\mathbf{U}_{\mathbf{J}}$ and $\mathbf{V}_{\mathbf{J}}^{\top}$ are orthonormal eigen matrices. Let them be expressed in terms of the singular vectors of $\mathbf{J}$: $\mathbf{U}_{\mathbf{J}}=\left[\mathbf{u}_{\mathbf{J},1},...,\mathbf{u}_{\mathbf{J},M}\right]$, $\mathbf{V}_{\mathbf{J}}^{\top}=\left[\mathbf{v}_{\mathbf{J},1},...,\mathbf{v}_{\mathbf{J},M}\right]^{\top}$.  Let $\tilde{\mathbf{U}}^{\star}$ denote the optimal solution for  the problem in \eqref{eq:procrustes}, which yields  the optimal estimate of the ground-truth $\mathbf{U}$. 

Then substituting \eqref{eq:summation} into the objective function in \eqref{eq:procrustes} allows $\tilde{\mathbf{U}}^{\star}$ to be given as 
\begin{align}
	\tilde{\mathbf{U}}^{\star}
	& = \mathop{\arg\max}_{\mathbf{U}\in \mathcal{G}_{N,M}}\; \mathsf{Tr}\left(\mathbf{U}^{\top}\mathbf{J}\right), \nonumber\\
	& = \mathop{\arg\max}_{\mathbf{U}\in \mathcal{G}_{N,M}}\; \sum_{m=1}^M \lambda_{\mathbf{J},m}\cdot \left\langle \mathbf{U}^{\top}\mathbf{u}_{\mathbf{J},m},\mathbf{v}_{\mathbf{J},m} \right\rangle \nonumber.
\end{align}
The above problem can be solved using the fact that  $\sum_{m=1}^M \lambda_{\mathbf{J},m}\cdot \left\langle \mathbf{U}^{\top}\mathbf{u}_{\mathbf{J},m},\mathbf{v}_{\mathbf{J},m}\right\rangle \leq \sum_{m=1}^M \lambda_{\mathbf{J},m}$ with equality when  $\mathbf{U}^{\top}\mathbf{u}_{\mathbf{J},m} = \mathbf{v}_{\mathbf{J},m}, \forall m$. It follows that  $\tilde{\mathbf{U}}^{*}{}^{\top}\mathbf{U}_{\mathbf{J}} = \mathbf{V}_{\mathbf{J}}$, or equivalently $\tilde{\mathbf{U}}^{\star}  = \mathbf{U}_{\mathbf{J}}\mathbf{V}_{\mathbf{J}}^{\top}$. Using~\cite[Theorem~7.3.1]{MatrixAnalysis},  $\mathbf{U}_{\mathbf{J}}\mathbf{V}_{\mathbf{J}}^{\top}$ can be obtained from the \emph{polar decomposition} of the matrix summation $ \mathbf{J} $ in \eqref{eq:summation}, i.e.,  $\mathbf{U}_{\mathbf{J}}\mathbf{V}_{\mathbf{J}}^{\top} = (\mathbf{J}{\mathbf{J}}^{\top})^{-1/2}\mathbf{J}$, and furthermore spans the same principal eigenspace as $\mathsf{span}\left(\mathbf{V}_{\mathbf{J}}^{\top}\right)$. This leads to the following main result of this sub-section. 
\begin{Theorem}[Optimal PC Estimation  with Receive CSI]\label{Theorem:coherent}
\emph{Given the channel matrices $\{\hat{\mathbf{H}}_k\}_{k=1}^K$ and the received matrix symbols  $\{\hat{\mathbf{Y}}_k\}_{k=1}^K$,  the optimal global PCs based on the  ML criterion are given as 
\begin{equation}\label{eq:coherent}
	\tilde{\mathbf{U}}^{\star} = \mathcal{S}_M\left( {\mathbf{J}}^{\top}\mathbf{J}\right),
\end{equation}
where $\mathbf{J}=\frac{1}{K}\sum_{k=1}^{K} \left[\sigma_{\text{d}}^2\mathbf{I}_M+\sigma_{\text{c}}^2\hat{\mathbf{H}}_k^+(\hat{\mathbf{H}}_k^+)^{\top}\right]^{-1}\hat{\mathbf{H}}_k^+\hat{\mathbf{Y}}_k$.
}
\end{Theorem}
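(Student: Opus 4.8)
The plan is to carry the maximum-likelihood formulation \eqref{problem:ML} through three stages: (i) reduce the likelihood to a single linear trace functional of $\mathbf{U}$ on the orthonormality constraint; (ii) solve the resulting trace maximization as an orthogonal Procrustes problem through the SVD of $\mathbf{J}$; and (iii) identify the maximizing orthonormal matrix, \emph{as a point on the Grassmannian} $\mathcal{G}_{N,M}$, with the dominant $M$-dimensional eigenspace $\mathcal{S}_M(\mathbf{J}^{\top}\mathbf{J})$.

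For stage (i), I would begin from the zero-forcing observation $\hat{\mathbf{H}}_k^+\hat{\mathbf{Y}}_k = \mathbf{U} + \mathbf{Z}_k + \hat{\mathbf{H}}_k^+\hat{\mathbf{W}}_k$, which follows from \eqref{eq:newModel} and $\hat{\mathbf{H}}_k^+\hat{\mathbf{H}}_k = \mathbf{I}_M$. The first task is to establish that, conditioned on $\hat{\mathbf{H}}_k$, this observation is matrix-Gaussian $\mathcal{MN}(\mathbf{U},\mathbf{\Sigma}_k,\mathbf{I}_N)$. This combines the column covariance $\sigma_{\text{d}}^2\mathbf{I}_M$ of the data noise $\mathbf{Z}_k$ with the column covariance $\sigma_{\text{c}}^2\hat{\mathbf{H}}_k^+(\hat{\mathbf{H}}_k^+)^{\top}$ of the filtered receiver noise (using that $\hat{\mathbf{W}}_k$ has i.i.d. $\mathcal{N}(0,\sigma_{\text{c}}^2)$ entries after the $\sqrt{2}$ scaling), and invokes the independence of $\mathbf{Z}_k$ and $\hat{\mathbf{W}}_k$ so the two covariances add while the $N$ columns remain independent. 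Taking the product of the per-device densities (independent across $k$), the log-likelihood splits into a $\mathbf{U}$-free part and the quadratic $-\tfrac{1}{2}\sum_k\mathsf{Tr}((\hat{\mathbf{H}}_k^+\hat{\mathbf{Y}}_k-\mathbf{U})^{\top}\mathbf{\Sigma}_k^{-1}(\hat{\mathbf{H}}_k^+\hat{\mathbf{Y}}_k-\mathbf{U}))$. The crucial observation on expanding this is that the purely quadratic term is constant on the constraint set, since $\mathsf{Tr}(\mathbf{U}^{\top}\mathbf{\Sigma}_k^{-1}\mathbf{U}) = \mathsf{Tr}(\mathbf{\Sigma}_k^{-1}\mathbf{U}\mathbf{U}^{\top}) = \mathsf{Tr}(\mathbf{\Sigma}_k^{-1})$ whenever $\mathbf{U}\mathbf{U}^{\top}=\mathbf{I}_M$; this is exactly the structural feature that reduces the likelihood maximization to maximizing the single linear functional $\tfrac{1}{K}\sum_k\mathsf{Tr}(\mathbf{U}^{\top}\mathbf{\Sigma}_k^{-1}\hat{\mathbf{H}}_k^+\hat{\mathbf{Y}}_k) = \mathsf{Tr}(\mathbf{U}^{\top}\mathbf{J})$, yielding \eqref{eq:procrustes}.

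For stage (ii), with the economy SVD $\mathbf{J}=\mathbf{U}_{\mathbf{J}}\mathbf{\Lambda}_{\mathbf{J}}\mathbf{V}_{\mathbf{J}}^{\top}$, I would write $\mathsf{Tr}(\mathbf{U}^{\top}\mathbf{J}) = \sum_{m}\lambda_{\mathbf{J},m}\langle\mathbf{U}^{\top}\mathbf{u}_{\mathbf{J},m},\mathbf{v}_{\mathbf{J},m}\rangle$ and bound each inner product by unity via Cauchy--Schwarz together with $\|\mathbf{U}^{\top}\mathbf{u}_{\mathbf{J},m}\|=\|\mathbf{u}_{\mathbf{J},m}\|=1$ (as $\mathbf{U}^{\top}$ has orthonormal columns, hence acts isometrically). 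Equality holds iff $\mathbf{U}^{\top}\mathbf{u}_{\mathbf{J},m}=\mathbf{v}_{\mathbf{J},m}$ for all $m$, i.e. $\tilde{\mathbf{U}}^{\star}=\mathbf{U}_{\mathbf{J}}\mathbf{V}_{\mathbf{J}}^{\top}$, recovering the polar-factor characterization already noted.

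Stage (iii) is the identification that makes the statement clean, and I expect it to be the subtlest point rather than any single inequality. The derived maximizer $\mathbf{U}_{\mathbf{J}}\mathbf{V}_{\mathbf{J}}^{\top}$ and the claimed $\mathcal{S}_M(\mathbf{J}^{\top}\mathbf{J})$ are generally distinct orthonormal matrices, yet they must be recognized as the \emph{same} point of $\mathcal{G}_{N,M}$: left-multiplication by the $M\times M$ orthonormal factor $\mathbf{U}_{\mathbf{J}}$ leaves the row space unchanged, so $\mathsf{span}(\mathbf{U}_{\mathbf{J}}\mathbf{V}_{\mathbf{J}}^{\top})=\mathsf{span}(\mathbf{V}_{\mathbf{J}}^{\top})$, which is precisely the Grassmannian equivalence introduced in Section~\ref{subsection:definition}. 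It then remains to show $\mathbf{V}_{\mathbf{J}}^{\top}=\mathcal{S}_M(\mathbf{J}^{\top}\mathbf{J})$, which follows by forming $\mathbf{J}^{\top}\mathbf{J}=\mathbf{V}_{\mathbf{J}}\mathbf{\Lambda}_{\mathbf{J}}^2\mathbf{V}_{\mathbf{J}}^{\top}$: this is the eigendecomposition of a rank-$M$ positive semidefinite matrix whose dominant $M$-dimensional eigenspace is spanned by the columns of $\mathbf{V}_{\mathbf{J}}$. The main obstacle is therefore the careful bookkeeping of stage (i) --- verifying the matrix-Gaussian covariance $\mathbf{\Sigma}_k$ and the collapse of the quadratic term on the Stiefel constraint --- together with interpreting the answer as a subspace rather than a fixed orthonormal representative, so that $\mathbf{U}_{\mathbf{J}}\mathbf{V}_{\mathbf{J}}^{\top}$ and $\mathcal{S}_M(\mathbf{J}^{\top}\mathbf{J})$ coincide on the Grassmannian.
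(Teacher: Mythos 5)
Your proposal is correct and follows essentially the same route as the paper: zero-forcing to obtain matrix-Gaussian observations with column covariance $\mathbf{\Sigma}_k$, reduction of the log-likelihood to the linear functional $\mathsf{Tr}\left(\mathbf{U}^{\top}\mathbf{J}\right)$ on the constraint set (the paper likewise absorbs $\mathsf{Tr}\left(\mathbf{U}^{\top}\mathbf{\Sigma}_k^{-1}\mathbf{U}\right)=\mathsf{Tr}\left(\mathbf{\Sigma}_k^{-1}\right)$ into the $\mathbf{U}$-free constant, which your stage (i) merely makes explicit), then the SVD/Procrustes bound with equality at the polar factor $\mathbf{U}_{\mathbf{J}}\mathbf{V}_{\mathbf{J}}^{\top}$. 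Your stage (iii) spells out, via $\mathbf{J}^{\top}\mathbf{J}=\mathbf{V}_{\mathbf{J}}\mathbf{\Lambda}_{\mathbf{J}}^2\mathbf{V}_{\mathbf{J}}^{\top}$ and invariance of the row space under left multiplication by $\mathbf{U}_{\mathbf{J}}$, exactly the one-line Grassmannian identification the paper uses to conclude $\tilde{\mathbf{U}}^{\star}=\mathcal{S}_M\left(\mathbf{J}^{\top}\mathbf{J}\right)$.
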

Theorem~\ref{Theorem:coherent} suggests  that the optimal  coherent PC estimator should 1) first coherently combine  received observations  with weights reflecting a  linear \emph{minimum-mean-square error} (MMSE) receiver  and 2) then compute the dominant singular space of the result, yielding the optimal estimate of the ground-truth PCs. The weights that aim at coping with MIMO channels and noise  distinguish the current DPCA over wireless channels from the  conventional designs where weights are either unit~\cite{JF2019estimation2019} or independent of channels~\cite{pmlr-v119-huang20e,VC2020Arxiv}. Furthermore, the summation form of $\mathbf{J}$ suggests that when transmit CSI is available, the proposed coherent PC estimator can be further developed to enable over-the-air aggregation exploiting the waveform superposition property of multi-access channels~\cite{GXAirCompMag2021}. To be specific, transmit and beamforming matrices, denoted by $\{\mathbf{B}_{\mathrm{t},k}\}$ and $\mathbf{B}_{\mathrm{r}}$ respectively, can be jointly designed to satisfy $\mathbf{B}_{\mathrm{r}}\hat{\mathbf{H}}_k\mathbf{B}_{\mathrm{t},k} = \left[\sigma_{\text{d}}^2\mathbf{I}_M+\sigma_{\text{c}}^2\hat{\mathbf{H}}_k^+(\hat{\mathbf{H}}_k^+)^{\top}\right]^{-1}$ and
\begin{align*}
    \mathbf{B}_{\mathrm{r}}\mathbf{B}_{\mathrm{r}}^{\top}=\frac{1}{K^2}\sum_{k=1}^{K}& \left[\sigma_{\text{d}}^2\mathbf{I}_M+\sigma_{\text{c}}^2\hat{\mathbf{H}}_k^+(\hat{\mathbf{H}}_k^+)^{\top}\right]^{-1}\hat{\mathbf{H}}_k^{+}\\
    &\cdot\left(\left[\sigma_{\text{d}}^2\mathbf{I}_M+\sigma_{\text{c}}^2\hat{\mathbf{H}}_k^+(\hat{\mathbf{H}}_k^+)^{\top}\right]^{-1}\hat{\mathbf{H}}_k^{+}\right)^{\top}
\end{align*}
such that the superposition of signals from different edge devices is equal to $\mathbf{J}$.
\begin{Remark}[Data-and-Receiver Noise Regularization]
\emph{As shown in Theorem~\ref{Theorem:coherent},  the zero-forcing matrix $\hat{\mathbf{H}}_k^+$ is used to equalize the channel distortion but by doing so, it may amplify  receiver noise given a poorly conditioned channel. This issue is addressed by a regularization term, $\left[\sigma_{\text{d}}^2\mathbf{I}_M+\sigma_{\text{c}}^2\hat{\mathbf{H}}_k^+(\hat{\mathbf{H}}_k^+)^{\top}\right]$, that balances data and receiver noise based on the knowledge of their covariance matrices.}
\end{Remark}
The complexity of the coherent estimator is analyzed as follows. First, the matrix multiplication, $\mathbf{J^{\top}\mathbf{J}}$, and its eigenvalue decomposition have the combined  complexity $O(2M^3)$. Then considering  $\mathbf{J}$, the  terms $\{\left[\sigma_{\text{d}}^2\mathbf{I}_M+\sigma_{\text{c}}^2\hat{\mathbf{H}}_k^+(\hat{\mathbf{H}}_k^+)^{\top}\right]^{-1}\hat{\mathbf{H}}_k^+\hat{\mathbf{Y}}_k\}$ should be computed separately by matrix multiplication and inversion, which leads to  complexity of $O(4KM^2N_{\mathrm{r}}+2KM^3+2KMN_{\mathrm{r}}N)$. Combining these results, the overall complexity of the coherent estimator is $O(2KM^2N_{\mathrm{r}}+KM^3+MKN_{\mathrm{r}}N+M^3)$.

Last, the PC estimation in  Theorem~\ref{Theorem:coherent} is unbiased as shown below. 

\begin{Corollary} \label{Pro:unbiasedEwCSI}
	\emph{The  coherent PC estimator in Theorem~\ref{Theorem:coherent} together with the preceding channel inversion achieves unbiased estimation of the global PCs in the following sense. Define  $\bar{\mathbf{J}} = \mathsf{E}\left[\mathbf{J}\right]$ with $\mathbf{J}$ given by Theorem~\ref{Theorem:coherent}, and the optimal PC estimate $\bar{\mathbf{U}}^\star = \mathcal{S}_M\left( {\bar{\mathbf{J}}}^{\top}\bar{\mathbf{J}}\right)$. Then $d\left(\bar{\mathbf{U}}^\star,\mathbf{U}\right) = 0$.}
\end{Corollary}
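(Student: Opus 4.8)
The plan is to show that taking expectations over the noise purges the noise terms from the aggregated observation matrix while preserving the ground-truth subspace, so that extracting the dominant $M$-dimensional eigenspace of $\bar{\mathbf{J}}^{\top}\bar{\mathbf{J}}$ returns exactly $\mathsf{span}(\mathbf{U})$. First I would compute $\bar{\mathbf{J}}$ in closed form. Using the processed observation $\hat{\mathbf{H}}_k^+\hat{\mathbf{Y}}_k = \mathbf{U} + \mathbf{Z}_k + \hat{\mathbf{H}}_k^+\hat{\mathbf{W}}_k$ and the fact that both the data noise $\mathbf{Z}_k$ and the receiver noise $\hat{\mathbf{W}}_k$ are zero-mean, conditioning on the channels (which the coherent receiver knows) gives $\mathsf{E}[\hat{\mathbf{H}}_k^+\hat{\mathbf{Y}}_k] = \mathbf{U}$. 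Since $\mathbf{U}$ is common to every term, this yields
\[
	\bar{\mathbf{J}} = \frac{1}{K}\sum_{k=1}^K \mathbf{\Sigma}_k^{-1}\mathsf{E}\left[\hat{\mathbf{H}}_k^+\hat{\mathbf{Y}}_k\right] = \mathbf{A}\mathbf{U}, \quad \mathbf{A}\overset{\triangle}{=}\frac{1}{K}\sum_{k=1}^K \mathbf{\Sigma}_k^{-1}.
\]
Because each $\mathbf{\Sigma}_k = \sigma_{\text{d}}^2\mathbf{I}_M + \sigma_{\text{c}}^2\hat{\mathbf{H}}_k^+(\hat{\mathbf{H}}_k^+)^{\top}\succ \mathbf{0}$, the average $\mathbf{A}$ is a symmetric positive-definite $M\times M$ matrix of full rank $M$. (The same conclusion holds if the expectation additionally averages over the i.i.d.\ channels, in which case $\mathbf{A}$ is replaced by the still symmetric positive-definite matrix $\mathsf{E}[\mathbf{\Sigma}^{-1}]$.)

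Next I would form $\bar{\mathbf{J}}^{\top}\bar{\mathbf{J}}$ and identify its dominant eigenspace. Exploiting the symmetry of $\mathbf{A}$ gives $\bar{\mathbf{J}}^{\top}\bar{\mathbf{J}} = \mathbf{U}^{\top}\mathbf{A}^2\mathbf{U}$. Diagonalizing $\mathbf{A}^2 = \mathbf{P}\mathbf{D}\mathbf{P}^{\top}$ with orthogonal $\mathbf{P}$ and $\mathbf{D}=\mathsf{diag}(d_1,\dots,d_M)$, $d_i>0$, and setting $\tilde{\mathbf{U}} = \mathbf{P}^{\top}\mathbf{U}$ (which is again orthonormal since $\tilde{\mathbf{U}}\tilde{\mathbf{U}}^{\top}=\mathbf{I}_M$), I obtain the spectral decomposition
\[
	\bar{\mathbf{J}}^{\top}\bar{\mathbf{J}} = \sum_{i=1}^M d_i\,\tilde{\mathbf{u}}_i\tilde{\mathbf{u}}_i^{\top},
\]
where $\tilde{\mathbf{u}}_i^{\top}$ is the $i$-th row of $\tilde{\mathbf{U}}$. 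The $M$ nonzero eigenvalues are the positive scalars $\{d_i\}$ with orthonormal eigenvectors $\{\tilde{\mathbf{u}}_i\}$, while the remaining $N-M$ eigenvalues vanish. Since $\mathbf{P}^{\top}$ acts invertibly on the left, $\mathsf{span}(\tilde{\mathbf{U}})=\mathsf{span}(\mathbf{U})$, so the dominant $M$-dimensional eigenspace of $\bar{\mathbf{J}}^{\top}\bar{\mathbf{J}}$ coincides with $\mathsf{span}(\mathbf{U})$. Hence $\bar{\mathbf{U}}^{\star}=\mathcal{S}_M(\bar{\mathbf{J}}^{\top}\bar{\mathbf{J}})$ and $\mathbf{U}$ span the same subspace, their orthogonal projectors $\bar{\mathbf{U}}^{\star\top}\bar{\mathbf{U}}^{\star}$ and $\mathbf{U}^{\top}\mathbf{U}$ are identical, and the metric in \eqref{eq:distance} gives $d(\bar{\mathbf{U}}^{\star},\mathbf{U})=0$.

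The main obstacle is the eigenspace-invariance step: I must argue that pre- and post-multiplying the orthonormal $\mathbf{U}$ by the positive-definite ``coloring'' $\mathbf{A}^2$ merely reweights the eigenvalues but leaves the eigenspace untouched. This hinges on $\mathbf{A}$ being full rank, which is guaranteed by the strictly positive regularization $\sigma_{\text{d}}^2>0$. Were $\mathbf{A}$ rank-deficient, some direction within $\mathsf{span}(\mathbf{U})$ could be mapped to a zero eigenvalue, so that the top-$M$ eigenspace would no longer equal $\mathsf{span}(\mathbf{U})$ but only a proper subset of it. Establishing that all $d_i$ are strictly positive is therefore precisely what makes the extracted top-$M$ eigenspace coincide with the ground-truth subspace, and the remaining algebra (zero-mean cancellation and the projector identity) is routine.
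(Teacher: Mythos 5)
Your proof is correct and takes essentially the same route as the paper: write $\bar{\mathbf{J}}$ as a full-rank $M\times M$ matrix times $\mathbf{U}$ (the zero-mean data and receiver noise cancel under the expectation) and conclude that the dominant $M$-dimensional eigenspace of $\bar{\mathbf{J}}^{\top}\bar{\mathbf{J}}$ is exactly $\mathsf{span}(\mathbf{U})$. The only difference is cosmetic: the paper invokes the rotation-invariance property \eqref{eq:rotation} with the rank-$M$ matrix $\mathbf{C}=\mathsf{E}\left[\left[\sigma_{\text{d}}^2\mathbf{I}_M+\sigma_{\text{c}}^2\hat{\mathbf{H}}_k^+(\hat{\mathbf{H}}_k^{+})^{\top}\right]^{-1}\right]$ as a black box, whereas you prove that invariance step from scratch by diagonalizing $\mathbf{A}^2$, and you additionally cover both the channel-conditional and channel-averaged readings of the expectation, which the paper glosses over.
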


\begin{proof}According to \eqref{eq:summation}, the $\bar{\mathbf{J}}$ is given by
\begin{align}
	\bar{\mathbf{J}}
	& =\mathsf{E}\left[\frac{1}{K}\sum_{k=1}^{K} [\sigma_{\text{d}}^2\mathbf{I}_M+\sigma_{\text{c}}^2\hat{\mathbf{H}}_k^+(\hat{\mathbf{H}}_k^{+})^{\top}]^{-1}\hat{\mathbf{H}}_k^+\hat{\mathbf{Y}}_k\right],\nonumber\\
	& = \mathsf{E}\left[[\sigma_{\text{d}}^2\mathbf{I}_M+\sigma_{\text{c}}^2\hat{\mathbf{H}}_k^+(\hat{\mathbf{H}}_k^{+})^{\top}]^{-1}\right]\mathbf{U},\nonumber.
\end{align}
Define $\mathbf{C}\overset{\triangle}{=}\mathsf{E}\left[[\sigma_{\text{d}}^2\mathbf{I}_M+\sigma_{\text{c}}^2\hat{\mathbf{H}}_k^+(\hat{\mathbf{H}}_k^{+})^{\top}]^{-1}\right]$ and clearly $\mathsf{rank}\left(\mathbf{C}\right)=M$. Then, based on the property in~\eqref{eq:rotation}, we have $\mathcal{S}_M\left(\mathbf{U}^{\top}\mathbf{C}^{\top}\mathbf{C}\mathbf{U}\right) = \mathcal{S}_M\left(\mathbf{U}^{\top}\mathbf{U}\right)$ which completes the proof.
\end{proof} 
The above result  also implies that as the number of devices, $K$, increases (so does the global data size), the optimal estimate $\tilde{\mathbf{U}}^{\star}$ in Theorem~\ref{Theorem:coherent} converges to the ground-truth since  $\mathbf{J} \rightarrow \bar{\mathbf{J}}$. 

\subsection{Error Performance Analysis}\label{sec:coherentunbiasedness}

The performance of the optimal  coherent PC estimation in the preceding sub-section can be analyzed using a result from  perturbation theory for  Grassmann manifolds. 
To this end, consider two perturbed subspaces  $\mathbf{V}_1 = \mathcal{S}_M\left(\mathbf{G}_1^{\top}\mathbf{G}_1\right)$ and $\mathbf{V}_2 = \mathcal{S}_M\left(\mathbf{G}_2^{\top}\mathbf{G}_2\right)$, where $\mathbf{G}_1 = \mathbf{F} + \epsilon\mathbf{E}_1$ and $\mathbf{G}_2 = \mathbf{F} + \epsilon\mathbf{E}_2$ with the orthonormal matrix $\mathbf{F} \in \mathcal{G}_{N,M}$ representing a ground-truth, $\epsilon\mathbf{E}_i, i=1,2$ denoting random additive perturbations, and $\epsilon > 0$ controlling the perturbation magnitude. The following result is from~\cite[Theorem~9.3.4]{YChikuse2012}. 
\begin{Lemma}\label{Lemma:subspaceDistanceBetweenTwoPerturbedSubspace}\emph{
The squared subspace distance of the  perturbed subspaces  $\mathbf{V}_{1} $ and $ \mathbf{V}_{2}$ satisfies 
\begin{equation}
	d^2(\mathbf{V}_{1},\mathbf{V}_{2}) = 2\epsilon^2\mathsf{Tr}( \mathbf{\Delta_{\mathbf{E}}}\mathbf{F^{\bot}}^{\top}\mathbf{F}^{\bot}\mathbf{\Delta_{\mathbf{E}}}^{\top}) + O\left(\epsilon^3\right),
\end{equation}
where $\mathbf{F^{\bot}} $ is the orthogonal complement of $ \mathbf{F}$ and $\mathbf{\Delta_{\mathbf{E}}} = \mathbf{E}_1 - \mathbf{E}_2$.}
\end{Lemma}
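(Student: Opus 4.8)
The plan is to pass from subspaces to their orthogonal projectors and Taylor-expand the projectors in the perturbation parameter $\epsilon$. Since $\mathbf{G}_i = \mathbf{F} + \epsilon\mathbf{E}_i$ has full row rank $M$ for sufficiently small $\epsilon$, the dominant $M$-dimensional eigenspace $\mathbf{V}_i = \mathcal{S}_M(\mathbf{G}_i^{\top}\mathbf{G}_i)$ coincides with the row space of $\mathbf{G}_i$; here $\mathbf{G}_i^{\top}\mathbf{G}_i$ carries $M$ eigenvalues of order one and $N-M$ zero eigenvalues, and the resulting order-one spectral gap makes the projector $\mathbf{P}_i = \mathbf{V}_i^{\top}\mathbf{V}_i = \mathbf{G}_i^{\top}(\mathbf{G}_i\mathbf{G}_i^{\top})^{-1}\mathbf{G}_i$ a smooth function of $\epsilon$ near the origin. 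By the distance definition in \eqref{eq:distance}, $d^2(\mathbf{V}_1,\mathbf{V}_2) = \|\mathbf{P}_1 - \mathbf{P}_2\|_F^2 = \mathsf{Tr}((\mathbf{P}_1-\mathbf{P}_2)^2)$, so it suffices to expand each $\mathbf{P}_i$ to first order.

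For the expansion I would use $\mathbf{F}\mathbf{F}^{\top} = \mathbf{I}_M$ to write $\mathbf{G}_i\mathbf{G}_i^{\top} = \mathbf{I}_M + \epsilon(\mathbf{F}\mathbf{E}_i^{\top} + \mathbf{E}_i\mathbf{F}^{\top}) + O(\epsilon^2)$, whence $(\mathbf{G}_i\mathbf{G}_i^{\top})^{-1} = \mathbf{I}_M - \epsilon(\mathbf{F}\mathbf{E}_i^{\top} + \mathbf{E}_i\mathbf{F}^{\top}) + O(\epsilon^2)$. Substituting into $\mathbf{P}_i$ and collecting first-order terms, the tangential contributions cancel and one obtains $\mathbf{P}_i = \mathbf{F}^{\top}\mathbf{F} + \epsilon(\mathbf{F}^{\top}\mathbf{E}_i\mathbf{P}^{\bot} + \mathbf{P}^{\bot}\mathbf{E}_i^{\top}\mathbf{F}) + O(\epsilon^2)$, where $\mathbf{P}^{\bot} = \mathbf{I}_N - \mathbf{F}^{\top}\mathbf{F} = \mathbf{F^{\bot}}^{\top}\mathbf{F}^{\bot}$. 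The structural point is that only the component of $\mathbf{E}_i$ lying in the orthogonal complement of $\mathsf{span}(\mathbf{F})$ survives at first order, which is precisely the geometric statement that perturbations tangent to the subspace do not move it. Subtracting the two expansions gives $\mathbf{P}_1 - \mathbf{P}_2 = \epsilon(\mathbf{A} + \mathbf{A}^{\top}) + O(\epsilon^2)$ with $\mathbf{A} = \mathbf{F}^{\top}\mathbf{\Delta_{\mathbf{E}}}\mathbf{P}^{\bot}$ and $\mathbf{\Delta_{\mathbf{E}}} = \mathbf{E}_1 - \mathbf{E}_2$.

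Finally I would evaluate the trace. Because the leading term is symmetric, $d^2 = \epsilon^2\mathsf{Tr}((\mathbf{A}+\mathbf{A}^{\top})^2) + O(\epsilon^3)$. The identity $\mathbf{P}^{\bot}\mathbf{F}^{\top} = \mathbf{0}$ forces $\mathbf{A}^2 = \mathbf{0}$ and $(\mathbf{A}^{\top})^2 = \mathbf{0}$, so only the cross terms $\mathsf{Tr}(\mathbf{A}\mathbf{A}^{\top}) + \mathsf{Tr}(\mathbf{A}^{\top}\mathbf{A}) = 2\mathsf{Tr}(\mathbf{A}^{\top}\mathbf{A})$ remain; using $\mathbf{F}\mathbf{F}^{\top} = \mathbf{I}_M$ together with the idempotency of $\mathbf{P}^{\bot}$ then collapses this to $2\mathsf{Tr}(\mathbf{\Delta_{\mathbf{E}}}\mathbf{F^{\bot}}^{\top}\mathbf{F}^{\bot}\mathbf{\Delta_{\mathbf{E}}}^{\top})$, which is exactly the claimed leading term.

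I expect the main obstacle to be not the algebra but the justification that the dominant eigenprojector admits a valid first-order expansion with remainder genuinely of order $\epsilon^2$ (so that $d^2$ has remainder $O(\epsilon^3)$). This rests on the uniform spectral gap noted above, which guarantees analyticity of $\mathbf{P}_i$ in $\epsilon$; once smoothness is in hand the remainder order is automatic, since $\mathbf{P}_1 - \mathbf{P}_2 = \epsilon\mathbf{D} + O(\epsilon^2)$ yields $\mathsf{Tr}((\mathbf{P}_1-\mathbf{P}_2)^2) = \epsilon^2\mathsf{Tr}(\mathbf{D}^2) + O(\epsilon^3)$ with no intermediate $\epsilon^2$ correction beyond the stated term. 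Alternatively, the result may be invoked directly from \cite[Theorem~9.3.4]{YChikuse2012}, with the derivation above serving as a self-contained confirmation.
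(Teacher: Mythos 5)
Your derivation is correct, and it is worth noting that the paper itself contains no proof of this lemma: the statement is imported directly from \cite[Theorem~9.3.4]{YChikuse2012}, so your argument is a genuinely independent, self-contained replacement for that citation rather than a variant of an internal proof. Your route is more elementary than the manifold-perturbation machinery behind Chikuse's theorem: since $\mathbf{G}_i\mathbf{G}_i^{\top} = \mathbf{I}_M + O(\epsilon)$ is invertible for small $\epsilon$, the top-$M$ eigenspace of $\mathbf{G}_i^{\top}\mathbf{G}_i$ is exactly the row space of $\mathbf{G}_i$, and the explicit rational formula $\mathbf{P}_i = \mathbf{G}_i^{\top}(\mathbf{G}_i\mathbf{G}_i^{\top})^{-1}\mathbf{G}_i$ makes smoothness in $\epsilon$ manifest without any resolvent or spectral-gap argument (your appeal to the gap is in fact more than you need). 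The algebra checks out: expanding with $\mathbf{F}\mathbf{F}^{\top}=\mathbf{I}_M$ gives the first-order term $\mathbf{F}^{\top}\mathbf{E}_i\mathbf{P}^{\bot}+\mathbf{P}^{\bot}\mathbf{E}_i^{\top}\mathbf{F}$ with $\mathbf{P}^{\bot}=\mathbf{F^{\bot}}^{\top}\mathbf{F}^{\bot}$, the tangential part of $\mathbf{E}_i$ indeed cancels, the nilpotency $\mathbf{A}^2=(\mathbf{A}^{\top})^2=\mathbf{0}$ follows from $\mathbf{P}^{\bot}\mathbf{F}^{\top}=\mathbf{0}$, and $2\mathsf{Tr}(\mathbf{A}^{\top}\mathbf{A})=2\mathsf{Tr}\bigl(\mathbf{\Delta_{\mathbf{E}}}\mathbf{F^{\bot}}^{\top}\mathbf{F}^{\bot}\mathbf{\Delta_{\mathbf{E}}}^{\top}\bigr)$ as claimed, with the remainder bookkeeping $\mathsf{Tr}\bigl((\epsilon\mathbf{D}+O(\epsilon^2))^2\bigr)=\epsilon^2\mathsf{Tr}(\mathbf{D}^2)+O(\epsilon^3)$ sound. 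What the citation buys the paper is brevity and a statement already formulated for random perturbations; what your derivation buys is transparency: it isolates the geometric fact that only the component of $\mathbf{E}_1-\mathbf{E}_2$ normal to $\mathsf{span}(\mathbf{F})$ moves the subspace at leading order, and it makes explicit that the constant hidden in $O(\epsilon^3)$ depends on $\Vert\mathbf{E}_i\Vert$ — a caveat that matters downstream, since in the paper's applications of this lemma (Appendices A and F) the $\mathbf{E}_i$ are unbounded Gaussian matrices and the remainder must then be controlled in expectation, which is exactly why those appendices choose $\epsilon$ proportional to the relevant noise scale.
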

 Using Lemma~\ref{Lemma:subspaceDistanceBetweenTwoPerturbedSubspace}, it is proved in Appendix~\ref{proof:EwCSIdistance} that the PC estimation error can be characterized as follows. 
\begin{Lemma}[]\label{Lemma:EwCSIdistance}\emph{
The square error of  the optimal PC estimate $\tilde{\mathbf{U}}^{\star}$ in Theorem~\ref{Theorem:coherent} is
\begin{equation}
	d^2(\tilde{\mathbf{U}}^{\star},\mathbf{U}) = 2\mathsf{Tr}( \mathbf{\Delta}_{\mathbf{E}}\mathbf{U^{\bot}}^{\top}\mathbf{U}^{\bot}{\mathbf{\Delta}_{\mathbf{E}}}^{\top}) + O\left(\Vert\mathbf{\Sigma}_{\mathbf{E}}^{-1}\Vert_F^{3}\right),
\end{equation}
where $\mathbf{\Delta}_{\mathbf{E}}\sim\mathcal{MN}\left[\mathbf{0},\mathbf{\Sigma}_{\mathbf{E}}^{-1} ,\mathbf{I}_N\right]$ with $\mathbf{\Sigma}_{\mathbf{E}} = \sum_{k=1}^{K} (\sigma_{\text{d}}^2\mathbf{I}_M+\sigma_{\text{c}}^2\hat{\mathbf{H}}_k^+(\hat{\mathbf{H}}_k^+)^{\top})^{-1}$ and $\mathbf{U^{\bot}} $ is the orthogonal complement of $ \mathbf{U}$.}
\end{Lemma}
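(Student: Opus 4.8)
The plan is to recast $\tilde{\mathbf{U}}^{\star}=\mathcal{S}_M(\mathbf{J}^{\top}\mathbf{J})$ as a small additive perturbation of the ground-truth $\mathbf{U}$ and then invoke Lemma~\ref{Lemma:subspaceDistanceBetweenTwoPerturbedSubspace}. First I would substitute the post-equalization observation $\hat{\mathbf{H}}_k^{+}\hat{\mathbf{Y}}_k=\mathbf{U}+\mathbf{Z}_k+\hat{\mathbf{H}}_k^{+}\hat{\mathbf{W}}_k$ into the matrix $\mathbf{J}$ of Theorem~\ref{Theorem:coherent} and split it into a signal part and a noise part,
\begin{equation}
	\mathbf{J}=\frac{1}{K}\mathbf{\Sigma}_{\mathbf{E}}\mathbf{U}+\frac{1}{K}\sum_{k=1}^{K}\mathbf{\Sigma}_k^{-1}\left(\mathbf{Z}_k+\hat{\mathbf{H}}_k^{+}\hat{\mathbf{W}}_k\right),
\end{equation}
where $\mathbf{\Sigma}_k=\sigma_{\text{d}}^2\mathbf{I}_M+\sigma_{\text{c}}^2\hat{\mathbf{H}}_k^{+}(\hat{\mathbf{H}}_k^{+})^{\top}$ and $\mathbf{\Sigma}_{\mathbf{E}}=\sum_{k=1}^{K}\mathbf{\Sigma}_k^{-1}$, so that the signal part collapses exactly to $\frac{1}{K}\mathbf{\Sigma}_{\mathbf{E}}\mathbf{U}$.

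The second step exploits that $\mathcal{S}_M(\mathbf{J}^{\top}\mathbf{J})$ depends on the fat matrix $\mathbf{J}$ only through its row space, so left-multiplying $\mathbf{J}$ by any invertible $M\times M$ matrix leaves $\tilde{\mathbf{U}}^{\star}$ unchanged --- this is the invariance in \eqref{eq:rotation} with $\mathbf{\Sigma}_{\mathbf{E}}$ playing the role of the channel. Multiplying by $K\mathbf{\Sigma}_{\mathbf{E}}^{-1}$ whitens the signal part back to $\mathbf{U}$ and gives
\begin{equation}
	\tilde{\mathbf{U}}^{\star}=\mathcal{S}_M\left((\mathbf{U}+\mathbf{\Delta}_{\mathbf{E}})^{\top}(\mathbf{U}+\mathbf{\Delta}_{\mathbf{E}})\right),\qquad \mathbf{\Delta}_{\mathbf{E}}=\mathbf{\Sigma}_{\mathbf{E}}^{-1}\sum_{k=1}^{K}\mathbf{\Sigma}_k^{-1}\left(\mathbf{Z}_k+\hat{\mathbf{H}}_k^{+}\hat{\mathbf{W}}_k\right).
\end{equation}
Since $\mathbf{U}=\mathcal{S}_M(\mathbf{U}^{\top}\mathbf{U})$, I would then apply Lemma~\ref{Lemma:subspaceDistanceBetweenTwoPerturbedSubspace} with $\mathbf{F}=\mathbf{U}$, $\epsilon\mathbf{E}_1=\mathbf{\Delta}_{\mathbf{E}}$ and $\epsilon\mathbf{E}_2=\mathbf{0}$, so that the perturbation difference equals $\mathbf{\Delta}_{\mathbf{E}}$; the lemma immediately delivers the leading term $2\mathsf{Tr}(\mathbf{\Delta}_{\mathbf{E}}\mathbf{U^{\bot}}^{\top}\mathbf{U}^{\bot}{\mathbf{\Delta}_{\mathbf{E}}}^{\top})$ together with a higher-order remainder matching the $O(\Vert\mathbf{\Sigma}_{\mathbf{E}}^{-1}\Vert_F^{3})$ term in the statement.

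It then remains to identify the law of $\mathbf{\Delta}_{\mathbf{E}}$. Conditioned on the CSI, the combined data-plus-receiver noise satisfies $\mathbf{Z}_k+\hat{\mathbf{H}}_k^{+}\hat{\mathbf{W}}_k\sim\mathcal{MN}(\mathbf{0},\mathbf{\Sigma}_k,\mathbf{I}_N)$, because $\mathbf{Z}_k\sim\mathcal{MN}(\mathbf{0},\sigma_{\text{d}}^2\mathbf{I}_M,\mathbf{I}_N)$ and, from \eqref{eq:newModel}, $\hat{\mathbf{H}}_k^{+}\hat{\mathbf{W}}_k\sim\mathcal{MN}(\mathbf{0},\sigma_{\text{c}}^2\hat{\mathbf{H}}_k^{+}(\hat{\mathbf{H}}_k^{+})^{\top},\mathbf{I}_N)$ are independent. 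Left-multiplication by $\mathbf{\Sigma}_k^{-1}$ turns each summand into $\mathcal{MN}(\mathbf{0},\mathbf{\Sigma}_k^{-1},\mathbf{I}_N)$; summing the independent terms over $k$ gives $\mathcal{MN}(\mathbf{0},\mathbf{\Sigma}_{\mathbf{E}},\mathbf{I}_N)$, and the final left-multiplication by $\mathbf{\Sigma}_{\mathbf{E}}^{-1}$ produces $\mathbf{\Delta}_{\mathbf{E}}\sim\mathcal{MN}(\mathbf{0},\mathbf{\Sigma}_{\mathbf{E}}^{-1},\mathbf{I}_N)$, as claimed.

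I expect the main obstacle to be reconciling the hypotheses of Lemma~\ref{Lemma:subspaceDistanceBetweenTwoPerturbedSubspace}, which expands around an \emph{orthonormal} reference $\mathbf{F}$, with the fact that the signal part of $\mathbf{J}$ is the non-orthonormal, channel-weighted matrix $\frac{1}{K}\mathbf{\Sigma}_{\mathbf{E}}\mathbf{U}$. The whitening multiplication by $K\mathbf{\Sigma}_{\mathbf{E}}^{-1}$ resolves this, but it must be justified that this operation preserves $\tilde{\mathbf{U}}^{\star}$ (requiring $\mathbf{\Sigma}_{\mathbf{E}}$ to be invertible, which holds as it is a sum of positive-definite matrices) and that the perturbation stays small enough for the Taylor expansion to remain valid. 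Secondary care is needed in propagating the Kronecker covariance through the two left-multiplications and in matching the exact order of the Lemma's $O(\epsilon^3)$ residual to the $\Vert\mathbf{\Sigma}_{\mathbf{E}}^{-1}\Vert_F$-based remainder.
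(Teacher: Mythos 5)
Your proposal is correct and follows essentially the same route as the paper's own proof in Appendix~A: the same split $\mathbf{J}=\frac{1}{K}\mathbf{\Sigma}_{\mathbf{E}}\mathbf{U}+\mathbf{E}$, the same use of the rotation-invariance property \eqref{eq:rotation} to whiten by $\mathbf{\Sigma}_{\mathbf{E}}^{-1}$ (the paper likewise sets $\mathbf{E}_1=\frac{1}{\epsilon}\mathbf{\Sigma}_{\mathbf{E}}^{-1}\mathbf{E}$, $\mathbf{E}_2=\mathbf{0}$ with $\epsilon=C_1\Vert\mathbf{\Sigma}_{\mathbf{E}}^{-1}\Vert_2$ in Lemma~\ref{Lemma:subspaceDistanceBetweenTwoPerturbedSubspace}), and the same propagation of the matrix-Gaussian law to $\mathbf{\Delta}_{\mathbf{E}}\sim\mathcal{MN}(\mathbf{0},\mathbf{\Sigma}_{\mathbf{E}}^{-1},\mathbf{I}_N)$. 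Your handling of the $1/K$ scaling and the distributional bookkeeping is if anything slightly more explicit than the paper's.
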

Based on the definition of $\mathbf{\Sigma}_{\mathbf{E}}$, one can infer that the residual term  $O\left(\Vert\mathbf{\Sigma}_{\mathbf{E}}^{-1}\Vert_F^{3}\right)$ in Lemma~\ref{Lemma:EwCSIdistance} diminishes as  the number of devices  (or the global data size) $K\rightarrow \infty$ or data noise and receiver noise $\sigma_{\text{d}},\sigma_{\text{c}}\rightarrow 0$. Then using Lemma~\ref{Lemma:EwCSIdistance} with the residual term omitted, the error of optimal PC estimation can be characterized as follows. 

\begin{Theorem}[Error of Optimal Coherent PC Estimation]\label{The:upperBoundMSSDEwCSI}\emph{
Given many devices ($K\rightarrow \infty$) and small channel and data noise ($\sigma_{\text{d}},\sigma_{\text{c}}\rightarrow 0$), the MSSD of the optimal coherent PC estimator in Theorem~\ref{Theorem:coherent} is asymptotically upper bounded as
\begin{equation}\label{eq:coherentupbound1}
	d_{\text{ms}}(\tilde{\mathbf{U}}^{\star},\mathbf{U}) \leq \frac{2M(N-M)}{K}\left(\sigma_{\text{d}}^2 + \frac{\sigma_{\text{c}}^2}{2N_{\text{r}}-M-1}\right), 
\end{equation}
for a MIMO system (i.e., $2N_r -M > 1$). 
}
\end{Theorem}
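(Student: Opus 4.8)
The plan is to evaluate the expectation of the first-order error expression in Lemma~\ref{Lemma:EwCSIdistance} in two nested stages: an inner average over the effective perturbation $\mathbf{\Delta}_{\mathbf{E}}$ with the channels held fixed, followed by an outer average that exploits the device sum buried inside $\mathbf{\Sigma}_{\mathbf{E}}$. Substituting Lemma~\ref{Lemma:EwCSIdistance} into the MSSD definition and discarding the residual $O(\Vert\mathbf{\Sigma}_{\mathbf{E}}^{-1}\Vert_F^{3})$ — legitimate in the stated regime $K\to\infty$, $\sigma_{\text{d}},\sigma_{\text{c}}\to0$ — leaves $d_{\text{ms}}(\tilde{\mathbf{U}}^{\star},\mathbf{U})\approx 2\,\mathsf{E}[\mathsf{Tr}(\mathbf{\Delta}_{\mathbf{E}}\mathbf{U}^{\bot\top}\mathbf{U}^{\bot}\mathbf{\Delta}_{\mathbf{E}}^{\top})]$. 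Conditioned on $\{\hat{\mathbf{H}}_k\}$, and hence on $\mathbf{\Sigma}_{\mathbf{E}}$, the factor $\mathbf{\Delta}_{\mathbf{E}}$ is matrix-Gaussian $\mathcal{MN}(\mathbf{0},\mathbf{\Sigma}_{\mathbf{E}}^{-1},\mathbf{I}_N)$, so I would apply the standard second-moment identity $\mathsf{E}[\mathbf{\Delta}_{\mathbf{E}}\mathbf{B}\mathbf{\Delta}_{\mathbf{E}}^{\top}]=\mathsf{Tr}(\mathbf{B})\,\mathbf{\Sigma}_{\mathbf{E}}^{-1}$ with $\mathbf{B}=\mathbf{U}^{\bot\top}\mathbf{U}^{\bot}$. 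Since $\mathbf{B}$ is the orthogonal projector onto the $(N-M)$-dimensional complement, $\mathsf{Tr}(\mathbf{B})=N-M$, and the inner expectation collapses to $d_{\text{ms}}(\tilde{\mathbf{U}}^{\star},\mathbf{U})\approx 2(N-M)\,\mathsf{E}[\mathsf{Tr}(\mathbf{\Sigma}_{\mathbf{E}}^{-1})]$.

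For the outer stage, I would write $\mathbf{\Sigma}_{\mathbf{E}}=\sum_{k=1}^K\mathbf{A}_k$ with i.i.d.\ summands $\mathbf{A}_k=(\sigma_{\text{d}}^2\mathbf{I}_M+\sigma_{\text{c}}^2\hat{\mathbf{H}}_k^+(\hat{\mathbf{H}}_k^+)^{\top})^{-1}$, and use the key simplification $\hat{\mathbf{H}}_k^+(\hat{\mathbf{H}}_k^+)^{\top}=(\hat{\mathbf{H}}_k^{\top}\hat{\mathbf{H}}_k)^{-1}$. Because the stacked channel $\hat{\mathbf{H}}_k$ has $2N_{\text{r}}$ rows with i.i.d.\ $\mathcal{N}(0,1)$ entries, the Gram matrix $\mathbf{S}_k=\hat{\mathbf{H}}_k^{\top}\hat{\mathbf{H}}_k$ is a real Wishart matrix $\mathcal{W}_M(2N_{\text{r}},\mathbf{I}_M)$. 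The law of large numbers over the $K$ devices then gives $\tfrac1K\mathbf{\Sigma}_{\mathbf{E}}\to\mathsf{E}[\mathbf{A}_1]$, so asymptotically $\mathsf{E}[\mathsf{Tr}(\mathbf{\Sigma}_{\mathbf{E}}^{-1})]\to \tfrac1K\mathsf{Tr}((\mathsf{E}[\mathbf{A}_1])^{-1})$; this concentration is exactly what produces the $1/K$ scaling and what the word ``asymptotically'' in the statement refers to.

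It then remains to bound $\mathsf{Tr}((\mathsf{E}[\mathbf{A}_1])^{-1})$. Orthogonal invariance of the Wishart law forces $\mathsf{E}[\mathbf{A}_1]=a\mathbf{I}_M$ with $Ma=\mathsf{E}[\mathsf{Tr}(\mathbf{A}_1)]$, so $\mathsf{Tr}((\mathsf{E}[\mathbf{A}_1])^{-1})=M/a$ and a lower bound on $a$ suffices. I would invoke the convexity of $\mathbf{X}\mapsto\mathsf{Tr}((\sigma_{\text{d}}^2\mathbf{I}_M+\sigma_{\text{c}}^2\mathbf{X})^{-1})$ on the positive-definite cone (operator convexity of the inverse composed with an affine map) and Jensen's inequality at $\mathbf{X}=\mathbf{S}_1^{-1}$, together with the inverse-Wishart mean $\mathsf{E}[\mathbf{S}_1^{-1}]=\frac{1}{2N_{\text{r}}-M-1}\mathbf{I}_M$, which holds precisely under the MIMO condition $2N_{\text{r}}-M>1$. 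This yields $Ma\geq M/(\sigma_{\text{d}}^2+\frac{\sigma_{\text{c}}^2}{2N_{\text{r}}-M-1})$, hence $\mathsf{Tr}((\mathsf{E}[\mathbf{A}_1])^{-1})\leq M(\sigma_{\text{d}}^2+\frac{\sigma_{\text{c}}^2}{2N_{\text{r}}-M-1})$, and chaining back through the two displays above delivers the claimed inequality.

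The step I expect to be the main obstacle is reconciling the two Jensen-type inequalities, which nominally point in opposite directions: applying Jensen directly to the convex map $\mathbf{X}\mapsto\mathsf{Tr}(\mathbf{X}^{-1})$ at $\mathbf{X}=\mathbf{\Sigma}_{\mathbf{E}}$ only yields a \emph{lower} bound on $\mathsf{E}[\mathsf{Tr}(\mathbf{\Sigma}_{\mathbf{E}}^{-1})]$, so an upper bound cannot come from the device-level averaging by itself. The resolution is that the concentration of $\mathbf{\Sigma}_{\mathbf{E}}$ as $K\to\infty$ closes that gap, leaving only the genuine inner Jensen over the Wishart law to generate the $(2N_{\text{r}}-M-1)$ factor; rigorously establishing this asymptotic tightness, and verifying that the discarded $O(\Vert\mathbf{\Sigma}_{\mathbf{E}}^{-1}\Vert_F^{3})$ term is uniformly negligible in the same limit, is where the analysis must be handled with care.
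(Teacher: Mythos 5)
Your proposal is correct and arrives at the stated bound, but the second half of your argument takes a genuinely different route from the paper's Appendix~\ref{proof:upperBoundMSSDEwCSI}. The first half coincides: the paper likewise drops the residual of Lemma~\ref{Lemma:EwCSIdistance} in the stated regime and uses the matrix-Gaussian second moment to collapse the inner expectation to $2(N-M)\,\mathsf{E}\left[\mathsf{Tr}\left(\mathbf{\Sigma}_{\mathbf{E}}^{-1}\right)\right]$. From there, however, the paper uses no concentration at all: writing $\mathbf{A}_k=\sigma_{\text{d}}^2\mathbf{I}_M+\sigma_{\text{c}}^2\hat{\mathbf{H}}_k^+(\hat{\mathbf{H}}_k^+)^{\top}$ (the inverse of your $\mathbf{A}_k$), it applies Jensen once to the \emph{jointly concave} harmonic-mean functional $(\mathbf{A}_1,\ldots,\mathbf{A}_K)\mapsto\mathsf{Tr}\left(\left(\sum_{k}\mathbf{A}_k^{-1}\right)^{-1}\right)$ (citing~\cite{Concave2004}), obtaining $\mathsf{E}\left[\mathsf{Tr}\left(\mathbf{\Sigma}_{\mathbf{E}}^{-1}\right)\right]\leq\mathsf{Tr}\left(\left(\sum_{k}\mathsf{E}[\mathbf{A}_k]^{-1}\right)^{-1}\right)$ at every \emph{finite} $K$, and then finishes with the same inverse-Wishart mean $\mathsf{E}[(\hat{\mathbf{H}}_k^{\top}\hat{\mathbf{H}}_k)^{-1}]=\frac{1}{2N_{\text{r}}-M-1}\mathbf{I}_M$ that you invoke. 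This directly resolves, rather than circumvents, the tension you flag at the end: Jensen applied to the convex map $\mathbf{X}\mapsto\mathsf{Tr}(\mathbf{X}^{-1})$ at $\mathbf{X}=\mathbf{\Sigma}_{\mathbf{E}}$ indeed points the wrong way, but in the variables $\{\mathbf{A}_k\}$ the functional is concave, so the inequality points the right way with no law-of-large-numbers argument; the paper's only asymptotic ingredient is discarding the residual. Your route buys elementarity — orthogonal invariance forcing $\mathsf{E}[\mathbf{A}_1]=a\mathbf{I}_M$ plus convexity of $\mathbf{X}\mapsto\mathsf{Tr}\left((\sigma_{\text{d}}^2\mathbf{I}_M+\sigma_{\text{c}}^2\mathbf{X})^{-1}\right)$ is standard fare — but it costs you the uniform-integrability justification of the step $K\,\mathsf{E}[\mathsf{Tr}(\mathbf{\Sigma}_{\mathbf{E}}^{-1})]\to\mathsf{Tr}\left((\mathsf{E}[\mathbf{A}_1])^{-1}\right)$, which you correctly identify as the delicate point. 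Note, finally, that you could eliminate the concentration step entirely while staying elementary: operator convexity of the inverse gives $\mathbf{\Sigma}_{\mathbf{E}}^{-1}\preceq K^{-2}\sum_k\mathbf{A}_k^{-1}$ (your $\mathbf{A}_k$), whence $\mathsf{E}\left[\mathsf{Tr}\left(\mathbf{\Sigma}_{\mathbf{E}}^{-1}\right)\right]\leq\frac{1}{K}\left(M\sigma_{\text{d}}^2+\sigma_{\text{c}}^2\,\mathsf{E}\left[\mathsf{Tr}\left((\hat{\mathbf{H}}_1^{\top}\hat{\mathbf{H}}_1)^{-1}\right)\right]\right)=\frac{M}{K}\left(\sigma_{\text{d}}^2+\frac{\sigma_{\text{c}}^2}{2N_{\text{r}}-M-1}\right)$ at finite $K$, with the Wishart step holding with equality and no Jensen needed in the channel variable at all.
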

\begin{proof}
See Appendix~\ref{proof:upperBoundMSSDEwCSI}.
\end{proof}

Note that as $M$ increases, the first term in the MSSD upper bound in~\eqref{eq:coherentupbound1}, $\frac{2M(N-M)}{K}$, grows  if $M\leq \frac{N}{2}$ but otherwise decreases. This is due to the following  well known property of a random point (subspace) uniformly distributed on a Grassmannian.  Its uncertainty (and hence  estimation error) grows with its dimensionality $M$, reaches the maximum at $M=\frac{N}{2}$, and reduces afterwards since it can be equivalently represented by a lower-dimensional complementary subspace~\cite{Grassmannian2002TIT}. Next, the MSSD upper bound can be rewritten in terms of the channel and data SNRs given in ~\eqref{eq:datanoise} and~\eqref{eq:channelnoise} as 
\begin{equation}\label{eq:coherentupbound2}
	d_{\text{ms}}(\tilde{\mathbf{U}}^{\star},\mathbf{U}) \leq \frac{2M^2(N-M)}{KN}\left[\gamma_{\text{d}}^{-1} + \frac{(1+M\gamma_\text{d}^{-1})\gamma_{\text{c}}^{-1}}{2N_{\text{r}}-M-1}\right] .
\end{equation}
It is observed from~\eqref{eq:coherentupbound2} that given fixed data and channel SNRs, the estimation error  increases with  $M$ and reaches a peak at  $M=2N/3$ instead of $M = N/2$ when noise variances are fixed. 

Based on Theorem~\ref{The:upperBoundMSSDEwCSI}, the effects of different system parameters on the error performance of coherent PC estimation are described as follows. 

\begin{itemize}
    \item The MSSD linearly decays as the device number $K$, or equivalently the number of  noisy observations of the global PCs, increases. The scaling law is identical to that for conventional DPCA over  reliable links (see, e.g.~\cite{JF2019estimation2019} and~\cite{VC2020Arxiv}). 
    The result suggests increasingly accurate estimation of the ground-truth global PCs as more devices participate in DPCA.
    
    \item The result in~\eqref{eq:coherentupbound2} shows a trade-off between the data and receiver noise. Furthermore, if receiver noise is negligible, one can observe from~\eqref{eq:coherentupbound2} that the estimation error diminishes about \emph{inversely} with increasing data SNR, namely  $O\left(\frac{2M^2}{K\gamma_{\text{d}}}\right)$ if $N\gg M$. On the other hand, if receiver noise is dominant, the error decays with increasing channel SNR as 
    $O\left(\frac{1}{(2N_{\text{r}}-N_t-1)\gamma_{\text{c}}}\right)$. The factor  $2N_{\text{r}}-N_{\text{t}}-1$ is contributed by   spatial diversity gain of using a large receive array with $N_r$ elements. 
\end{itemize}

\section{PC Estimation for Blind MIMO Receivers}\label{Sec:blind}
In the last section, an optimal coherent estimator was designed for PC estimation with receive CSI. The requirements on receive CSI and data statistics, say data SNR, are relaxed in this section so as to obviate the need of channel estimation and feedback, yielding the current case of blind PC estimation. 

\subsection{Integrated Design of Blind PC Estimator}

\subsubsection{Approximating the ML Problem} The lack of receive CSI complicates the ML problem in~\eqref{problem:ML}. It  is made tractable by finding a tractable lower bound on the likelihood function, which  is the objective, as follows. To this end, as the unknown channel is a Gaussian random matrix, the distribution of a received matrix symbol, say $\hat{\mathbf{Y}}_k$, conditioned on the ground truth $\mathbf{U}$ can be written as 
\begin{align}
	p(\hat{\mathbf{Y}}_k|\mathbf{U}) 
	&= \mathsf{E}_{\mathbf{Z}_k}[p(\hat{\mathbf{Y}}_k|\mathbf{U},\mathbf{Z}_k)],\nonumber\\
	&= \mathsf{E}_{\mathbf{Z}_k}\left[\frac{\exp\left(-\frac{1}{2}\mathsf{Tr}\left(\hat{\mathbf{Y}}_k(\mathbf{\Sigma}_k^{\prime})^{-1}\hat{\mathbf{Y}}_k^{\top}\right)\right)}{(2\pi)^{N_{\text{r}}N}\mathsf{det}(\mathbf{\Sigma}_k^{\prime})^{N_{\text{r}}}}\right] \label{Eq:CondDist},
\end{align}
where we recall $\mathbf{Z}_k$ to be the data noise and  define $\mathbf{\Sigma}_k^{\prime} = \sigma_{\text{c}}^2\mathbf{I}_N+(\mathbf{U} + \mathbf{Z}_k)^{\top}(\mathbf{U} + \mathbf{Z}_k)$. 

Next, as received symbols  are mutually independent, the likelihood function is given as 
\begin{align}
	\mathcal{L}(\mathbf{U;\mathbf{Y}}) 
	&= \ln\left(\prod_{k=1}^K p(\hat{\mathbf{Y}}_k|\mathbf{U})\right),\nonumber\\ 
	&= \sum_{k=1}^K\ln\left(\mathsf{E}_{\mathbf{Z}_k}\left[ p(\hat{\mathbf{Y}}_k|\mathbf{U}, \mathbf{Z}_k)\right]\right).\label{Eq:Objectve:blind}
\end{align}
It can be lower bounded using the Jensen's inequality as $\mathcal{L}(\mathbf{U;\mathbf{Y}})\geq \mathcal{L}_{\text{lb}}(\mathbf{U;\mathbf{Y}})$ with 
\begin{align}
	&\mathcal{L}_{\text{lb}}(\mathbf{U;\mathbf{Y}}) \nonumber\\
	&= \sum_{k=1}^K\mathsf{E}_{\mathbf{Z}_k}\left[\ln\left( p(\hat{\mathbf{Y}}_k|\mathbf{U}, \mathbf{Z}_k)\right)\right], \nonumber\\
	& = \sum_{k=1}^K\mathsf{E}_{\mathbf{Z}_k}\left[-\frac{1}{2}\mathsf{Tr}\left(\hat{\mathbf{Y}}_k(\mathbf{\Sigma}_k^{\prime})^{-1}\hat{\mathbf{Y}}_k^{\top}\right) - N_{\text{r}}\ln\left(\mathsf{det}(\mathbf{\Sigma}_k^{\prime})\right) \right] \nonumber\\
	&\,\quad- N_{\text{r}}NK\ln\left(2\pi\right). \label{Eq:Objectve:LB} 
\end{align}
For tractability, we approximate the objective of the ML problem in~\eqref{Eq:Objectve:blind} by its lower bound in~\eqref{Eq:Objectve:LB}. The new objective can be further simplified using the following result. 
\begin{Lemma}[] \label{Lemma:likelihoodlowerbound}
	\emph{
	Only the first term of  $\mathcal{L}_{\text{lb}}(\mathbf{U;\mathbf{Y}})$ in~\eqref{Eq:Objectve:LB}, i.e., $\sum_{k=1}^K\mathsf{E}_{\mathbf{Z}_k}\left[-\frac{1}{2}\mathsf{Tr}\left(\hat{\mathbf{Y}}_k(\mathbf{\Sigma}_k^{\prime})^{-1}\hat{\mathbf{Y}}_k^{\top}\right) \right]$, depends on the ground truth $\mathbf{U}$.}
\end{Lemma}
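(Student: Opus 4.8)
The plan is to show that among the three terms appearing in $\mathcal{L}_{\text{lb}}(\mathbf{U};\mathbf{Y})$ of~\eqref{Eq:Objectve:LB}, only the first one, $\sum_{k=1}^K\mathsf{E}_{\mathbf{Z}_k}[-\frac{1}{2}\mathsf{Tr}(\hat{\mathbf{Y}}_k(\mathbf{\Sigma}_k^{\prime})^{-1}\hat{\mathbf{Y}}_k^{\top})]$, varies with the optimization variable $\mathbf{U}$. The last term $-N_{\text{r}}NK\ln(2\pi)$ is a pure constant, so the entire burden is to prove that the middle term, $-N_{\text{r}}\sum_{k=1}^K\mathsf{E}_{\mathbf{Z}_k}[\ln\mathsf{det}(\mathbf{\Sigma}_k^{\prime})]$, does not depend on $\mathbf{U}$. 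Since the summands are identically distributed across $k$, it suffices to show for a single device that $\mathsf{E}_{\mathbf{Z}_k}[\ln\mathsf{det}(\sigma_{\text{c}}^2\mathbf{I}_N + (\mathbf{U}+\mathbf{Z}_k)^{\top}(\mathbf{U}+\mathbf{Z}_k))]$ takes the same value for every feasible (i.e.\ orthonormal) $\mathbf{U}$.

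The key observation is a rotational-invariance argument with two ingredients. First, any two fat orthonormal matrices $\mathbf{U}_1,\mathbf{U}_2\in\mathcal{G}_{N,M}$ are related by right multiplication with an $N\times N$ orthogonal matrix $\mathbf{Q}$, i.e.\ $\mathbf{U}_2=\mathbf{U}_1\mathbf{Q}$; one obtains such a $\mathbf{Q}$ by extending each $\mathbf{U}_i$ to a full $N\times N$ orthogonal matrix $\mathbf{P}_i$ (stacking $\mathbf{U}_i$ on top of its complement $\mathbf{U}_i^{\bot}$) and setting $\mathbf{Q}=\mathbf{P}_1^{\top}\mathbf{P}_2$, since then $\mathbf{U}_1\mathbf{P}_1^{\top}=[\mathbf{I}_M,\mathbf{0}]$ and $[\mathbf{I}_M,\mathbf{0}]\mathbf{P}_2=\mathbf{U}_2$. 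Second, because $\mathbf{Z}_k$ has i.i.d.\ $\mathcal{N}(0,\sigma_{\text{d}}^2)$ entries, each of its rows is an isotropic Gaussian vector, so $\mathbf{Z}_k\mathbf{Q}^{\top}$ has exactly the same distribution as $\mathbf{Z}_k$ for any orthogonal $\mathbf{Q}$.

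Combining these, I would substitute $\mathbf{Z}_k'=\mathbf{Z}_k\mathbf{Q}^{\top}$ in the term for $\mathbf{U}_2=\mathbf{U}_1\mathbf{Q}$, obtaining $(\mathbf{U}_2+\mathbf{Z}_k)^{\top}(\mathbf{U}_2+\mathbf{Z}_k)=\mathbf{Q}^{\top}(\mathbf{U}_1+\mathbf{Z}_k')^{\top}(\mathbf{U}_1+\mathbf{Z}_k')\mathbf{Q}$ and hence $\sigma_{\text{c}}^2\mathbf{I}_N+(\mathbf{U}_2+\mathbf{Z}_k)^{\top}(\mathbf{U}_2+\mathbf{Z}_k)=\mathbf{Q}^{\top}[\sigma_{\text{c}}^2\mathbf{I}_N+(\mathbf{U}_1+\mathbf{Z}_k')^{\top}(\mathbf{U}_1+\mathbf{Z}_k')]\mathbf{Q}$. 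Since $\mathsf{det}(\mathbf{Q}^{\top}\mathbf{A}\mathbf{Q})=\mathsf{det}(\mathbf{A})$ for orthogonal $\mathbf{Q}$, the log-determinant at $\mathbf{U}_2$ with noise $\mathbf{Z}_k$ equals that at $\mathbf{U}_1$ with the relabelled noise $\mathbf{Z}_k'$; taking expectations and using $\mathbf{Z}_k'\stackrel{d}{=}\mathbf{Z}_k$ shows $\mathsf{E}_{\mathbf{Z}_k}[\ln\mathsf{det}(\mathbf{\Sigma}_k^{\prime})]$ is identical for $\mathbf{U}_1$ and $\mathbf{U}_2$, hence constant over the feasible set. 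This, together with the constant final term, leaves the first term as the only $\mathbf{U}$-dependent piece.

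I expect the main obstacle to be the bookkeeping that cleanly separates the role of $\mathbf{U}$ as the optimization variable from the fixed data $\hat{\mathbf{Y}}_k$: the invariance must act only through the $\mathbf{\Sigma}_k^{\prime}$ factor, which is precisely why the first term—carrying $\hat{\mathbf{Y}}_k$—is \emph{not} invariant and survives as the sole $\mathbf{U}$-dependent quantity. A convenient simplification and sanity check I might add is Sylvester's determinant identity, $\mathsf{det}(\sigma_{\text{c}}^2\mathbf{I}_N+\mathbf{B}^{\top}\mathbf{B})=(\sigma_{\text{c}}^2)^{N-M}\mathsf{det}(\sigma_{\text{c}}^2\mathbf{I}_M+\mathbf{B}\mathbf{B}^{\top})$ with $\mathbf{B}=\mathbf{U}+\mathbf{Z}_k$, which reduces the determinant to an $M\times M$ object and makes the invariance of the law of $\mathbf{B}\mathbf{B}^{\top}$ under the choice of orthonormal $\mathbf{U}$ (through $\mathbf{U}\mathbf{U}^{\top}=\mathbf{I}_M$) even more transparent.
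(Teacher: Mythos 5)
Your proof is correct, but it proceeds by a genuinely different mechanism than the paper's. The paper first invokes Sylvester's determinant identity to collapse $\mathsf{det}\left(\sigma_{\text{c}}^2\mathbf{I}_N+(\mathbf{U}+\mathbf{Z}_k)^{\top}(\mathbf{U}+\mathbf{Z}_k)\right)$ into the $M\times M$ determinant $\mathsf{det}\left((\sigma_{\text{c}}^2+1)\mathbf{I}_M+\mathbf{Z}_k\mathbf{U}^{\top}+\mathbf{U}\mathbf{Z}_k^{\top}+\mathbf{Z}_k\mathbf{Z}_k^{\top}\right)$ via $\mathbf{U}\mathbf{U}^{\top}=\mathbf{I}_M$, and then identifies the marginal laws of the cross terms, noting that $\mathbf{Z}_k\mathbf{U}^{\top}$ and $\mathbf{U}\mathbf{Z}_k^{\top}$ are $\mathcal{MN}(\mathbf{0},\sigma_{\text{d}}^2\mathbf{I}_M,\mathbf{I}_M)$ for every orthonormal $\mathbf{U}$. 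You instead use a single global change of variables: writing $\mathbf{U}_2=\mathbf{U}_1\mathbf{Q}$ with $\mathbf{Q}=\mathbf{P}_1^{\top}\mathbf{P}_2$ built from orthonormal completions (a sound construction, since $\mathbf{U}_1\mathbf{P}_1^{\top}=[\mathbf{I}_M\;\mathbf{0}]$ and $[\mathbf{I}_M\;\mathbf{0}]\mathbf{P}_2=\mathbf{U}_2$) and relabelling $\mathbf{Z}_k'=\mathbf{Z}_k\mathbf{Q}^{\top}\overset{\text{d}}{=}\mathbf{Z}_k$, so that the entire matrix $\mathbf{\Sigma}_k^{\prime}$ is conjugated by $\mathbf{Q}$ and its log-determinant is untouched. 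Your route buys rigor that the paper's write-up lacks: the expectation $\mathsf{E}_{\mathbf{Z}_k}\left[\ln\mathsf{det}(\mathbf{\Sigma}_k^{\prime})\right]$ depends on the \emph{joint} law of $\left(\mathbf{Z}_k\mathbf{U}^{\top},\,\mathbf{U}\mathbf{Z}_k^{\top},\,\mathbf{Z}_k\mathbf{Z}_k^{\top}\right)$, while the paper only exhibits two marginals and never addresses the coupling with $\mathbf{Z}_k\mathbf{Z}_k^{\top}$; your transformation transports the whole expression in one stroke and closes that gap. What the paper's route buys is the explicit $M\times M$ reduction, which your closing Sylvester remark recovers anyway as an optional simplification. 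Your treatment of the remaining terms matches the paper: the third term is a pure constant, and the first term survives as the sole $\mathbf{U}$-dependent quantity precisely because the invariance acts only through $\mathbf{\Sigma}_k^{\prime}$ while $\hat{\mathbf{Y}}_k$ stays fixed.
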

\begin{proof}
See Appendix~\ref{proof:likelihoodlowerbound}.
\end{proof}
Using the above result and changing the sign of its objective, the ML problem in~\eqref{problem:ML} can be approximated as the following minimization  problem 
\begin{equation}
    \begin{aligned}\label{eq:likelihood}
    	\mathop{\min}_{\mathbf{U}}&\ \ \frac{1}{K}\sum_{k=1}^K\mathsf{E}_{\mathbf{Z}_k}\left[\mathsf{Tr}\left(\hat{\mathbf{Y}}_k(\mathbf{\Sigma}_k^{\prime})^{-1}\hat{\mathbf{Y}}_k^{\top}\right)\right], \\
    	\mathrm{s.t.}&\ \ \mathbf{U}\mathbf{U}^{\top}=\mathbf{I}_M.
    \end{aligned}
\end{equation}

The above ML problem is not yet tractable and requires further approximation of its objective. For ease of exposition, represent  its summation term as $\Psi_k=\mathsf{E}_{\mathbf{Z}_k}\left[\mathsf{Tr}(\hat{\mathbf{Y}}_k(\mathbf{\Sigma}_k^{\prime})^{-1}\hat{\mathbf{Y}}_k^{\top})\right]$. Some useful notations are introduced as follows. Define a Gaussian matrix $\hat{\mathbf{Z}}_k =[\mathbf{I}_M\;\mathbf{0}_{M,N-M}] + \mathbf{Z}_k\mathbf{Q}^{\top}$ that  is a function of data noise (see Appendix~\ref{proof:ul_bounds}) and independent over $k$. Let $\hat{\mathbf{P}}_k = \mathcal{S}_M(\hat{\mathbf{Z}}_k^{\top}\hat{\mathbf{Z}}_k)^{\top}$, which  denotes  an orthonormal matrix representing the principal $M$-dimensional column subspace of  $\hat{\mathbf{Z}}_k$, and $\hat{\mathbf{S}}_k$ denote its singular value matrix. 
\begin{Lemma}\label{Lemma:ul_bounds}
\emph{The term $\Psi_k$ defined earlier can be rewritten as
\begin{equation}
	\Psi_k = C - \sigma_{\text{c}}^{-2}\mathsf{E}_{\hat{\mathbf{P}}_k,\hat{\mathbf{S}}_k}\left[\mathsf{Tr}\left(\hat{\mathbf{Y}}_k\mathbf{Q}^{\top}\hat{\mathbf{P}}_k\mathbf{\Lambda}_{\hat{\mathbf{S}}_k}\hat{\mathbf{P}}_k^{\top} \mathbf{Q}\hat{\mathbf{Y}}_k^{\top}\right)\right],
\end{equation}
where  the constant $C=\sigma_{\text{c}}^{-2}\mathsf{Tr}\left(\hat{\mathbf{Y}}_k\hat{\mathbf{Y}}_k^{\top}\right)$, the diagonal matrix $\mathbf{\Lambda_{\hat{\mathbf{S}}_k}} = \hat{\mathbf{S}}_k^2\left(\sigma_{\text{c}}^{-2}\mathbf{I}_M + \hat{\mathbf{S}}_k^2\right)^{-1}$, and $\mathbf{Q} = \left[\mathbf{U}^{\top}\;\left(\mathbf{U}^{\bot}\right)^{\top}\right]^{\top}$.}
\end{Lemma}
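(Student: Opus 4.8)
The plan is to reduce the $N$-dimensional quadratic form $\mathsf{Tr}(\hat{\mathbf{Y}}_k(\mathbf{\Sigma}_k^{\prime})^{-1}\hat{\mathbf{Y}}_k^{\top})$ to a constant minus a rank-$M$ correction, by exploiting the orthogonal change of basis $\mathbf{Q}$ and the fact that the data-dependent part of $\mathbf{\Sigma}_k^{\prime}$ has rank only $M$. First I would rotate $\mathbf{\Sigma}_k^{\prime}$ into the frame defined by $\mathbf{Q}=[\mathbf{U}^{\top}\;(\mathbf{U}^{\bot})^{\top}]^{\top}$. Since $\mathbf{Q}$ is an $N\times N$ orthogonal matrix and $\mathbf{U}\mathbf{Q}^{\top}=[\mathbf{I}_M\;\mathbf{0}_{M,N-M}]$, the definition of $\hat{\mathbf{Z}}_k$ gives $(\mathbf{U}+\mathbf{Z}_k)\mathbf{Q}^{\top}=\hat{\mathbf{Z}}_k$, so conjugating by $\mathbf{Q}$ yields $\mathbf{Q}\mathbf{\Sigma}_k^{\prime}\mathbf{Q}^{\top}=\sigma_{\text{c}}^2\mathbf{I}_N+\hat{\mathbf{Z}}_k^{\top}\hat{\mathbf{Z}}_k$. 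By orthogonality, $(\mathbf{\Sigma}_k^{\prime})^{-1}=\mathbf{Q}^{\top}(\sigma_{\text{c}}^2\mathbf{I}_N+\hat{\mathbf{Z}}_k^{\top}\hat{\mathbf{Z}}_k)^{-1}\mathbf{Q}$, and substituting this into the trace places the factors $\mathbf{Q}^{\top}(\cdot)\mathbf{Q}$ between $\hat{\mathbf{Y}}_k$ and $\hat{\mathbf{Y}}_k^{\top}$, already matching the structure of the target expression.

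Second I would invert the rotated covariance using its low-rank structure. Writing the eigendecomposition $\hat{\mathbf{Z}}_k^{\top}\hat{\mathbf{Z}}_k=\hat{\mathbf{P}}_k\hat{\mathbf{S}}_k^2\hat{\mathbf{P}}_k^{\top}$ (only the nonzero part, since $\hat{\mathbf{Z}}_k$ is $M\times N$) and applying the matrix inversion lemma to $\sigma_{\text{c}}^2\mathbf{I}_N+\hat{\mathbf{P}}_k\hat{\mathbf{S}}_k^2\hat{\mathbf{P}}_k^{\top}$ produces a decomposition of the form $(\sigma_{\text{c}}^2\mathbf{I}_N+\hat{\mathbf{Z}}_k^{\top}\hat{\mathbf{Z}}_k)^{-1}=\sigma_{\text{c}}^{-2}\mathbf{I}_N-\sigma_{\text{c}}^{-2}\hat{\mathbf{P}}_k\mathbf{\Lambda}_{\hat{\mathbf{S}}_k}\hat{\mathbf{P}}_k^{\top}$, where the diagonal matrix $\mathbf{\Lambda}_{\hat{\mathbf{S}}_k}$ collects the per-mode shrinkage factors that fall out of the reduction once the orthonormality $\hat{\mathbf{P}}_k^{\top}\hat{\mathbf{P}}_k=\mathbf{I}_M$ is used. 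Inserting this decomposition splits the quadratic form into an isotropic part and a rank-$M$ correction.

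Third I would collect terms and take the expectation over the data noise. The isotropic part $\sigma_{\text{c}}^{-2}\mathbf{I}_N$ contributes $\sigma_{\text{c}}^{-2}\mathsf{Tr}(\hat{\mathbf{Y}}_k\mathbf{Q}^{\top}\mathbf{Q}\hat{\mathbf{Y}}_k^{\top})=\sigma_{\text{c}}^{-2}\mathsf{Tr}(\hat{\mathbf{Y}}_k\hat{\mathbf{Y}}_k^{\top})=C$, which does not depend on $\mathbf{Z}_k$ (the received symbol $\hat{\mathbf{Y}}_k$ is the fixed observation, while $\mathbf{Z}_k$ is the latent variable being integrated out) and hence leaves the expectation unchanged. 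The rank-$M$ correction contributes $-\sigma_{\text{c}}^{-2}\mathsf{Tr}(\hat{\mathbf{Y}}_k\mathbf{Q}^{\top}\hat{\mathbf{P}}_k\mathbf{\Lambda}_{\hat{\mathbf{S}}_k}\hat{\mathbf{P}}_k^{\top}\mathbf{Q}\hat{\mathbf{Y}}_k^{\top})$. Because $\hat{\mathbf{P}}_k$ and $\hat{\mathbf{S}}_k$ are deterministic functions of $\hat{\mathbf{Z}}_k$, and hence of $\mathbf{Z}_k$, averaging over $\mathbf{Z}_k$ is equivalent to averaging over the induced law of $(\hat{\mathbf{P}}_k,\hat{\mathbf{S}}_k)$; replacing $\mathsf{E}_{\mathbf{Z}_k}$ by $\mathsf{E}_{\hat{\mathbf{P}}_k,\hat{\mathbf{S}}_k}$ then yields exactly the claimed identity for $\Psi_k$.

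The hard part will be the second step: one must correctly handle the fact that $\hat{\mathbf{Z}}_k^{\top}\hat{\mathbf{Z}}_k$ is an $N\times N$ matrix of rank only $M$, so the inversion must be carried out on the $M\times M$ inner factor built from $\hat{\mathbf{S}}_k$ and $\hat{\mathbf{P}}_k$, and one must verify that the residual isotropic term is \emph{precisely} $\sigma_{\text{c}}^{-2}\mathbf{I}_N$ so that it recombines with $\hat{\mathbf{Y}}_k\mathbf{Q}^{\top}\mathbf{Q}\hat{\mathbf{Y}}_k^{\top}$ into the clean constant $C$. Care is likewise needed to confirm that the diagonal shrinkage emerging from the inversion matches the stated $\mathbf{\Lambda}_{\hat{\mathbf{S}}_k}$ and that no residual $\mathbf{Z}_k$-dependence leaks into $C$.
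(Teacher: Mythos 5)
Your proposal is correct and follows essentially the same route as the paper's Appendix~D: define $\mathbf{Q}=[\mathbf{U}^{\top}\;(\mathbf{U}^{\bot})^{\top}]^{\top}$, identify $(\mathbf{U}+\mathbf{Z}_k)\mathbf{Q}^{\top}=\hat{\mathbf{Z}}_k$, apply the Woodbury identity to reduce the $N\times N$ inverse to an $M\times M$ one, substitute the SVD of $\hat{\mathbf{Z}}_k$ to extract the diagonal shrinkage $\mathbf{\Lambda}_{\hat{\mathbf{S}}_k}$, and pass from $\mathsf{E}_{\mathbf{Z}_k}$ to $\mathsf{E}_{\hat{\mathbf{P}}_k,\hat{\mathbf{S}}_k}$. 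The only (immaterial) difference is ordering — you conjugate by $\mathbf{Q}$ before inverting the rank-$M$ perturbation, whereas the paper applies Woodbury first and then rotates — and your flagged concern about the shrinkage factor is well placed: carrying out the inversion gives per-mode factors $s_{m,k}^2/(\sigma_{\text{c}}^{2}+s_{m,k}^2)$, consistent with the paper's intermediate display, so the exponent inside the lemma's stated $\mathbf{\Lambda}_{\hat{\mathbf{S}}_k}$ should be read as $\sigma_{\text{c}}^{2}$ rather than $\sigma_{\text{c}}^{-2}$.
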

\begin{proof}
See Appendix~\ref{proof:ul_bounds}.
\end{proof}
Note that  $C$ is independent of the ground truth $\mathbf{U}$.
Therefore, it follows from Lemma~\ref{Lemma:ul_bounds} that  the maximization of $\Psi_k$ can be approximated by minimizing $\mathsf{E}_{\hat{\mathbf{P}}_k,\hat{\mathbf{S}}_k}\left[\mathsf{Tr}\left(\hat{\mathbf{Y}}_k\mathbf{Q}^{\top}\hat{\mathbf{P}}_k\mathbf{\Lambda}_{\hat{\mathbf{S}}_k}\hat{\mathbf{P}}_k^{\top} \mathbf{Q}\hat{\mathbf{Y}}_k^{\top}\right)\right]$. In other words, the current ML problem is equivalent to: 
\begin{equation}
    \begin{aligned} \label{eq:generalized PCA}
    	\mathop{\max}_{\mathbf{U}}&\ \ \frac{1}{K}\sum_{k=1}^K\mathsf{E}_{\hat{\mathbf{P}}_k,\hat{\mathbf{S}}_k}\left[\mathsf{Tr}\left(\hat{\mathbf{Y}}_k\mathbf{Q}^{\top}\hat{\mathbf{P}}_k\mathbf{\Lambda}_{\hat{\mathbf{S}}_k}\hat{\mathbf{P}}_k^{\top} \mathbf{Q}\hat{\mathbf{Y}}_k^{\top}\right)\right], \\ 
    	\mathrm{s.t.}&\ \ \mathbf{U}\mathbf{U}^{\top}=\mathbf{I}_M.
    \end{aligned}
\end{equation}
\subsubsection{Optimization  on Grassmannian} The above problem is solved using the theory of optimization on Grassmannian, involving the optimization of a subspace variable,  as follows. To this end, decompose the space of a received matrix  symbol, say $\hat{\mathbf{Y}}_k$, by eigen-decomposition as 
\begin{align}\label{eq:receivedsubspace}
	\hat{\mathbf{Y}}_k^{\top}\hat{\mathbf{Y}}_k & 
    = \mathbf{U}_{\hat{\mathbf{Y}}_k}^{\top}\mathbf{\Lambda}_{\textrm{dat}, k}\mathbf{U}_{\hat{\mathbf{Y}}_k} + \mathbf{V}_{\hat{\mathbf{Y}}_k}^{\top}\mathbf{\Lambda}_{\textrm{noi}, k}\mathbf{V}_{\hat{\mathbf{Y}}_k} \nonumber.
\end{align}
The eigenspace represented by  $\mathbf{U}_{\hat{\mathbf{Y}}_k}$ is the data space while that by $\mathbf{V}_{\hat{\mathbf{Y}}_k}$ is the channel-noise space. Here, each summation term $\mathsf{Tr}\left(\hat{\mathbf{Y}}_k\mathbf{Q}^{\top}\hat{\mathbf{P}}_k\mathbf{\Lambda}_{\hat{\mathbf{S}}_k}\hat{\mathbf{P}}_k^{\top} \mathbf{Q}\hat{\mathbf{Y}}_k^{\top}\right)$ directly depends on how close the PCs, say $\hat{\mathbf{P}}_k^{\top}\mathbf{Q}$, is to that of $\hat{\mathbf{Y}}_k^{\top}\hat{\mathbf{Y}}_k$, i.e. $\mathbf{U}_{\hat{\mathbf{Y}}_k}$. Specifically, we have
\begin{equation}
    \mathsf{Tr}\left(\hat{\mathbf{Y}}_k\mathbf{Q}^{\top}\hat{\mathbf{P}}_k\mathbf{\Lambda}_{\hat{\mathbf{S}}_k}\hat{\mathbf{P}}_k^{\top} \mathbf{Q}\hat{\mathbf{Y}}_k^{\top}\right) \leq \mathsf{Tr}\left(\mathbf{\Lambda}_{\textrm{noi}, k}\mathbf{\Lambda}_{\hat{\mathbf{S}}_k}\right)\nonumber,  
\end{equation}
where the upper bound is achieved if $d(\mathbf{Q}^{\top}\hat{\mathbf{P}}_k,\mathbf{U}_{\hat{\mathbf{Y}}_k}) =2M-2\mathsf{Tr}\left(\mathbf{U}_{\hat{\mathbf{Y}}_k}\mathbf{Q}^{\top}\hat{\mathbf{P}}_k\hat{\mathbf{P}}_k^{\top} \mathbf{Q}\mathbf{U}_{\hat{\mathbf{Y}}_k}^{\top}\right) = 0$. Therefore, we replace each trace term by the simplified term $\mathsf{Tr}\left(\mathbf{U}_{\hat{\mathbf{Y}}_k}\mathbf{Q}^{\top}\hat{\mathbf{P}}_k\hat{\mathbf{P}}_k^{\top} \mathbf{Q}\mathbf{U}_{\hat{\mathbf{Y}}_k}^{\top}\right)$ that focuses on projections onto a Grassmannian. It follows that the ML problem in~\eqref{eq:generalized PCA} can be approximated as 
\begin{equation}\label{pro:centroid}
    \begin{aligned}
	\mathop{\max}_{\mathbf{U}}&\ \ \frac{1}{K}\sum_{k=1}^K\mathsf{E}_{\hat{\mathbf{P}}_k}\left[\mathsf{Tr}\left(\mathbf{U}_{\hat{\mathbf{Y}}_k}\mathbf{Q}^{\top}\hat{\mathbf{P}}_k\hat{\mathbf{P}}_k^{\top} \mathbf{Q}\mathbf{U}_{\hat{\mathbf{Y}}_k}^{\top}\right)\right],\\
	\mathrm{s.t.}&\ \ \mathbf{U}\mathbf{U}^{\top}=\mathbf{I}_M.
	\end{aligned}
\end{equation}
Furthermore, the expectation $\mathsf{E}_{\hat{\mathbf{P}}_k}\left[\hat{\mathbf{P}}_k\hat{\mathbf{P}}_k^{\top}\right]$ shows the following property.
\begin{Lemma}\label{Lemma:mean:projection}
\emph{The orthogonal projection matrix $\hat{\mathbf{P}}_k\hat{\mathbf{P}}_k^{\top}$'s expectation is diagonal, say $\mathsf{E}_{\hat{\mathbf{P}}_k}\left[\hat{\mathbf{P}}_k\hat{\mathbf{P}}_k^{\top}\right]=\mathbf{\Lambda}_{\mu} = \mathsf{diag}\left(\mu_1,\mu_2,...,\mu_N\right)$ with $\mu_i \geq 0$, $i\in \{1,2,...,N\}$.} 
\end{Lemma}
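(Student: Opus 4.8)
The plan is to prove diagonality by a distributional-symmetry (invariance) argument, exploiting that $\hat{\mathbf{P}}_k\hat{\mathbf{P}}_k^{\top}$ is nothing but the orthogonal projector onto the principal $M$-dimensional eigenspace of the Gram matrix $\mathbf{G}_k\overset{\triangle}{=}\hat{\mathbf{Z}}_k^{\top}\hat{\mathbf{Z}}_k$. First I would strip off the rotation $\mathbf{Q}$: since $\mathbf{Z}_k$ has i.i.d.\ isotropic Gaussian entries and $\mathbf{Q}$ is orthogonal, $\mathbf{Z}_k\mathbf{Q}^{\top}$ is equal in distribution to $\mathbf{Z}_k$, so $\hat{\mathbf{Z}}_k\overset{d}{=}\mathbf{A}+\mathbf{Z}_k$ with the deterministic mean $\mathbf{A}=[\mathbf{I}_M\;\mathbf{0}_{M,N-M}]$. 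Hence $\mathsf{E}[\hat{\mathbf{P}}_k\hat{\mathbf{P}}_k^{\top}]$ depends only on $\mathbf{A}$ and $\sigma_{\text{d}}^2$ (in particular, it does not depend on $\mathbf{U}$), and it suffices to analyze the projector built from $\mathbf{A}+\mathbf{Z}_k$.

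The crux is to identify a large group of orthogonal matrices $\mathbf{T}$ under which $\mathbf{G}_k$ is distributionally invariant up to conjugation, i.e.\ $\mathbf{T}^{\top}\mathbf{G}_k\mathbf{T}\overset{d}{=}\mathbf{G}_k$. The key observation is that left-multiplying $\hat{\mathbf{Z}}_k$ by any orthogonal $\mathbf{R}\in O(M)$ leaves the Gram matrix unchanged, since $(\mathbf{R}\hat{\mathbf{Z}}_k)^{\top}(\mathbf{R}\hat{\mathbf{Z}}_k)=\hat{\mathbf{Z}}_k^{\top}\hat{\mathbf{Z}}_k$; this extra freedom lets me cancel the action of $\mathbf{T}$ on the mean. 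Concretely, for any block-diagonal $\mathbf{T}=\mathbf{R}\oplus\mathbf{T}_{22}$ with $\mathbf{R}\in O(M)$ and $\mathbf{T}_{22}\in O(N-M)$ one checks $\mathbf{A}\mathbf{T}=\mathbf{R}\mathbf{A}$, so $\mathbf{R}^{\top}\hat{\mathbf{Z}}_k\mathbf{T}=\mathbf{A}+\mathbf{R}^{\top}\mathbf{Z}_k\mathbf{T}\overset{d}{=}\mathbf{A}+\mathbf{Z}_k=\hat{\mathbf{Z}}_k$ by the bi-orthogonal invariance of the isotropic Gaussian $\mathbf{Z}_k$. Forming Gram matrices then gives exactly $\mathbf{T}^{\top}\mathbf{G}_k\mathbf{T}\overset{d}{=}\mathbf{G}_k$.

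Since the principal $M$-dimensional eigenspace projector of $\mathbf{T}^{\top}\mathbf{G}_k\mathbf{T}$ equals $\mathbf{T}^{\top}(\hat{\mathbf{P}}_k\hat{\mathbf{P}}_k^{\top})\mathbf{T}$ (eigenvectors transform by $\mathbf{T}^{\top}$ while eigenvalues are unchanged; the top-$M$ subspace is a.s.\ well defined because eigenvalue ties occur with probability zero under Gaussian smoothing), taking expectations yields $\mathbf{\Lambda}_{\mu}=\mathbf{T}^{\top}\mathbf{\Lambda}_{\mu}\mathbf{T}$, i.e.\ $\mathbf{\Lambda}_{\mu}$ commutes with every $\mathbf{R}\oplus\mathbf{T}_{22}$. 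A short Schur-type computation then finishes: testing $\mathbf{R}=-\mathbf{I}_M$ kills the off-diagonal blocks, and commuting with all of $O(M)$ resp.\ $O(N-M)$ forces each diagonal block to be scalar, whence $\mathbf{\Lambda}_{\mu}=a\mathbf{I}_M\oplus c\mathbf{I}_{N-M}$, which is in particular diagonal. The nonnegativity $\mu_i\ge 0$ is immediate since $\hat{\mathbf{P}}_k\hat{\mathbf{P}}_k^{\top}\succeq\mathbf{0}$ is a projector, so its expectation is positive semidefinite with nonnegative diagonal entries.

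The main obstacle is the symmetry-group step: recognizing that the relevant group is strictly larger than the set of column transformations that literally fix the mean $\mathbf{A}$. Those alone only yield transformations of the form $\mathbf{I}_M\oplus\mathbf{T}_{22}$, which force $\mathbf{\Lambda}_{\mu}$ to be merely block-diagonal rather than diagonal; the diagonal (indeed two-scalar) structure emerges only after noticing that left orthogonal factors are invisible to the Gram matrix and can be used to absorb the rotation of $\mathbf{A}$. If a more elementary route is preferred, the same diagonality (without the two-scalar refinement) follows by restricting $\mathbf{R}$ and $\mathbf{T}_{22}$ to signed permutation matrices, for which the invariance $\mathbf{Z}_k\mathbf{T}\overset{d}{=}\mathbf{Z}_k$ is transparent.
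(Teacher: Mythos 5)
Your proof is correct, and it is built on the same underlying symmetry template as the paper's, but with a strictly larger invariance group that yields a stronger conclusion. The paper's argument (Appendix~\ref{proof:mean:projection}) uses only the discrete coordinate reflections $\mathbf{D}_l=\mathbf{I}_N-2\mathbf{e}_l\mathbf{e}_l^{\top}$, $l=1,\dots,N$: it invokes the ``symmetric innovation'' property $\hat{\mathbf{Z}}_k^{\top}\hat{\mathbf{Z}}_k\overset{\text{d}}{=}\mathbf{D}_l\hat{\mathbf{Z}}_k^{\top}\hat{\mathbf{Z}}_k\mathbf{D}_l$ (delegating its justification to a citation), deduces $\mathbf{\Lambda}_{\mu}=\mathbf{D}_l\mathbf{\Lambda}_{\mu}\mathbf{D}_l$ for every $l$, which zeroes all off-diagonal entries, and obtains $\mu_i\geq 0$ from $\mu_i$ being the expected squared row norm of $\hat{\mathbf{P}}_k$. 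You instead conjugate by the full block group $\mathbf{T}=\mathbf{R}\oplus\mathbf{T}_{22}$ with $\mathbf{R}\in O(M)$, $\mathbf{T}_{22}\in O(N-M)$, and — this is precisely the step the paper leaves implicit behind its citation — you make explicit why the invariance holds for the coordinates $l\leq M$ at all: the mean $\mathbf{A}=[\mathbf{I}_M\;\mathbf{0}_{M,N-M}]$ is not fixed by right multiplication, but the induced left factor is invisible to the Gram matrix and can absorb it via $\mathbf{A}\mathbf{T}=\mathbf{R}\mathbf{A}$. The paper's sign-flip proof is exactly the signed-permutation restriction of your group, as you yourself observe. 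What your route buys is the refinement $\mathbf{\Lambda}_{\mu}=a\mathbf{I}_M\oplus c\mathbf{I}_{N-M}$: the expectation is not merely diagonal but scalar on the signal block and on its complement, a structural fact the lemma does not state but which could further simplify the centroid problem in~\eqref{eq:centroid}; your positive-semidefiniteness argument for $\mu_i\geq 0$ is also cleaner than the paper's. One cosmetic point: the almost-sure well-definedness of the top-$M$ projector is most directly justified by $\hat{\mathbf{Z}}_k$ having full row rank almost surely, so that $\lambda_M>0=\lambda_{M+1}$ (here $\mathbf{G}_k$ is rank-deficient since $M<N$, so ``no eigenvalue ties'' among all eigenvalues is not quite the right invocation, and ties within the top $M$ are harmless because the projector depends only on the span); this does not affect the validity of your argument.
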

\begin{proof}
See Appendix~\ref{proof:mean:projection}.
\end{proof}
Based on Lemma~\ref{Lemma:mean:projection}, define the summation $\mathbf{J}^{\prime} = \frac{1}{K}\sum_{k=1}^K\mathbf{U}_{\hat{\mathbf{Y}}_k}^{\top}\mathbf{U}_{\hat{\mathbf{Y}}_k}$ and the ML problem in~\eqref{pro:centroid} is written as 
\begin{equation}
    \begin{aligned} \label{eq:centroid}
    	\mathop{\max}_{\mathbf{U}}&\ \ \mathsf{Tr}\left(\mathbf{Q}^{\top}\mathbf{\Lambda}_{\mu}\mathbf{Q}\mathbf{J}^{\prime}\right),\\
    	\mathrm{s.t.}&\ \ \mathbf{U}\mathbf{U}^{\top}=\mathbf{I}_M.
    \end{aligned}
\end{equation}
Although the problem shown in \eqref{eq:centroid} is non-convex due to the orthonormal constraint, there still is a closed-form solution serving as the PC estimator for the blind case. To this end, define the SVD: $\mathbf{J}^{\prime} = \mathbf{U}_{\mathbf{J}^{\prime}}\mathbf{\Lambda}_{\mathbf{J}^{\prime}}\mathbf{U}_{\mathbf{J}^{\prime}}^{\top}$ with $\mathbf{U}_{\mathbf{J}^{\prime}}^{\top} = \mathcal{S}_N\left( \mathbf{J}^{\prime}\right)$. As the matrix $\mathbf{J}^{\prime}$ is positive semidefinite, the entries of $\mathbf{\Lambda}_{\mathbf{J}^{\prime}}$ are non-negative and we have the following  upper bound on  the objective function~\eqref{eq:centroid}: 
\begin{align}
	\mathsf{Tr}\left(\mathbf{Q}^{\top}\mathbf{\Lambda}_{\mu}\mathbf{Q}\mathbf{J}^{\prime}\right) 
	& = \mathsf{Tr}\left((\mathbf{Q}\mathbf{U}_{\mathbf{J}^{\prime}})^{\top}\mathbf{\Lambda}_{\mu}\mathbf{Q}\mathbf{U}_{\mathbf{J}^{\prime}}\mathbf{\Lambda}_{\mathbf{J}^{\prime}}\right),\nonumber\\
	& \leq \mathsf{Tr}\left(\mathbf{\Lambda}_{\mu}\mathbf{\Lambda}_{\mathbf{J}^{\prime}}\right)\nonumber.
\end{align}
The equality is achieved when  $\mathbf{Q} =\mathbf{U}_{\mathbf{J}^{\prime}}$. It follows that the $M$-dimensional principal eigenspace of $\mathbf{J}^{\prime}$,  $\mathcal{S}_M\left(\mathbf{J}^{\prime}\right)$, is the  optimal solution for the problem in \eqref{eq:centroid}. 
\begin{Theorem}[ML based PC estimation without receive CSI] \label{Theorem:blind}
	\emph{Given the received matrix symbols  $\{\hat{\mathbf{Y}}_k\}_{k=1}^K$ and the blind PC estimator based on ML criterion is given by
	\begin{equation}
	\tilde{\mathbf{U}}^{\star} = \mathcal{S}_M\left(\mathbf{J}^{\prime}\right),
	\end{equation}
	where $\mathbf{J}^{\prime} = \frac{1}{K}\sum_{k=1}^K\mathbf{U}_{\hat{\mathbf{Y}}_k}^{\top}\mathbf{U}_{\hat{\mathbf{Y}}_k}$ and $\mathbf{U}_{\hat{\mathbf{Y}}_k} = \mathcal{S}_M\left(\hat{\mathbf{Y}}_k^{\top}\hat{\mathbf{Y}}_k\right) $.}
\end{Theorem}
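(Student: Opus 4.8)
The plan is to assemble the chain of reductions developed in the preceding subsection into a single optimization whose closed-form maximizer is the claimed estimator. I would begin from the ML problem~\eqref{problem:ML} with the likelihood~\eqref{Eq:Objectve:blind}, which is intractable because the unknown channel must be marginalized out. The first move is to replace the objective by its Jensen lower bound~\eqref{Eq:Objectve:LB}; Lemma~\ref{Lemma:likelihoodlowerbound} then lets me discard every $\mathbf{U}$-independent term, reducing the problem to the minimization~\eqref{eq:likelihood} of $\frac{1}{K}\sum_k\Psi_k$.

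Next I would invoke Lemma~\ref{Lemma:ul_bounds} to rewrite each $\Psi_k$ with the $\mathbf{U}$-independent constant $C$ separated out, so that minimizing $\Psi_k$ becomes maximizing the remaining trace expectation; this is exactly the form~\eqref{eq:generalized PCA}. I would then eigen-decompose $\hat{\mathbf{Y}}_k^{\top}\hat{\mathbf{Y}}_k$ into its data and channel-noise eigenspaces and apply the trace bound stated just below~\eqref{eq:generalized PCA}, which is attained precisely when the subspace distance $d(\mathbf{Q}^{\top}\hat{\mathbf{P}}_k,\mathbf{U}_{\hat{\mathbf{Y}}_k})$ vanishes. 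This motivates replacing each trace term by its Grassmannian-projection surrogate, yielding the centroid problem~\eqref{pro:centroid}.

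The final step is to carry out the expectation over $\hat{\mathbf{P}}_k$ and solve in closed form. Using Lemma~\ref{Lemma:mean:projection} to replace $\mathsf{E}_{\hat{\mathbf{P}}_k}[\hat{\mathbf{P}}_k\hat{\mathbf{P}}_k^{\top}]$ by the diagonal $\mathbf{\Lambda}_{\mu}$ and collecting the received data subspaces into $\mathbf{J}^{\prime}=\frac{1}{K}\sum_k\mathbf{U}_{\hat{\mathbf{Y}}_k}^{\top}\mathbf{U}_{\hat{\mathbf{Y}}_k}$ reduces the objective to $\mathsf{Tr}(\mathbf{Q}^{\top}\mathbf{\Lambda}_{\mu}\mathbf{Q}\mathbf{J}^{\prime})$ as in~\eqref{eq:centroid}. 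I would then diagonalize $\mathbf{J}^{\prime}$ and apply a von~Neumann-type trace inequality to obtain $\mathsf{Tr}(\mathbf{Q}^{\top}\mathbf{\Lambda}_{\mu}\mathbf{Q}\mathbf{J}^{\prime})\leq\mathsf{Tr}(\mathbf{\Lambda}_{\mu}\mathbf{\Lambda}_{\mathbf{J}^{\prime}})$, with equality when $\mathbf{Q}$ aligns with $\mathbf{U}_{\mathbf{J}^{\prime}}$. Since $\mathbf{U}$ occupies the leading $M$ rows of the block $\mathbf{Q}=[\mathbf{U}^{\top}\;(\mathbf{U}^{\bot})^{\top}]^{\top}$, the optimizer is exactly the principal $M$-dimensional eigenspace $\mathcal{S}_M(\mathbf{J}^{\prime})$, which is the claim.

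I expect the main obstacle to be the justification of the two approximation steps rather than the final algebra: controlling the gap introduced by replacing each $\Psi_k$ trace term with its Grassmannian-projection surrogate, and verifying that the ordering of the diagonal entries of $\mathbf{\Lambda}_{\mu}$ matches that of $\mathbf{\Lambda}_{\mathbf{J}^{\prime}}$ so the trace inequality is tight exactly at the principal subspace and not at some permuted eigenspace. Confirming that the alignment condition $\mathbf{Q}=\mathbf{U}_{\mathbf{J}^{\prime}}$ is attainable under the block structure of $\mathbf{Q}$ — i.e. that choosing $\mathbf{U}$ as the top-$M$ eigenvectors and $\mathbf{U}^{\bot}$ as the remainder realizes the bound — is the crux that validates the closed form.
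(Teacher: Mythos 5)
Your proposal is correct and follows the paper's own derivation essentially verbatim: the same chain of the Jensen lower bound~\eqref{Eq:Objectve:LB}, Lemma~\ref{Lemma:likelihoodlowerbound}, Lemma~\ref{Lemma:ul_bounds}, the Grassmannian-projection surrogate yielding~\eqref{pro:centroid}, Lemma~\ref{Lemma:mean:projection}, and the trace inequality on~\eqref{eq:centroid} with equality at $\mathbf{Q}=\mathbf{U}_{\mathbf{J}^{\prime}}$. The two obstacles you flag (the gap introduced by the surrogate and the consistent ordering of the diagonal entries of $\mathbf{\Lambda}_{\mu}$ and $\mathbf{\Lambda}_{\mathbf{J}^{\prime}}$) are treated no more rigorously in the paper, which presents the estimator as an approximation of the exact ML solution rather than proving these steps lossless.
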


The optimal blind PC estimation in Theorem~\eqref{Theorem:blind} essentially  consists of three steps: 1) projecting each received matrix symbol to become a single point on the Grassmannian; 2) compute the centroid by averaging the points in the Euclidean space; 3) then projecting the result onto the Grassmannian to yield the estimated global PCs. It should be emphasized that Step 1) leverages the rotation-invariant property of analog subspace transmission  in \eqref{eq:rotation}. On the other hand, the  centroid computation in Step 2) is an aggregation operation that suppresses both the data and receiver noise. The aggregation gain grows as the number of devices or equivalently the number of observations grow as quantified in the next sub-section. 
	
\begin{Remark} [Geometric Interpretation] \label{Remark:Geometric}
	\emph{
	The result in Theorem~\ref{Theorem:blind} can be  also interpreted geometrically as follows. According to~\eqref{eq:distance}, each summation term  in the objective function in \eqref{eq:centroid}  measures the subspace  distance between a  received matrix symbol  and a Gaussian matrix, where the latter's PCs are uniformly displaced by $\mathbf{Q}$ on the Grassmannian. Since the Gaussian matrix is isotropic and its mean's PCs coincides with $\mathbf{U}$ after being displaced by $\mathbf{Q}$, we can obtain an equivalent geometric form of the problem in~\eqref{eq:centroid} as 
	\begin{equation}
    	\begin{aligned} 
        	\mathop{\min}_{\mathbf{U}}&\ \ \frac{1}{K}\sum_{k=1}^Kd^2(\mathbf{U},\mathbf{U}_{\hat{\mathbf{Y}}_k}),\nonumber\\
        	\mathrm{s.t.}&\ \ \mathbf{U}\mathbf{U}^{\top}=\mathbf{I}_M \nonumber.
        \end{aligned} 
	\end{equation}
    The above problem is to identify the optimal $\mathbf{U}$ such that the sum over its square distance to each signal's PCs is minimized. Its optimal solution, as proved in the literature is the centroid of signal PCs on the Grassmannian (see, e.g.~\cite{JF2019estimation2019}), which is aligned with the result in the theorem.}
\end{Remark}
The complexity of the blind estimator is analyzed as follows. The eigenvalue decomposition of the $M\times M$ matrix $\mathbf{J}^{\prime}$ incurs complexity in the order of $O(M^3)$. On the other hand, computing each summation term in $\mathbf{J}^{\prime}$, namely  $\mathbf{U}_{\hat{\mathbf{Y}}_k}^{\top}\mathbf{U}_{\hat{\mathbf{Y}}_k}$, has the complexity of  $O(M^2N)$. Given the received matrix $\hat{\mathbf{Y}}_k^{\top}\hat{\mathbf{Y}}_k$ with the size $2N_{\mathrm{r}}\times N$, the matrix multiplication of $\{\hat{\mathbf{Y}}_k^{\top}\hat{\mathbf{Y}}_k\}$ and their eigenvalue decomposition result in total complexity of $O(4KN_{\mathrm{r}}N^2)$. Therefore, the overall complexity for the blind estimator is $O(4KN_{\mathrm{r}}N^2+M^3 + KNM^2)$.

Finally, the blind ML PC estimator in Theorem~\ref{Theorem:blind} yields an unbiased estimate of the ground-truth as proved below.
\begin{Corollary}\label{Pro:unbiasedEwoCSI}
	\emph{The blind PC estimator in Theorem~\ref{Theorem:blind} computing the centroid of received noisy observations achieves unbiased estimation of the global PCs in the following sense. Define  $\bar{\mathbf{J}}^{\prime} = \mathsf{E}\left[\mathbf{J}^{\prime}\right]$ with $\mathbf{J}^{\prime}$ follows that in Theorem~\ref{Theorem:blind}, and the optimal PC estimate $\bar{\mathbf{U}}^\star = \mathcal{S}_M\left( \bar{\mathbf{J}}^{\prime}\right)$. Then $d(\bar{\mathbf{U}}^\star,\mathbf{U}) = 0$. }
\end{Corollary}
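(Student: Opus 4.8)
The plan is to show that the deterministic matrix $\bar{\mathbf{J}}' = \mathsf{E}[\mathbf{U}_{\hat{\mathbf{Y}}_k}^{\top}\mathbf{U}_{\hat{\mathbf{Y}}_k}]$ has its dominant $M$-dimensional eigenspace equal to $\mathsf{span}(\mathbf{U})$; by the metric in \eqref{eq:distance} this makes all principal angles between $\bar{\mathbf{U}}^\star = \mathcal{S}_M(\bar{\mathbf{J}}')$ and $\mathbf{U}$ vanish, giving $d(\bar{\mathbf{U}}^\star,\mathbf{U})=0$. Since the channel, data, and receiver noise are i.i.d. across devices, the summands of $\mathbf{J}'$ in Theorem~\ref{Theorem:blind} are identically distributed and the average collapses to a single expectation. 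Observing that $\mathbf{U}_{\hat{\mathbf{Y}}_k}^{\top}\mathbf{U}_{\hat{\mathbf{Y}}_k}$ is the orthogonal projector onto the top-$M$ right-singular subspace of $\hat{\mathbf{Y}}_k$, the task becomes purely spectral: to locate the dominant eigenspace of an averaged random projector.

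First I would pass to the rotated frame defined by $\mathbf{Q}=[\mathbf{U}^{\top}\;(\mathbf{U}^{\bot})^{\top}]^{\top}$. Writing $\mathbf{A}_k = \hat{\mathbf{Y}}_k\mathbf{Q}^{\top}$ and using \eqref{eq:newModel} together with $\mathbf{U}\mathbf{Q}^{\top} = [\mathbf{I}_M\;\mathbf{0}]$ gives $\mathbf{A}_k = \hat{\mathbf{H}}_k\hat{\mathbf{Z}}_k + \hat{\mathbf{W}}_k\mathbf{Q}^{\top}$, where $\hat{\mathbf{Z}}_k = [\mathbf{I}_M\;\mathbf{0}] + \mathbf{Z}_k\mathbf{Q}^{\top}$ is exactly the matrix of Lemma~\ref{Lemma:ul_bounds}. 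Because $\mathbf{Z}_k$ and $\hat{\mathbf{W}}_k$ are i.i.d. Gaussian and $\mathbf{Q}$ is orthogonal, the right multiplications by $\mathbf{Q}^{\top}$ can be absorbed, so in distribution $\mathbf{A}_k$ equals a canonical observation whose only deterministic part is the block $[\mathbf{I}_M\;\mathbf{0}]$. The projector $\mathbf{\Pi}_k = \mathbf{Q}\,\mathbf{U}_{\hat{\mathbf{Y}}_k}^{\top}\mathbf{U}_{\hat{\mathbf{Y}}_k}\,\mathbf{Q}^{\top}$ onto the top-$M$ right-singular subspace of $\mathbf{A}_k$ then satisfies $\mathsf{E}[\mathbf{\Pi}_k]=\mathbf{Q}\bar{\mathbf{J}}'\mathbf{Q}^{\top}=:\mathbf{\Lambda}$.

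The key structural step is a block-orthogonal symmetry, exactly in the spirit of Lemma~\ref{Lemma:mean:projection}. For any $\mathbf{O}=\mathsf{blkdiag}(\mathbf{O}_1,\mathbf{O}_2)$ with $\mathbf{O}_1\in O(M)$ and $\mathbf{O}_2\in O(N-M)$, right-multiplying the canonical $\mathbf{A}_k$ by $\mathbf{O}^{\top}$ leaves its distribution unchanged: $\mathbf{O}_2$ is absorbed by the isotropic Gaussian noise, while $\mathbf{O}_1$ acting on $[\mathbf{I}_M\;\mathbf{0}]$ is undone by the orthogonal invariance of the channel $\hat{\mathbf{H}}_k$ and of $\mathbf{Z}_k$. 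Since a right rotation by $\mathbf{O}^{\top}$ rotates the right-singular subspace by $\mathbf{O}$, i.e. $\mathbf{\Pi}(\mathbf{A}_k\mathbf{O}^{\top})=\mathbf{O}\,\mathbf{\Pi}(\mathbf{A}_k)\,\mathbf{O}^{\top}$, taking expectations yields $\mathbf{O}\mathbf{\Lambda}\mathbf{O}^{\top}=\mathbf{\Lambda}$ for all such $\mathbf{O}$. This forces the off-diagonal blocks of $\mathbf{\Lambda}$ to vanish and each diagonal block to be a scalar multiple of the identity, so $\mathbf{\Lambda}=\mathsf{blkdiag}(c_1\mathbf{I}_M,c_2\mathbf{I}_{N-M})$ with $c_1M+c_2(N-M)=\mathsf{Tr}(\mathbf{\Lambda})=M$. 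Hence $\mathcal{S}_M(\bar{\mathbf{J}}')$ spans $\mathsf{span}(\mathbf{U})$ provided $c_1>c_2$, after which the back-rotation by $\mathbf{Q}^{\top}$ finishes the argument.

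The main obstacle is establishing the strict ordering $c_1>c_2$, i.e. that averaging the per-device projectors still concentrates the subspace mass on the signal block. I would motivate it by the clean second-moment computation $\mathsf{E}[\mathbf{A}_k^{\top}\mathbf{A}_k]=2N_{\text{r}}\,\mathsf{blkdiag}\big((1+M\sigma_{\text{d}}^2+\sigma_{\text{c}}^2)\mathbf{I}_M,(M\sigma_{\text{d}}^2+\sigma_{\text{c}}^2)\mathbf{I}_{N-M}\big)$, which exhibits a spectral gap of $2N_{\text{r}}$ favouring the first block and, in the noiseless limit, reduces to $\hat{\mathbf{Y}}_k^{\top}\hat{\mathbf{Y}}_k=\mathbf{U}^{\top}\hat{\mathbf{H}}_k^{\top}\hat{\mathbf{H}}_k\mathbf{U}$ whose dominant eigenspace is $\mathsf{span}(\mathbf{U})$ by \eqref{eq:rotation}. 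The delicate point is that $\mathbf{\Lambda}$ is the expectation of a \emph{projector} rather than of the Gram matrix, so the second-moment gap does not transfer verbatim; I would close this by a perturbation/coupling comparison against the isotropic no-signal case (where the same symmetry gives $c_1=c_2=M/N$), arguing that introducing the rank-$M$ deterministic component $[\mathbf{I}_M\;\mathbf{0}]$ can only increase the expected mass that the top-$M$ right-singular subspace places on the first block, which yields $c_1>c_2$.
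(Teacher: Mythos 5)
Your overall architecture tracks the paper's proof: rotate into the frame $\mathbf{Q}=[\mathbf{U}^{\top}\;(\mathbf{U}^{\bot})^{\top}]^{\top}$, exploit distributional invariance of the rotated observation to constrain $\mathbf{Q}\bar{\mathbf{J}}^{\prime}\mathbf{Q}^{\top}$, and then argue that the signal block dominates so that $\mathcal{S}_M(\bar{\mathbf{J}}^{\prime})$ spans $\mathsf{span}(\mathbf{U})$. Your symmetry step is in fact cleaner and \emph{stronger} than the paper's: the paper only invokes coordinate sign flips $\mathbf{D}_l=\mathbf{I}-2\mathbf{e}_l\mathbf{e}_l^{\top}$ (the ``symmetric innovation'' device, as in the proof of Lemma~\ref{Lemma:mean:projection}), which yields mere diagonality of $\mathbf{Q}\,\mathsf{E}[\mathbf{U}_{\hat{\mathbf{Y}}}^{\top}\mathbf{U}_{\hat{\mathbf{Y}}}]\,\mathbf{Q}^{\top}$, whereas your full block-orthogonal invariance under $\mathsf{diag}(\mathbf{O}_1,\mathbf{O}_2)$ is valid (the joint rotational invariances of $\hat{\mathbf{H}}_k$, $\mathbf{Z}_k$, $\hat{\mathbf{W}}_k$ that you cite do hold) and forces the sharper structure $\mathsf{diag}(c_1\mathbf{I}_M,c_2\mathbf{I}_{N-M})$ with $c_1M+c_2(N-M)=M$, reducing the whole corollary to comparing two scalars.

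The genuine gap is exactly where you flag it: the strict ordering $c_1>c_2$ is the substantive content of the corollary at that point, and your proposal does not prove it. Your computation of $\mathsf{E}[\mathbf{A}_k^{\top}\mathbf{A}_k]$ is correct, but, as you concede yourself, a gap in the expected Gram matrix does not control the expectation of the top-$M$ projector (a discontinuous, highly nonlinear functional of $\mathbf{A}_k$), and the closing claim --- that inserting the deterministic block $[\mathbf{I}_M\;\mathbf{0}]$ ``can only increase'' the expected projector mass on the first block --- is asserted rather than established: no coupling is constructed, and this monotonicity is precisely the statement that needs an argument. The paper closes this step by a different mechanism: writing $\mathbf{V}=\mathbf{U}_{\hat{\mathbf{Y}}}\mathbf{Q}^{\top}$ with unit-norm rows $\mathbf{v}_i$, the eigenvalue attached to $\mathbf{v}_i$ is $\lambda_i=\Vert[\hat{\mathbf{H}}([\mathbf{I}_M\;\mathbf{0}_{M,N-M}]+\mathbf{Z}\mathbf{Q}^{\top})+\hat{\mathbf{W}}\mathbf{Q}^{\top}]\mathbf{v}_i^{\top}\Vert_2$, which by isotropy equals in distribution $\Vert\hat{\mathbf{H}}([\mathsf{diag}(|v_{i,1}|,\ldots,|v_{i,M}|)\;\mathbf{0}_{M,N-M}]+\mathbf{z})+\mathbf{w}\Vert_2$ with $\mathbf{z},\mathbf{w}$ independent of $\mathbf{v}_i$; thus $\lambda_i$ is stochastically increasing in the coordinate masses $|v_{i,j}|$, $j\leq M$, and since the estimator retains the $M$ largest eigenvalues, $\mathsf{E}[\sum_{i}v_{i,j_1}^2]\geq\mathsf{E}[\sum_{i}v_{i,j_2}^2]$ for $j_1\leq M<j_2$. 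This conditional-Rayleigh-quotient stochastic-ordering argument is the natural way to make your monotonicity heuristic precise, and it would slot directly into your (nicer) block-scalar framework; note also that both your sketch and the paper quietly need the \emph{strict} inequality to pin down a unique dominant eigenspace, a point worth making explicit if you rewrite the proof.
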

\begin{proof}
See Appendix~\ref{proof:unbiasedEwoCSI}.
\end{proof}
\subsection{Optimality of Symbol-by-Symbol Blind PC Detection} \label{Sec:separation}
The blind estimator designed in the preceding sub-section  performs joint detection of  a block of received symbols. However, its  aggregation form in Theorem~\ref{Theorem:blind}, which results from i.i.d. channel and Gaussian noise over users,  suggests the optimality  of detecting the subspaces of individual symbols followed by applying the conventional aggregation method for DPCA \cite{JF2019estimation2019,VC2020Arxiv}. The conjecture is corroborated in the sequel by designing the single-symbol blind PC detector under two different criteria. 

First, consider the ML criterion. Conditioned on the transmitted  symbol $\mathbf{U}_k$ and similarly as  \eqref{Eq:CondDist}, the distribution of $\hat{\mathbf{Y}}_k$ is obtained as 
\begin{align} 
    &p(\hat{\mathbf{Y}}_k|\mathbf{U}_k)\nonumber\\
    &=\frac{\exp\left(-\frac{1}{2}\mathsf{Tr}\left(\hat{\mathbf{Y}}_k^{\top}\left(\sigma_{\text{c}}^2\mathbf{I}_{N} +\mathbf{U}_k^{\top}\mathbf{U}_k\right)^{-1}\hat{\mathbf{Y}}_k\right)\right)}{(2\pi)^{N_{\mathrm{r}}N}\mathsf{det}\left(\sigma_{\text{c}}^2\mathbf{I}_{N} +\mathbf{U}_k^{\top}\mathbf{U}_k\right)^{N_{\mathrm{r}}}},\nonumber\\
    & =\frac{\exp\left(-\frac{1}{2\sigma_{\text{c}}^{2}}\mathsf{Tr}\left(\hat{\mathbf{Y}}_k^{\top}\hat{\mathbf{Y}}_k- (1+\sigma_{\text{c}}^{2})^{-1} \hat{\mathbf{Y}}_k^{\top}\mathbf{U}_k^{\top}\mathbf{U}_k\hat{\mathbf{Y}}_k\right)\right)}{(2\pi\sigma_{\text{c}}^{2})^{N_{\mathrm{r}}N}\left(1+\sigma_{\text{c}}^{-2} \right)^{N_{\mathrm{r}}M}} \nonumber.
 \end{align}
The above result shows that the distribution depends on $\mathbf{U}_k$ through the term $\mathsf{Tr}\left( \hat{\mathbf{Y}}_k^{\top}\mathbf{U}_k^{\top}\mathbf{U}_k\hat{\mathbf{Y}}_k\right)$. Therefore, the optimal ML estimate of $\mathbf{U}_k$ can be obtained by solving the following problem:
\begin{equation}\label{eq:sepration:commun}
	\begin{aligned} 
    	\mathop{\min}_{\mathbf{U}_k}&\ \ \mathsf{Tr}\left( \hat{\mathbf{Y}}_k^{\top}\mathbf{U}_k^{\top}\mathbf{U}_k\hat{\mathbf{Y}}_k\right),\\
    	\mathrm{s.t.}&\ \ \mathbf{U}_k\mathbf{U}_k^{\top}=\mathbf{I}_M.
    \end{aligned} 
\end{equation}
Note that even though $\mathbf{U}_k$ is not orthonormal, the orthonormal constraint in the above problem serves the purpose of matched filtering by reducing the noise dimensionality. Define  $\mathbf{U}_{\hat{\mathbf{Y}}_k}=\mathcal{S}_M(\hat{\mathbf{Y}}_k^{\top}\hat{\mathbf{Y}}_k)$.  The problem in  \eqref{eq:sepration:commun} has the optimal solution:  $\mathbf{U}_k^{\star} =\mathbf{Q}_{M}\mathbf{U}_{\hat{\mathbf{Y}}_k}$ with $\mathbf{Q}_{M}$ is an arbitrary $M\times M$ orthonormal matrix. Equivalently, $\mathsf{span}(\mathbf{U}_k^{\star})=\mathsf{span}(\mathbf{U}_{\hat{\mathbf{Y}}_k})$.

Second, in the absence of  receiver noise, the detection  problem can be formulated as linear regression: $\hat{\mathbf{Y}}_k=\tilde{\mathbf{H}}_k\tilde{\mathbf{U}}_k$ where  $\tilde{\mathbf{H}}_k$ and $\tilde{\mathbf{U}}_k$ denote the estimates of channel and transmitted subspace, respectively. The resultant regression error can be measured as  $\Vert\hat{\mathbf{Y}}_k-\tilde{\mathbf{H}}_k\tilde{\mathbf{U}}_k\Vert^2$. Then the problem of  minimizing the regression error can be solved by two steps: 1) given any $\tilde{\mathbf{U}}_k$, the optimal $\tilde{\mathbf{H}}_k$ is given by $\tilde{\mathbf{H}}_k^{\star}=\arg\min_{\tilde{\mathbf{H}}_k}\Vert\hat{\mathbf{Y}}_k-\tilde{\mathbf{H}}_k\tilde{\mathbf{U}}_k\Vert^2=\hat{\mathbf{Y}}_k\tilde{\mathbf{U}}_k^{\top}$; 2) given the optimal $\tilde{\mathbf{H}}_k^{\star}$, the optimal $\tilde{\mathbf{U}}_k$ is given by $\tilde{\mathbf{U}}_k^{\star}=\arg\min_{\tilde{\mathbf{U}}_k\tilde{\mathbf{U}}_k^{\top}=\mathbf{I}_M}\Vert\hat{\mathbf{Y}}_k-\hat{\mathbf{Y}}_k\tilde{\mathbf{U}}_k^{\top}\tilde{\mathbf{U}}_k\Vert^2=\mathbf{Q}_{M}\mathbf{U}_{\hat{\mathbf{Y}}_k}$. One can observe the result to be identical to the preceding  ML counterpart.

Finally, by setting the arbitrary orthonormal matrix $\mathbf{Q}_{M}$ to be an identity matrix, the aggregation of  subspaces estimated from individual received symbols  using the above single-symbol detector yields an identical result as in Theorem~\ref{Theorem:blind}. 

\subsection{Error Performance Analysis}
As before, the error performance of  blind PC estimation is also  analyzed based on the result in Lemma~\ref{Lemma:subspaceDistanceBetweenTwoPerturbedSubspace}. As the $\mathbf{U}_{\hat{\mathbf{Y}}_k}$ in the derived  estimator (see Theorem~\ref{Theorem:blind}) does not directly admit tractability, an approximation is necessary. Consider a  symbol resulting from applying channel inversion to a received symbol, say $\hat{\mathbf{Y}}_k$, resulting in 
 $\mathbf{U} + \mathbf{Z}_k + \hat{\mathbf{H}}_k^+\hat{\mathbf{W}}_k$. Note that this is a deviated version of the desired PCs  $\mathbf{U}_{\hat{\mathbf{Y}}_k}$ but its eigenspace can provide a tractable approximation  of the latter when the channel SNR is high~\cite{Yuqing2018AutomaticRecognition}: 
\begin{equation} \label{eq:approximation}
	\mathbf{U}_{\hat{\mathbf{Y}}_k}\approx \mathcal{S}_M\left((\mathbf{U} + \mathbf{Z}_k + \hat{\mathbf{H}}_k^+\hat{\mathbf{W}}_k)^{\top}(\mathbf{U} + \mathbf{Z}_k + \hat{\mathbf{H}}_k^+\hat{\mathbf{W}}_k)\right).
\end{equation}
Simulation results in Fig.~\ref{fig:distance} validate the approximation. Based on the above approximation, we can first derive the square subspace distance conditioned on the channel matrices $\{\hat{\mathbf{H}}_k\}_{k=1}^K$ as follows.
\begin{Lemma}[]\label{Lemma:EwoCSIdistance}\emph{
The square subspace distance between the estimation $\tilde{\mathbf{U}}^{\star}$ in the Theorem~\ref{Theorem:blind} and the ground-truth $\mathbf{U}$ is upper bounded as
\begin{align}
    d^2(\tilde{\mathbf{U}}^{\star},\mathbf{U})
    &\leq - \frac{2}{M^2}\sum_{k,j=1,k\neq j}^K\mathsf{Tr}\left( \mathbf{\Delta}_{2,k,j}\mathbf{U^{\bot}}^{\top}\mathbf{U}^{\bot}\mathbf{\Delta}_{2,k,j}^{\top}\right)  \nonumber\\
    &\,\quad +\frac{4}{M}\sum_{k=1}^K\mathsf{Tr}\left( \mathbf{\Delta}_{1,k}\mathbf{U^{\bot}}^{\top}\mathbf{U}^{\bot}\mathbf{\Delta}_{1,k}^{\top}\right) + O\left(\epsilon^3\right),
\end{align}
where $\mathbf{U^{\bot}} $ is the orthogonal complement of $ \mathbf{U}$,  $\epsilon = \sigma_{\text{d}}^2 + \frac{\sigma_{\text{c}}^2}{2N_{\text{r}}-M-1}$, $\mathbf{\Delta}_{1,k} \sim\mathcal{MN}[\mathbf{0},\sigma_{\text{d}}^2\mathbf{I}_M+\sigma_{\text{c}}^2\hat{\mathbf{H}}_k^+(\hat{\mathbf{H}}_k^{+})^{\top} ,\mathbf{I}_N]$, and
\begin{align*}
    &\mathbf{\Delta}_{2,k,j}\sim \\
    &\quad\mathcal{MN}[\mathbf{0},2\sigma_{\text{d}}^2\mathbf{I}_M+\sigma_{\text{c}}^2\hat{\mathbf{H}}_k^+(\hat{\mathbf{H}}_k^{+})^{\top} + \sigma_{\text{c}}^2\hat{\mathbf{H}}_j^+(\hat{\mathbf{H}}_j^{+})^{\top},\mathbf{I}_N].
\end{align*}}

Proof: \emph{See Appendix F.}
\end{Lemma}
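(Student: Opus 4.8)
The plan is to reduce the distance of the blind centroid estimate to a combination of pairwise Grassmannian distances, each of which is then evaluated by Lemma~\ref{Lemma:subspaceDistanceBetweenTwoPerturbedSubspace}. First I would use the high-channel-SNR approximation~\eqref{eq:approximation} to cast every received subspace in the perturbed form required by that lemma: with $\mathbf{G}_k = \mathbf{U} + \mathbf{E}_k$ and $\mathbf{E}_k = \mathbf{Z}_k + \hat{\mathbf{H}}_k^+\hat{\mathbf{W}}_k$, one has $\mathbf{U}_{\hat{\mathbf{Y}}_k}\approx \mathcal{S}_M(\mathbf{G}_k^{\top}\mathbf{G}_k)$, so each received subspace is a perturbation of the ground truth $\mathbf{U}$ with additive noise $\mathbf{E}_k$. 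Conditioned on the channels $\{\hat{\mathbf{H}}_k\}$, the two summands of $\mathbf{E}_k$ are independent zero-mean Gaussians, whence $\mathbf{E}_k = \mathbf{\Delta}_{1,k}\sim\mathcal{MN}[\mathbf{0},\sigma_{\text{d}}^2\mathbf{I}_M+\sigma_{\text{c}}^2\hat{\mathbf{H}}_k^+(\hat{\mathbf{H}}_k^+)^{\top},\mathbf{I}_N]$; and since the $\mathbf{E}_k$ are independent across $k$, the difference $\mathbf{E}_k-\mathbf{E}_j$ is Gaussian with covariances added, which is exactly $\mathbf{\Delta}_{2,k,j}$. This pins down the two noise matrices in the statement.

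Next I would express $d^2(\tilde{\mathbf{U}}^{\star},\mathbf{U})$ through pairwise distances. Recalling $\tilde{\mathbf{U}}^{\star}=\mathcal{S}_M(\mathbf{J}^{\prime})$ with $\mathbf{J}^{\prime}=\frac{1}{K}\sum_k\mathbf{U}_{\hat{\mathbf{Y}}_k}^{\top}\mathbf{U}_{\hat{\mathbf{Y}}_k}$, the estimate is the dominant $M$-dimensional eigenspace of the \emph{average} of the projection matrices $\mathbf{P}_k=\mathbf{U}_{\hat{\mathbf{Y}}_k}^{\top}\mathbf{U}_{\hat{\mathbf{Y}}_k}$. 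Working in the block coordinates induced by $[\mathbf{U}^{\top}\ (\mathbf{U}^{\bot})^{\top}]$ and expanding each $\mathbf{P}_k$ to second order in $\mathbf{E}_k$, its off-diagonal (tangent) block is $\mathbf{E}_k(\mathbf{U}^{\bot})^{\top}$ to first order; to leading order the dominant eigenspace of $\mathbf{J}^{\prime}$ is the graph of the averaged tangent block, so $\tilde{\mathbf{U}}^{\star}$ coincides with the flat-space centroid of $\{\mathbf{U}_{\hat{\mathbf{Y}}_k}\}$ in the tangent space at $\mathbf{U}$, consistent with the centroid interpretation of Remark~\ref{Remark:Geometric}. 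I would then apply the polarization (parallelogram) identity for a centroid,
\[
d^2(\tilde{\mathbf{U}}^{\star},\mathbf{U}) = \frac{1}{K}\sum_{k=1}^K d^2(\mathbf{U}_{\hat{\mathbf{Y}}_k},\mathbf{U}) - \frac{1}{2K^2}\sum_{k\neq j} d^2(\mathbf{U}_{\hat{\mathbf{Y}}_k},\mathbf{U}_{\hat{\mathbf{Y}}_j}) + O(\epsilon^3),
\]
which splits the centroid error into individual-to-truth distances and pairwise distances among the received subspaces.

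I would then invoke Lemma~\ref{Lemma:subspaceDistanceBetweenTwoPerturbedSubspace} termwise. With $\mathbf{F}=\mathbf{U}$ and perturbation $\mathbf{\Delta}_{1,k}$, the first family becomes $d^2(\mathbf{U}_{\hat{\mathbf{Y}}_k},\mathbf{U})=2\,\mathsf{Tr}(\mathbf{\Delta}_{1,k}(\mathbf{U}^{\bot})^{\top}\mathbf{U}^{\bot}\mathbf{\Delta}_{1,k}^{\top})+O(\epsilon^3)$, while with the perturbation difference $\mathbf{\Delta}_{2,k,j}$ the second family becomes $d^2(\mathbf{U}_{\hat{\mathbf{Y}}_k},\mathbf{U}_{\hat{\mathbf{Y}}_j})=2\,\mathsf{Tr}(\mathbf{\Delta}_{2,k,j}(\mathbf{U}^{\bot})^{\top}\mathbf{U}^{\bot}\mathbf{\Delta}_{2,k,j}^{\top})+O(\epsilon^3)$, the common perturbation scale being $\epsilon=\sigma_{\text{d}}^2+\sigma_{\text{c}}^2/(2N_{\text{r}}-M-1)$, which matches the order of the residual. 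Substituting into the centroid identity and collecting the two trace sums yields the claimed structure; the remaining slack, which turns the leading-order identity into the stated upper bound, is charged to the approximation~\eqref{eq:approximation} and to the second-order truncation of the non-idempotent average $\mathbf{J}^{\prime}$.

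The hard part will be the second step. Because $\mathbf{J}^{\prime}=\frac{1}{K}\sum_k\mathbf{P}_k$ is an average of projections and is not itself a projection, its dominant $M$-eigenspace equals no single $\mathbf{P}_k$ and must be obtained from a genuine second-order eigenspace perturbation. I expect the delicate points to be (i) justifying rigorously that this eigenspace agrees, to first order, with the averaged tangent block so that the flat-space polarization identity is legitimate, and (ii) controlling the $O(\epsilon^3)$ remainder uniformly after the two successive approximations (the channel-inversion step~\eqref{eq:approximation} and the eigenspace expansion), so that the bound holds in the stated asymptotic regime of many devices and small data and channel noise.
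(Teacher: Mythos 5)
Your plan shares the paper's three ingredients --- the high-SNR approximation \eqref{eq:approximation}, the identification of $\mathbf{\Delta}_{1,k}=\mathbf{E}_k$ and $\mathbf{\Delta}_{2,k,j}=\mathbf{E}_k-\mathbf{E}_j$ with the stated matrix-Gaussian laws, and a termwise application of Lemma~\ref{Lemma:subspaceDistanceBetweenTwoPerturbedSubspace} --- but at the pivotal reduction step you go a genuinely different way, and the paper's route avoids exactly the hard part you flag. The paper never performs a second-order eigenspace perturbation of the non-idempotent average $\mathbf{J}^{\prime}$. Instead it invokes a Davis--Kahan $\sin\Theta$ variant \cite[Corollary 3.1]{NIPS2013_81e5f81d}: since $\mathbf{U}^{\top}\mathbf{U}$ has eigengap $\lambda_M^{\prime\prime}-\lambda_{M+1}^{\prime\prime}=1$, one gets the non-asymptotic bound $d(\tilde{\mathbf{U}}^{\star},\mathbf{U})\leq 2\Vert \mathbf{J}^{\prime}-\mathbf{U}^{\top}\mathbf{U}\Vert_F$; and because $d(\cdot,\cdot)$ in \eqref{eq:distance} is \emph{by definition} the Euclidean distance between the projection matrices $\mathbf{P}_k=\mathbf{U}_{\hat{\mathbf{Y}}_k}^{\top}\mathbf{U}_{\hat{\mathbf{Y}}_k}$ and $\mathbf{P}=\mathbf{U}^{\top}\mathbf{U}$, your ``polarization identity for a centroid'' holds \emph{exactly} at the level of $\Vert\frac{1}{K}\sum_k(\mathbf{P}_k-\mathbf{P})\Vert_F^2$, with no tangent-space approximation at all, yielding $d^2(\tilde{\mathbf{U}}^{\star},\mathbf{U})\leq \frac{4}{K}\sum_k d^2(\mathbf{U}_{\hat{\mathbf{Y}}_k},\mathbf{U})-\frac{2}{K^2}\sum_{k\neq j}d^2(\mathbf{U}_{\hat{\mathbf{Y}}_k},\mathbf{U}_{\hat{\mathbf{Y}}_j})$. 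Lemma~\ref{Lemma:subspaceDistanceBetweenTwoPerturbedSubspace} is then applied only to the individual and pairwise distances, which are perturbations of genuine rank-$M$ projections; your delicate points (i) and (ii) simply never arise. What each route buys: yours, if completed rigorously, gives a leading-order \emph{equality} whose constants are half those in the lemma (a tighter characterization); the paper's is looser by the Davis--Kahan constant but is elementary and non-asymptotic up to the final termwise step.

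One genuine loose spot in your write-up: the factor-of-two slack between your centroid identity (which, after applying Lemma~\ref{Lemma:subspaceDistanceBetweenTwoPerturbedSubspace}, carries coefficients $2/K$ and $1/K^2$ on the trace sums) and the lemma's claimed coefficients cannot be ``charged to the approximation \eqref{eq:approximation}'' --- it is not an approximation artifact, and the remainder from \eqref{eq:approximation} is already absorbed in $O(\epsilon^3)$. To close the gap from your side, observe that your leading term equals $2\,\mathsf{Tr}\bigl(\bar{\mathbf{E}}\,\mathbf{U^{\bot}}^{\top}\mathbf{U}^{\bot}\bar{\mathbf{E}}^{\top}\bigr)\geq 0$ with $\bar{\mathbf{E}}=\frac{1}{K}\sum_k\mathbf{E}_k$, so it may simply be doubled to recover the stated inequality; in the paper that factor instead originates in the Davis--Kahan constant. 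Finally, note that the coefficients $4/M$ and $2/M^2$ displayed in the lemma appear as $4/K$ and $2/K^2$ both in the appendix derivation and in the proof of Theorem~\ref{The:upperBoundMSSDEwoCSI}, so your use of $K$ is consistent with how the lemma is actually deployed.
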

Similarly, when both data SNR and channel SNR are large, i.e. $\epsilon\rightarrow 0$, the residual term converges to zero. Then, taking an expectation over channel matrices and the upper bound of MSSD in the case of blind MIMO receivers can be obtained.
\begin{figure}
	\centering
	\includegraphics[scale=0.57]{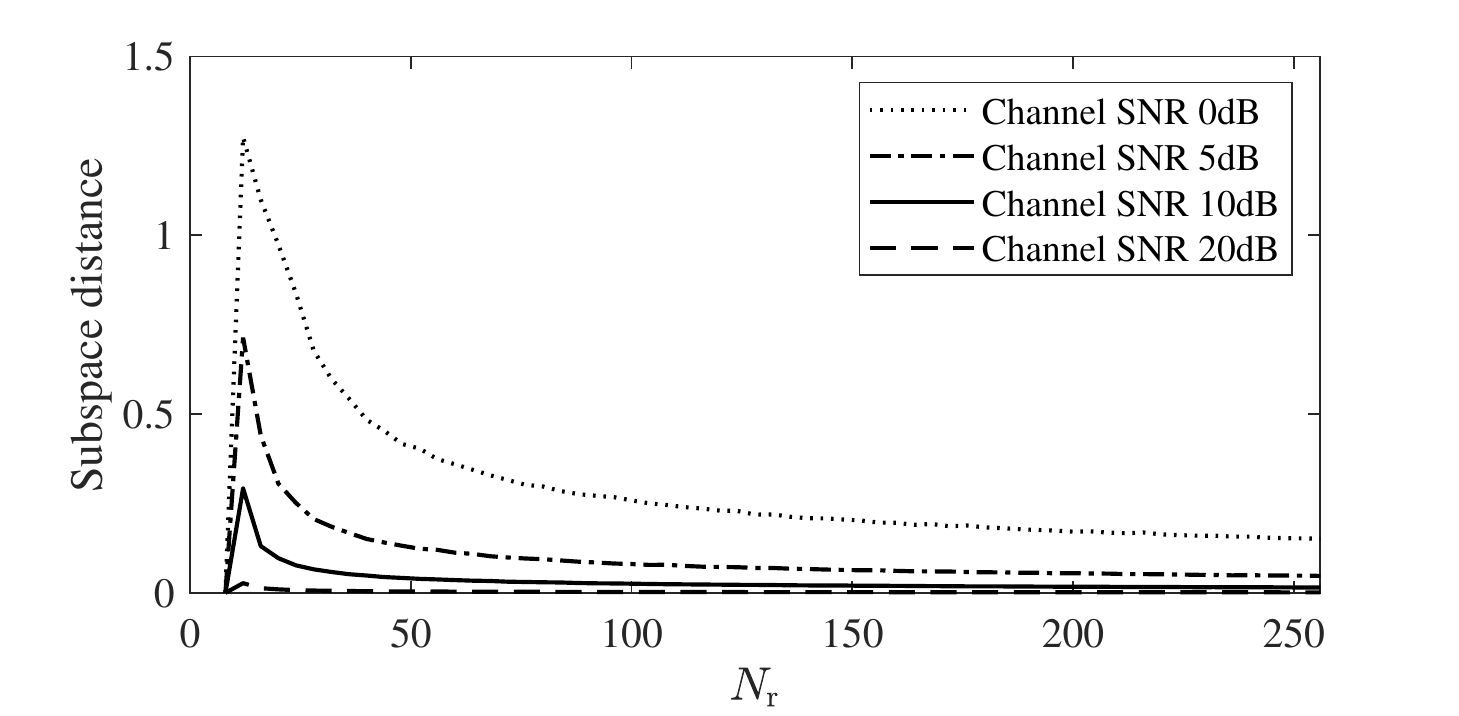}
	\caption{{Subspace distance between the spaces spanned by PCs $\mathbf{U}_{\hat{\mathbf{Y}}_k}$ and $\mathcal{S}_M\left((\mathbf{U} + \mathbf{Z}_k + \hat{\mathbf{H}}_k^+\hat{\mathbf{W}}_k)^{\top}(\mathbf{U} + \mathbf{Z}_k + \hat{\mathbf{H}}_k^+\hat{\mathbf{W}}_k)\right)$ with  $N_{\text{t}} = M = 8$.} }
	\label{fig:distance}
\end{figure}

\begin{Theorem}[Error of Proposed Blind PC Estimation]\label{The:upperBoundMSSDEwoCSI}\emph{
Given many devices ($K\rightarrow \infty$) and small channel and data noise ($\sigma_{\text{d}},\sigma_{\text{c}}\rightarrow 0$), the MSSD of the blind PC estimator in Theorem~\ref{Theorem:blind} can be asymptotically upper bounded as
\begin{equation}
    d_{\text{ms}}(\tilde{\mathbf{U}}^{\star},\mathbf{U})\leq \frac{4M(N-M)}{K}\left(\sigma_{\text{d}}^2 + \frac{\sigma_{\text{c}}^2}{2N_{\text{r}}-M-1}\right).
\end{equation}
for a MIMO system (i.e., $2N_r -M > 1$). }
\end{Theorem}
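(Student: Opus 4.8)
The plan is to begin from the conditional bound in Lemma~\ref{Lemma:EwoCSIdistance} and take the expectation over all remaining randomness, namely the data noise $\{\mathbf{Z}_k\}$, the receiver noise $\{\mathbf{W}_k\}$, and the channels $\{\hat{\mathbf{H}}_k\}$. In the asymptotic regime $K\to\infty$ and $\sigma_{\text{d}},\sigma_{\text{c}}\to 0$, the residual $O(\epsilon^3)$ with $\epsilon = \sigma_{\text{d}}^2 + \frac{\sigma_{\text{c}}^2}{2N_{\text{r}}-M-1}$ is negligible relative to the leading $O(\epsilon)$ contribution and may be dropped, so the remaining task is to evaluate the two trace sums in expectation.

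First I would condition on the channels and integrate out the noise. Both $\mathbf{\Delta}_{1,k}$ and $\mathbf{\Delta}_{2,k,j}$ are zero-mean matrix-normal with column covariance $\mathbf{I}_N$, so for any fixed $N\times N$ matrix $\mathbf{A}$ one has $\mathsf{E}[\mathsf{Tr}(\mathbf{\Delta}\mathbf{A}\mathbf{\Delta}^{\top})] = \mathsf{Tr}(\mathbf{A})\,\mathsf{Tr}(\mathbf{\Sigma}_{\text{row}})$. Taking $\mathbf{A} = \mathbf{U^{\bot}}^{\top}\mathbf{U}^{\bot}$, the rank-$(N-M)$ projector onto the complement of $\mathsf{span}(\mathbf{U})$ so that $\mathsf{Tr}(\mathbf{A}) = N-M$, each trace term collapses to $(N-M)$ times the trace of the corresponding row covariance. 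Next I would take the expectation over the channels, using $\hat{\mathbf{H}}_k^+(\hat{\mathbf{H}}_k^+)^{\top} = (\hat{\mathbf{H}}_k^{\top}\hat{\mathbf{H}}_k)^{-1}$ together with the fact that $\hat{\mathbf{H}}_k$ is $2N_{\text{r}}\times M$ with i.i.d.\ $\mathcal{N}(0,1)$ entries, so $\hat{\mathbf{H}}_k^{\top}\hat{\mathbf{H}}_k$ is a central Wishart with $2N_{\text{r}}$ degrees of freedom whose inverse has mean $\frac{1}{2N_{\text{r}}-M-1}\mathbf{I}_M$ whenever $2N_{\text{r}}-M>1$. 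Hence $\mathsf{E}_{\mathbf{H}}[\mathsf{Tr}((\hat{\mathbf{H}}_k^{\top}\hat{\mathbf{H}}_k)^{-1})] = \frac{M}{2N_{\text{r}}-M-1}$, which turns the row-covariance traces into $\mathsf{E}[\mathsf{Tr}(\mathbf{\Delta}_{1,k}\mathbf{A}\mathbf{\Delta}_{1,k}^{\top})] = M(N-M)\big(\sigma_{\text{d}}^2 + \frac{\sigma_{\text{c}}^2}{2N_{\text{r}}-M-1}\big)$ and, since $\mathbf{\Delta}_{2,k,j}$ carries twice the data-noise variance plus the sum of two independent channel-inversion covariances, exactly twice that value.

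Finally I would substitute these per-term expectations back into Lemma~\ref{Lemma:EwoCSIdistance} and combine the two sums. The single sum contributes $K$ identical terms while the pairwise sum contributes $K(K-1)$ ordered-pair terms; after multiplying by their respective normalizations the two contributions partially cancel, so that the parts growing with $K$ annihilate and only the $O(1/K)$ remainder survives. Collecting this term gives $d_{\text{ms}}(\tilde{\mathbf{U}}^{\star},\mathbf{U}) \leq \frac{4M(N-M)}{K}\big(\sigma_{\text{d}}^2 + \frac{\sigma_{\text{c}}^2}{2N_{\text{r}}-M-1}\big)$, precisely twice the coherent bound of Theorem~\ref{The:upperBoundMSSDEwCSI} and consistent with the factor-of-two relationship advertised earlier.

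I expect the main obstacle to be the bookkeeping in this last combination step. Each of the positive single sum and the negative pairwise sum is individually large in $K$ (the pairwise sum scaling as $K^2$ before normalization), so the $1/K$ scaling law emerges only through an exact cancellation of the leading contributions; aligning the combinatorial counts ($K$ versus $K(K-1)$) with the normalizing constants so that the $O(1)$ pieces vanish is the delicate part. By comparison, the channel expectation via the inverse-Wishart mean and the matrix-normal trace identity are routine, and the negligibility of the $O(\epsilon^3)$ residual as $K\to\infty$ and $\sigma_{\text{d}},\sigma_{\text{c}}\to 0$ follows from the definition of $\epsilon$ exactly as in the coherent analysis.
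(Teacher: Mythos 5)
Your proposal is correct and takes essentially the same route as the paper's proof in Appendix~\ref{proof:upperBoundMSSDEwoCSI}: both take the expectation of the bound in Lemma~\ref{Lemma:EwoCSIdistance} with the $O(\epsilon^3)$ residual dropped, evaluate each trace via the matrix-normal identity with $\mathsf{Tr}\left(\mathbf{U^{\bot}}^{\top}\mathbf{U}^{\bot}\right)=N-M$ and the inverse-Wishart mean $\mathsf{E}\left[(\hat{\mathbf{H}}_k^{\top}\hat{\mathbf{H}}_k)^{-1}\right]=\frac{1}{2N_{\text{r}}-M-1}\mathbf{I}_M$ (valid for $2N_{\text{r}}-M>1$), and obtain the $1/K$ law from the exact cancellation between the $\frac{4}{K}$-weighted single sum ($K$ terms) and the $\frac{2}{K^2}$-weighted pairwise sum ($K(K-1)$ ordered pairs, each contributing twice the single-term value). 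The only cosmetic difference is that you take the per-term channel expectations before combining the two sums while the paper combines first; since every step is linear, the algebra is identical.
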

\begin{proof}
See Appendix~\ref{proof:upperBoundMSSDEwoCSI}.
\end{proof}
Using the definition of $\sigma_{\text{d}}$ and $\sigma_{\text{c}}$ yields the following result:
\begin{equation}
d_{\text{ms}}(\tilde{\mathbf{U}}^{\star},\mathbf{U}) \leq \frac{4M^2(N-M)}{KN}\left[\gamma_{\text{d}}^{-1} + \frac{(1+M\gamma_{\text{d}}^{-1})\gamma_{\text{c}}^{-1}}{2N_{\text{r}}-M-1}\right]\nonumber.
\end{equation}
\begin{Remark} [Coherent vs. Blind] \label{Remark:comparison}
\emph{Comparing Theorems~\ref{The:upperBoundMSSDEwCSI} and~\ref{The:upperBoundMSSDEwoCSI}, the error bounds for both the coherent and blind PC estimators are observed to have the same form except for the difference by  a multiplicative factor of two. This shows that despite the lack of receive CSI, the latter can achieve similar performance as its coherent counterpart mainly due to the exploitation of the channel-rotation-invariance of analog  subspace transmission in~\eqref{eq:model}. On the other hand, the said multiplicative factor represents the additional gain in estimation error suppression 
via regularized  channel inversion that alleviates the effects of  data and receiver noise.}
\end{Remark}

\section{Extension to PCs with Arbitrary Dimensions} \label{Sec:extension}
In the previous sections, we assume  $N_{\text{t}} = M$ such that  each matrix symbol representing  local PCs can be directly transmitted over a transmit array. The results can be generalized by relaxing   the assumption as follows. 

First, consider the case of $N_{\text{t}} \geq M$. One  can introduce an $N_{\text{t}}$-by-$M$ orthonormal matrix $\mathbf{X}$ that maps an $M$-dimensional symbol to the $N_t$-element array, yielding the following communication model: 
\begin{align}
	\mathbf{Y}_k
	& = \mathbf{H}_k\mathbf{X}(\mathbf{U} + \mathbf{Z}_k) + \mathbf{W}_k \nonumber\\
	& \overset{\triangle}{=}\tilde{\mathbf{H}}_k(\mathbf{U} + \mathbf{Z}_k) + \mathbf{W}_k \nonumber,
\end{align}
where $\tilde{\mathbf{H}}_k \overset{\triangle}{=} \mathbf{H}_k\mathbf{X}$. With transmit CSI, $\mathbf{X}$ can be generated isotropically. As a result, $\tilde{\mathbf{H}}_k$  follows an isotropic matrix Gaussian $\mathcal{MN}(\mathbf{0},\mathbf{I}_{N_{\text{r}}},\mathbf{I}_M)$ and thus the results in the preceding sections remain unchanged.

Next, consider the case of $M > N_{\text{t}}$. Local PCs at a device have to be transmitted as multiple matrix symbols. Let a local estimate, say $\mathbf{U}_{k}$, be partitioned into $T$ $M' \times N$ component matrices with $M' \leq  N_{\text{t}}$ as follows
\begin{equation}
	\mathbf{U}_{k} = [\mathbf{U}_{k,1}^{\top},\mathbf{U}_{k,2}^{\top},...,\mathbf{U}_{k,T}^{\top}]^{\top} \nonumber.
\end{equation}
Since each component matrix contains $M'$ columns of the orthonormal matrix of $\mathbf{U}_{k}$, it is also orthonormal. By introducing a mapping matrix $\mathbf{X}'$ similarly as in the preceding case, each component matrix, say $\mathbf{U}_{k,t}$, can be transmitted as a single matrix symbol, resulting in $\mathbf{Y}_{k,t} = \mathbf{H}_{k,t}\mathbf{X}'\mathbf{U}_{k,t} + \mathbf{W}_{k,t}$. Obviously, if $M' = N_t$, the mapping $\mathbf{X}'$ is unnecessary and the above model reduces to $\mathbf{Y}_{k,t} = \mathbf{H}_{k,t}\mathbf{U}_{k,t} + \mathbf{W}_{k,t}$. At the receiver, multiple  received component matrices are combined into a single one in a similar way as in 
 \eqref{eq:model}: 
\begin{align}
	\mathbf{Y}_{k}
	& = [\mathbf{Y}_{k,1}^{\top},\mathbf{Y}_{k,2}^{\top},...,\mathbf{Y}_{k,T}^{\top}]^{\top} \nonumber \\
	& = \mathsf{diag}\left(\mathbf{H}_{k,1},...,\mathbf{H}_{k,T}\right)
	\mathbf{U}_{k} + [\mathbf{W}_{k,1}^{\top},...,\mathbf{W}_{k,T}^{\top}]^{\top} \label{eq:combination2}.
\end{align}
Therefore, we can obtain a communication model~\eqref{eq:combination2} differing from that  in~\eqref{eq:model} only in the MIMO channel which is block diagonal in the current case. For coherent PC estimation, it is straightforward to show that the design modified using the generalized channel model retains its unbiasedness and ML optimality. As for blind PC estimation,  $\mathsf{diag}(\mathbf{H}_{k,1},...,\mathbf{H}_{k,T})$ keeps the isotropic property as before that guarantees unbiased estimation. Furthermore, to modify the design and analysis,  the expectation $\mathsf{E}_{\mathbf{H}_k}\left[\hat{\mathbf{H}}_k^+(\hat{\mathbf{H}}_k^{+})^{\top}\right]$ is replaced with  
\begin{align*}
	&\mathsf{E}_{\mathbf{H}_k}\left[\hat{\mathbf{H}}_k^+(\hat{\mathbf{H}}_k^{+})^{\top}\right] \\
	&= \mathsf{diag}\left(\mathsf{E}_{\mathbf{H}_{k,1}}[\hat{\mathbf{H}}_{k,1}^{+}(\hat{\mathbf{H}}_{k,1}^{+})^{\top}],...,\mathsf{E}_{\mathbf{H}_{k,T}}[\hat{\mathbf{H}}_{k,T}^{+}(\hat{\mathbf{H}}_{k,T}^{+})^{\top}]\right),
\end{align*}
where $\hat{\mathbf{H}}_{k,t}$ is derived from $\mathbf{H}_{k,t}$ using  the method in \eqref{eq:newModel}. This modification, according to the derivation process, only changes the coefficients of $\sigma_{\text{c}}^2$ in both Theorem~\ref{The:upperBoundMSSDEwCSI} and~\ref{The:upperBoundMSSDEwoCSI}. Therefore, one can conclude that the scaling laws of estimation error are preserved during the extension.

\section{Simulation Results}\label{Sec:experiments}
The default simulation settings are as follows. The PC dimensions are taken to be $M\times N=8\times 200$, following common settings in the DPCA literature (see, e.g.~\cite{JF2019estimation2019,VC2020Arxiv}). The array sizes at the edge devices and the server are $N_{\text{t}} = 8$ and $N_{\text{r}} = 16$, respectively, which is also one configuration adopted in the 3GPP standard.
\begin{figure*}[t]
    \centering
    \subfigure[$\gamma_{\text{d}} = \gamma_{\text{c}}=5$dB]{\label{subfig:AnavsDig_5dB}\includegraphics[width=0.4\textwidth]{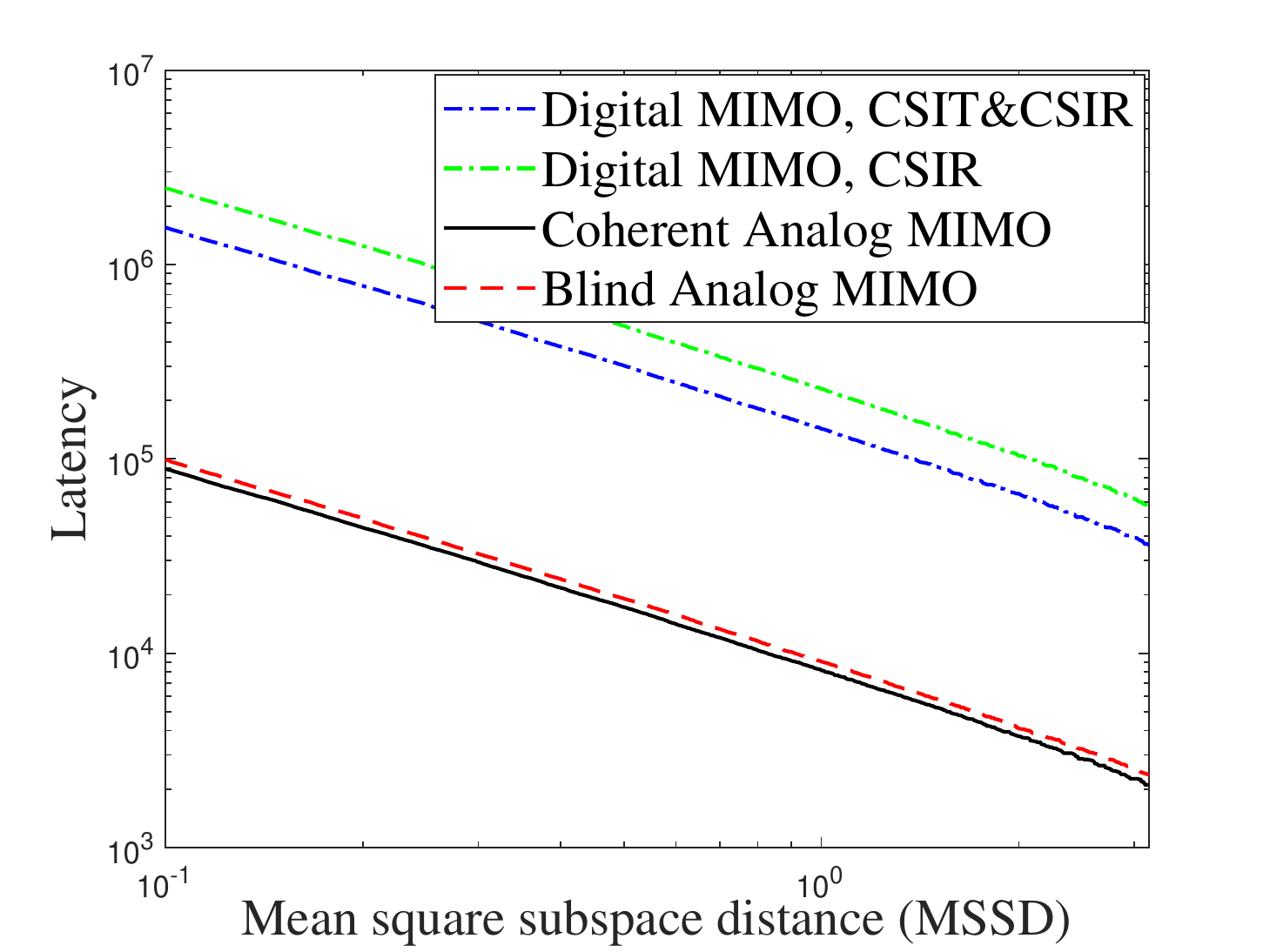}}
    \subfigure[$\gamma_{\text{d}} = \gamma_{\text{c}}=10$dB]{\label{subfig:AnavsDig_10dB}\includegraphics[width=0.4\textwidth]{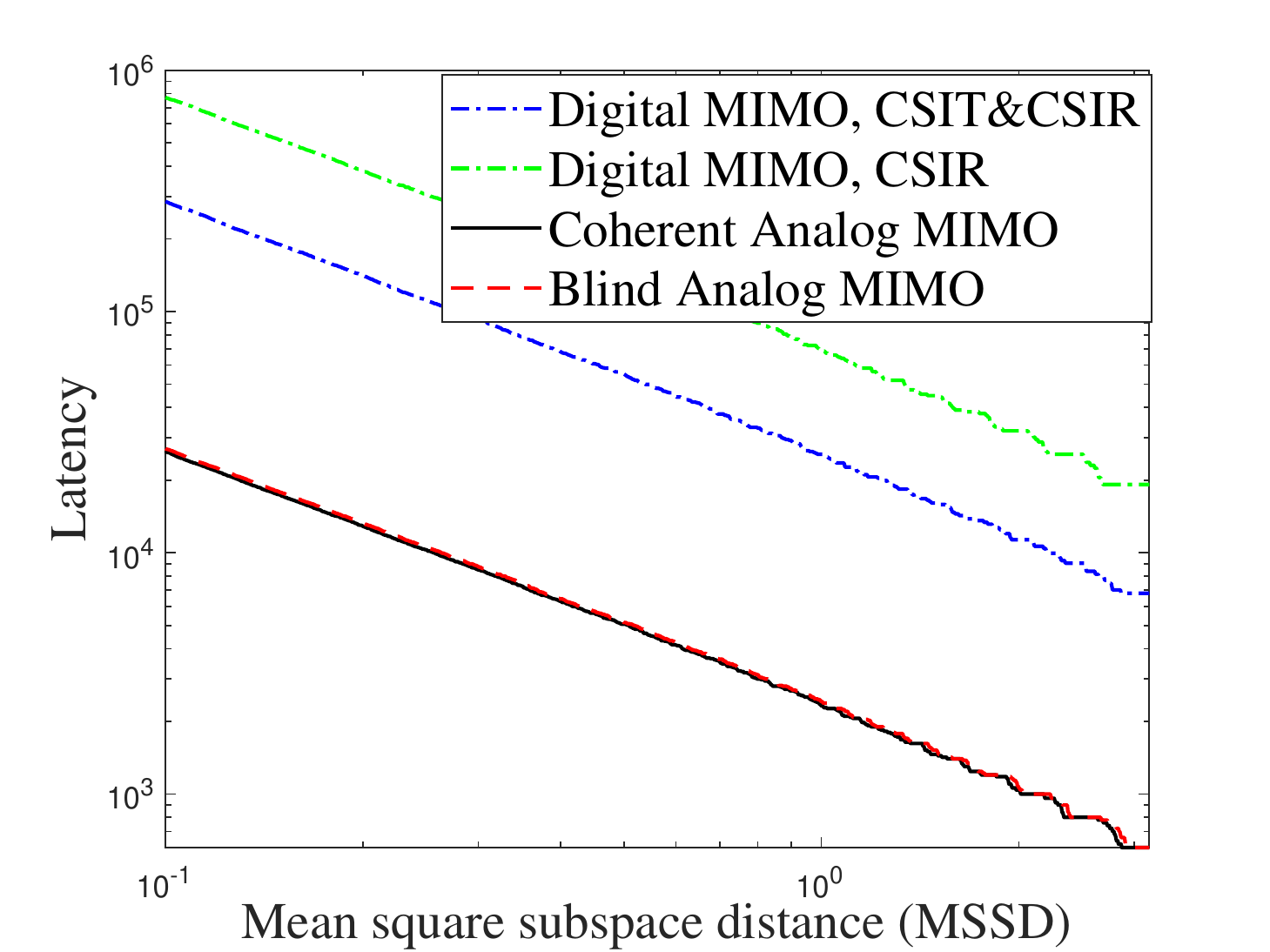}}
	\caption{Latency performance comparison between digital transmission and analog transmission.}
	\label{fig:digitalvsanalog}
\end{figure*}

\begin{figure*}[t]
    \centering
    \subfigure[Coherent estimation]{\label{subfig:coherent_device}\includegraphics[width=0.26\textwidth]{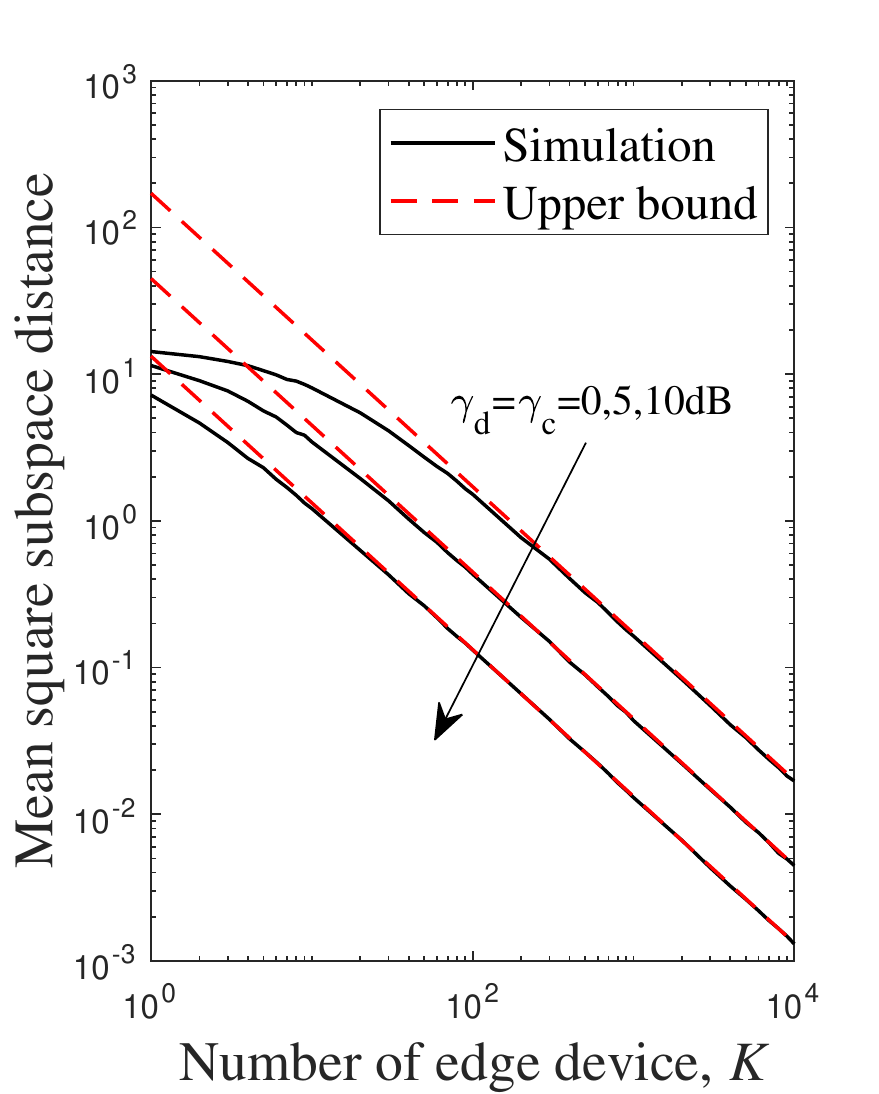}}
    \subfigure[Blind estimation]{\label{subfig:blind_device}\includegraphics[width=0.26\textwidth]{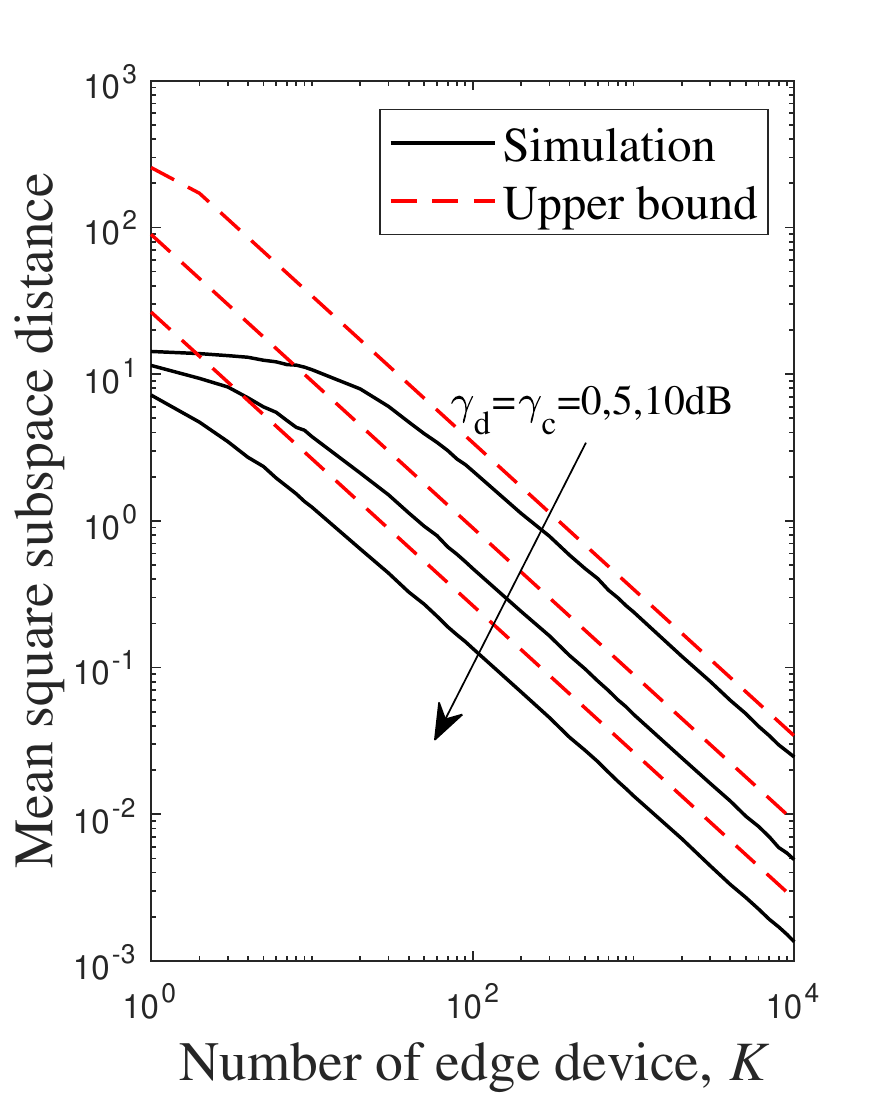}}
    \subfigure[Comparison]{\label{subfig:comparison_device}\includegraphics[width=0.26\textwidth]{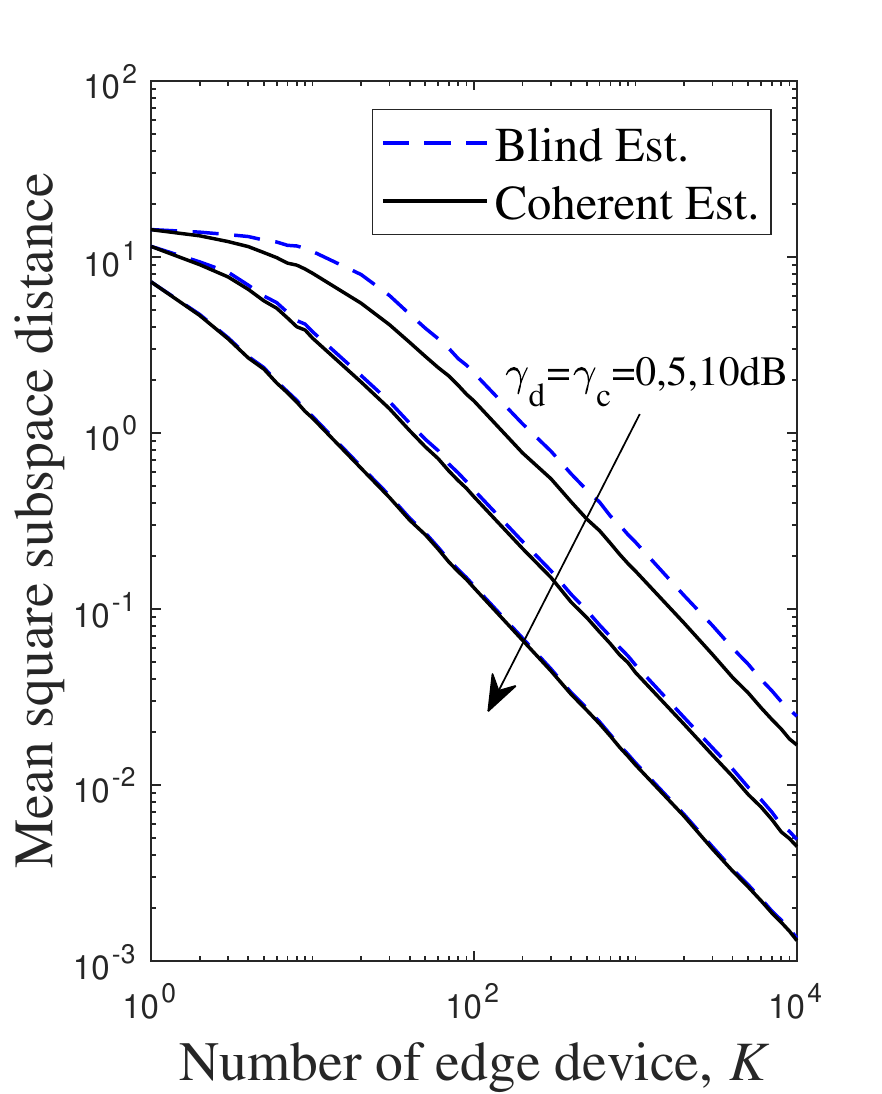}}
    \caption{PC estimation  performance comparisons for varying numbers of devices between  simulation and analysis for (a) coherent estimation and (b) blind estimation, and (c) comparison between the two designs.}
    \label{fig:edge_device_number}
\end{figure*}

\begin{figure*}[t]
    \centering
    \subfigure[Coherent estimation]{\label{subfig:coherent_SNR}\includegraphics[width=0.4\textwidth]{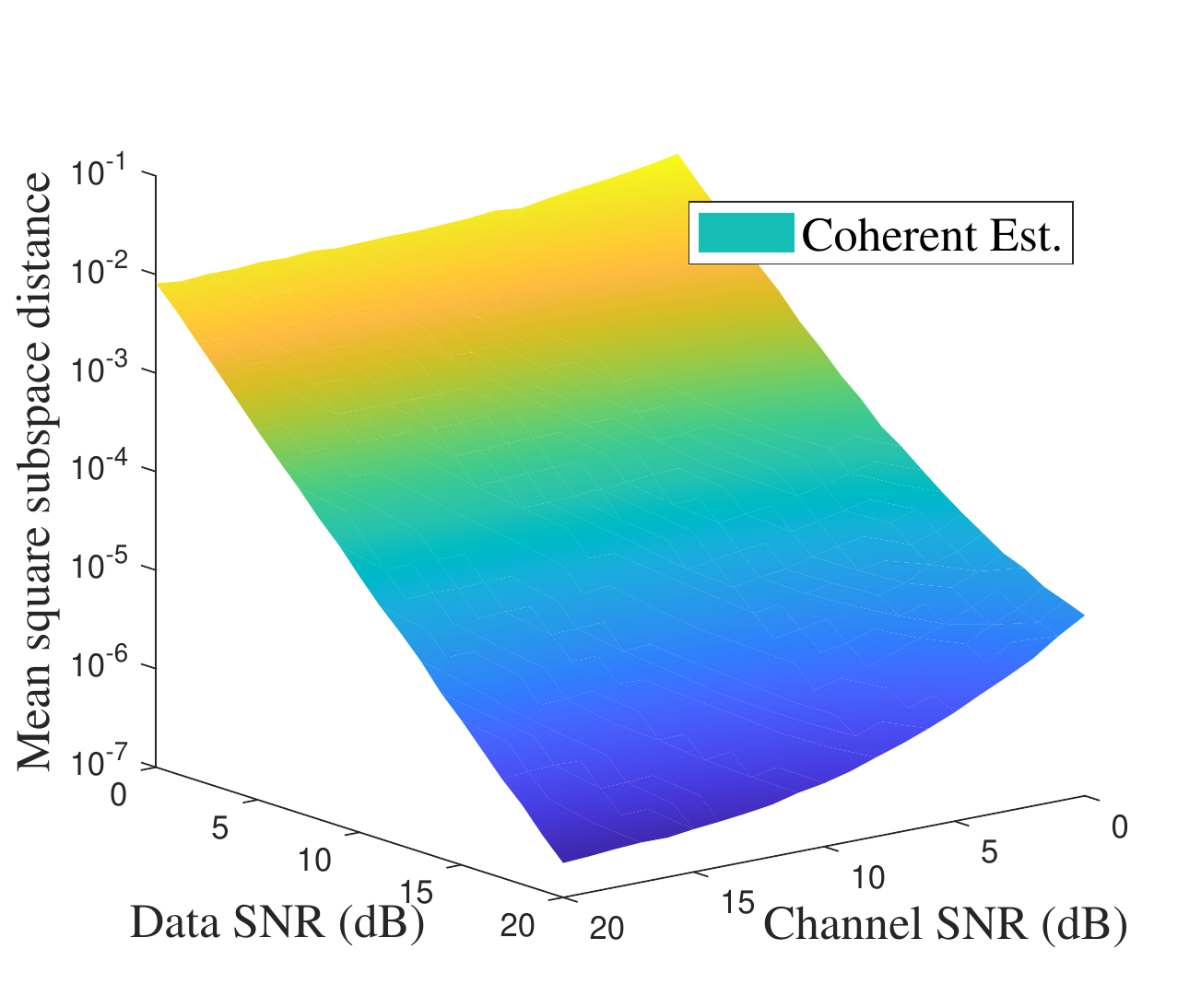}}
    \subfigure[Blind estimation]{\label{subfig:blind_SNR}\includegraphics[width=0.4\textwidth]{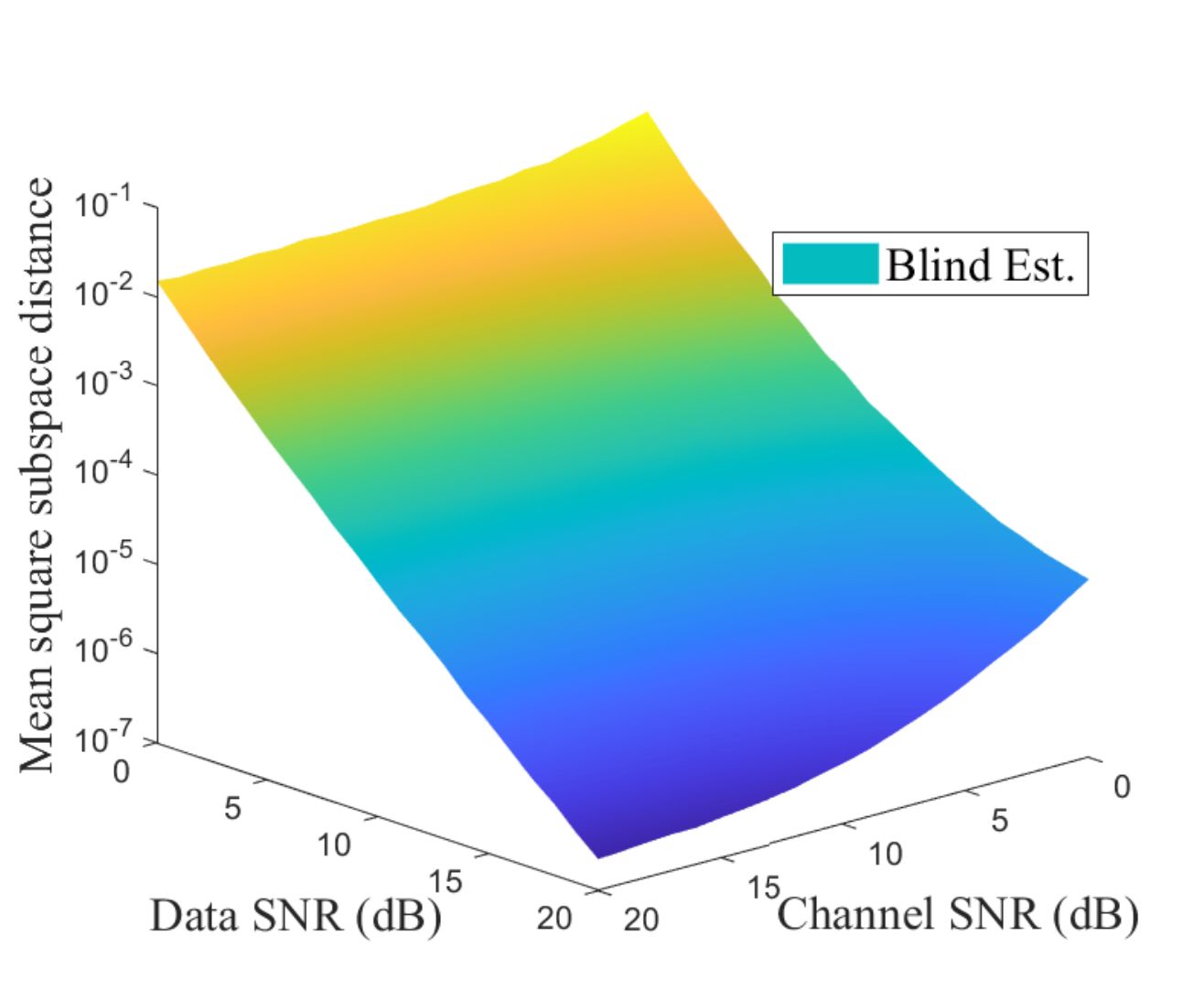}}
	\caption{Mean square subspace distance versus data and channel SNR.}
	\label{fig:MSSDvsSNR}
\end{figure*}
First, in Fig.~\ref{fig:digitalvsanalog}, we benchmark the latency performance of the proposed analog-MIMO enabled DPCA against the conventional designs using digital MIMO for varying target MSSD. The DPCA latency refers the number of channel uses (or symbol slots) needed to achieve the target MSSD. The benchmark digital-MIMO schemes  reliably upload local PCs to the server using $16$-bit quantization,  M-ary Quadrature Amplitude Modulation (M-QAM) modulation  and channel coding (i.e., convolution coding with rate 0.5~\cite{QingwenAMC2004}); the uploaded local PCs are aggregated to yield the global PCs using the DPCA scheme with uniform aggregation in \cite{JF2019estimation2019}. Two schemes are considered. The first scheme with only receive CSI uses  a linear ZF receiver,  fixes the M-QAM modulation and applies uniform transmission-power allocation due to the lack of transmit CSI~\cite{UniformPL2003}. The other scheme with both the transmit and receive CSI employs eigen-mode precoding and receiver  as well as water-filling power allocation and adaptive M-QAM modulation jointly optimized to maximize the spectrum efficiency~\cite{Telatar1999,Goldsmith2001}. It can be observed from Fig.~\ref{fig:digitalvsanalog} that although analog MIMO without coding is directly exposed to perturbation by receiver noise, its advantage of ultra-low latency greatly outweighs the disadvantage. Specifically, analog MIMO achieves more than a $10$-times latency reduction over its two digital counterparts  across  different SNRs. It is worth recalling  that the robustness of analog MIMO  is  due to the insensitivity  of the transmitted subspaces towards channel rotation and that the aggregation operation of DPCA suppresses  receiver noise. Moreover, the latency of different schemes are found to decrease linearly as the target MSSD grows while their latency ratios remain approximately constant over the MSSD range.  

Second, the effect of device population on the PC estimation is investigated. To this end, the curves of MSSD versus the number of devices $K$, are plotted in Fig.~\ref{fig:edge_device_number}, for which  the data and channel  SNRs are set as $\gamma_{\text{d}} = \gamma_{\text{c}} = \{0, 5, 10\}$ dB. The scaling laws derived in Theorems~\ref{Theorem:coherent} and~\ref{Theorem:blind} are validated using simulation results in Figs.~\ref{subfig:coherent_device} and~\ref{subfig:blind_device}, for the designs of coherent and blind PC estimation, respectively. The performance comparison between the two designs is provided in Fig.~\ref{subfig:comparison_device}. We observe that they follow the same linear scaling law for coherent and blind estimation, namely that MSSD decreases linearly as $K$ grows. Varying the SNRs shifts the curves but does not change their slopes. Next, one can observe from Figs.~\ref{subfig:coherent_device} and~\ref{subfig:blind_device} that the performance analysis accurately quantifies the linear scaling and its rate. The  derived MSSD bound is tight for the case of coherent estimation and a sufficiently large number of devices (e.g., $K\geq 10$) while for the other case of blind estimation the bound is not as tight but satisfactory. More importantly, the performance comparison in Fig.~\ref{subfig:comparison_device} confirms a main conclusion from the  analysis (see Remark~\ref{Remark:comparison}) that blind PC estimation  performs only with a small gap from that of the coherent counterpart especially at sufficiently high SNRs, advocating fast DPCA without CSI. 

Finally, the effects of data/channel SNRs on the performance of  analog MIMO are characterized in Fig.~\ref{fig:MSSDvsSNR} where the curves of  MSSD versus data/channel SNR are plotted for $K=1000$ devices. In the high channel-SNR regime, one can observe that increasing the data SNR reduces the MSSD of both the coherent and blind estimators at an approximately \emph{linear rate}. On the other hand, with the data SNR fixed, the MSSD is relatively insensitive to channel SNR especially in the low data-SNR regime. This is due to the fact that the diversity gain of multiple antenna systems suppressing receiver noise and large data noise becomes dominant, which is consistent with our analysis results shown in Theorem~\ref{The:upperBoundMSSDEwCSI} and~\ref{The:upperBoundMSSDEwoCSI}.

%
\section{Concluding Remarks}\label{Sec:conclusion}
We have presented an analog MIMO communication framework for supporting ultra-fast DPCA in a wireless system. Besides spatial multiplexing inherent in MIMO, the proposed approach dramatically reduces the DPCA latency by avoiding quantization, coding, and CSI feedback. Furthermore, by analyzing and comparing the performance of analog MIMO with and without receive CSI, we have found that even channel estimation may not be necessary as the resultant performance gain is small. This advocates blind analog MIMO to further accelerate DPCA. Compared with digital MIMO, more than a tenfold latency reduction has been demonstrated for analog MIMO in achieving the same  DPCA error performance. If aggregating data from many devices, the time sharing protocol may incur too high latency. Two solutions warranting further investigation are to use AirAggregate~\cite{KYang2020,GXZhuonebit2020,GXZhuAirComp2019} techniques and to equip the server with a large-scale array that can be used to resolve multiple spatially multiplexed transmissions from the devices~\cite{Erik2022SPAWC}. 

Building on findings in this work, we believe that analog communication will play an important role in machine learning and data analytics at the network edge. This requires new techniques to properly  control the effects of channel distortion and noise so that they can be masked by those of data noise. Moreover, the new techniques should be customized for the algorithms of specific applications (e.g., the current   principal-component estimation for DPCA) to optimize the system efficiency and performance. 
%

\appendix

\subsection{Proof of Lemma~\ref{Lemma:EwCSIdistance}}
\label{proof:EwCSIdistance}
The orthonormal $\tilde{\mathbf{U}}^{\star}$ given in Theorem 1 represents the principal eigenspace of $\mathbf{J}$, i.e. $\tilde{\mathbf{U}}^{\star} = \mathcal{S}_M\left({\mathbf{J}}^{\top}\mathbf{J}\right)$, where $\mathbf{J}=\frac{1}{K}\mathbf{\Sigma}_{\mathbf{E}}\mathbf{U} + \mathbf{E}$ with $\mathbf{\Sigma}_{\mathbf{E}} = \sum_{k=1}^{K} \left[\sigma_{\text{d}}^2\mathbf{I}_M+\sigma_{\text{c}}^2\hat{\mathbf{H}}_k^+(\hat{\mathbf{H}}_k^{+})^{\top}\right]^{-1}$.
Due to $\mathbf{Z}_k + \hat{\mathbf{H}}_k^+ \hat{\mathbf{W}}_k \sim \mathcal{MN}[\mathbf{0},\sigma_{\text{d}}^2\mathbf{I}_M+\sigma_{\text{c}}^2\hat{\mathbf{H}}_k^+(\hat{\mathbf{H}}_k^{+})^{\top},\mathbf{I}_N],\forall k$, $\mathbf{E}$ follows from a matrix Gaussian distribution $\mathcal{MN}[\mathbf{0}, \mathbf{\Sigma}_{\mathbf{E}},\mathbf{I}_N]$. Note that $\mathbf{\Sigma}_{\mathbf{E}}$ is a positive definite Hermitian matrix and there always exists an inverse matrix $\mathbf{\Sigma}_{\mathbf{E}}^{-1}$. Then, the rotation-invariant property of subspace in \eqref{eq:rotation} gives
\begin{align}
    \tilde{\mathbf{U}}^{\star} 
    & = \mathcal{S}_M\left((\mathbf{\Sigma}_{\mathbf{E}}\mathbf{U} + \mathbf{E})^{\top}(\mathbf{\Sigma}_{\mathbf{E}}\mathbf{U} + \mathbf{E})\right) \nonumber\\
    & = \mathcal{S}_M\left((\mathbf{U} + \mathbf{\Sigma}_{\mathbf{E}}^{-1}\mathbf{E})^{\top}(\mathbf{U} + \mathbf{\Sigma}_{\mathbf{E}}^{-1}\mathbf{E})\right)\nonumber,
\end{align}
where $\mathbf{\Sigma}_{\mathbf{E}}^{-1}\mathbf{E}$ is matrix Gaussian $\mathcal{MN}[\mathbf{0}, \mathbf{\Sigma}_{\mathbf{E}}^{-1},\mathbf{I}_N]$. Then, based on Lemma~\ref{Lemma:subspaceDistanceBetweenTwoPerturbedSubspace}, set $\epsilon = C_1\Vert \mathbf{\Sigma}_{\mathbf{E}}^{-1}\Vert_2$ with a constant $C_1$, $\mathbf{E}_1 = \frac{1}{\epsilon}\mathbf{\Sigma}_{\mathbf{E}}^{-1}\mathbf{E}$ and $\mathbf{E}_2 = \mathbf{O}$ and we can get the final result.

\subsection{Proof of Theorem~\ref{The:upperBoundMSSDEwCSI}}\label{proof:upperBoundMSSDEwCSI}
Based on Lemma~\ref{Lemma:EwCSIdistance}, with the second residual term omitted, the MSSD is approximated as
\begin{align}
	d_{\text{ms}}(\tilde{\mathbf{U}}^{\star},\mathbf{U})
	& \approx \mathsf{E}\left[2\mathsf{Tr}\left( \mathbf{\Delta}_{\mathbf{E}}\mathbf{U^{\bot}}^{\top}\mathbf{U}^{\bot}{\mathbf{\Delta}_{\mathbf{E}}}^{\top}\right)\right],\nonumber\\
	& = 2(N-M)\mathsf{E}_{\{\mathbf{H}_k\}_{k=1}^K}\left[\mathsf{Tr}\left(\mathbf{\Sigma}_{\mathbf{E}}^{-1}\right)\right], \nonumber\\
	& \leq 2(N-M)\mathsf{Tr}\left(\left(\sum_{k=1}^{K} \mathsf{E}_{\mathbf{H}_k}[\mathbf{A}_k]^{-1}\right)^{-1}\right)\nonumber,
\end{align}
where we define $\mathbf{A}_k=\sigma_{\text{d}}^2\mathbf{I}_M+\sigma_{\text{c}}^2\hat{\mathbf{H}}_k^+(\hat{\mathbf{H}}_k^{+})^{\top}$ and the last inequality comes from Jensen's inequality and the fact that $\mathsf{Tr}\left((\sum_{k=1}^K \mathbf{A}_k^{-1})^{-1} \right)$ is concave over $\{\mathbf{A}_k\}_{k=1}^K$~\cite{Concave2004}. Since $\hat{\mathbf{H}}_k^+(\hat{\mathbf{H}}_k^+)^{\top} = (\hat{\mathbf{H}}_k^{\top}\hat{\mathbf{H}}_k)^{-1}$ is an inverse Wishart matrix, given $2N_{\text{r}} - M >1$, we have the first moment $\mathsf{E}[(\hat{\mathbf{H}}_k^{\top}\hat{\mathbf{H}}_k)^{-1}] = \frac{1}{2N_{\text{r}}-M-1}\mathbf{I}_M$~\cite{Dietrich1988}. Using the above results yields the following upper bound:
\begin{align*}
	&d_{\text{ms}}(\tilde{\mathbf{U}}^{\star},\mathbf{U}) \\
	& \leq 2(N-M)\mathsf{Tr}\left(\left(\sum_{k=1}^{K} \left(\sigma_{\text{d}}^2+\frac{\sigma_{\text{c}}^2}{2N_{\text{r}}-M-1}\right)^{-1}\mathbf{I}_M\right)^{-1}\right),\\
	& = \frac{2M(N-M)}{K}\left(\sigma_{\text{d}}^2+\frac{\sigma_{\text{c}}^2}{2N_{\text{r}}-M-1}\right).
\end{align*} 

\subsection{Proof of Lemma~\ref{Lemma:likelihoodlowerbound}}\label{proof:likelihoodlowerbound}
In $\mathcal{L}_{\text{lb}}(\mathbf{U;\mathbf{Y}})$, the third term is constant. For the second part, we have 
\begin{align*}
	&\mathsf{det}\left(\sigma_{\text{c}}^2\mathbf{I}_N+(\mathbf{U} + \mathbf{Z}_k)^{\top}(\mathbf{U} + \mathbf{Z}_k)\right) \\
	&\overset{(a)}{=} \mathsf{det}\left(\sigma_{\text{c}}^2\mathbf{I}_M+(\mathbf{U} + \mathbf{Z}_k)(\mathbf{U} + \mathbf{Z}_k)^{\top}\right)\\
	& = \mathsf{det}\left((\sigma_{\text{c}}^2+1)\mathbf{I}_M+\mathbf{Z}_k\mathbf{U}^{\top}+ \mathbf{U}\mathbf{Z}_k^{\top} + \mathbf{Z}_k\mathbf{Z}_k^{\top}\right),
\end{align*}
where (a) follows from Sylvester's determinant identity. The elements of $\mathbf{Z}_k$ are i.i.d. Gaussian $\mathcal{N}(0,\sigma_{\text{d}}^2)$, which means that $\mathbf{Z}_k$ has an isotropic matrix Gaussian distribution $\mathcal{MN}(\mathbf{0},\sigma_{\text{d}}^2\mathbf{I}_M,\mathbf{I}_N)$. There is a property that given $\mathbf{X}\sim\mathcal{MN}(\mathbf{0},\mathbf{\Sigma}_1,\mathbf{\Sigma}_2)$ and $\mathbf{Y}= \mathbf{M} + \mathbf{A}\mathbf{X}\mathbf{B}$, then $\mathbf{Y}\sim\mathcal{MN}(\mathbf{M},\mathbf{A}\mathbf{\Sigma}_1\mathbf{A}^{\top},\mathbf{B}^{\top}\mathbf{\Sigma}_2\mathbf{B})$. Therefore, both $\mathbf{Z}_k\mathbf{U}^{\top}$ and $\mathbf{U}\mathbf{Z}_k^{\top}$ are matrix Gaussian distributions $\mathcal{MN}(\mathbf{0}_M,\sigma_{\text{d}}^2\mathbf{I}_M,\mathbf{I}_M)$, which completes the proof. 

\subsection{Proof of Lemma~\ref{Lemma:ul_bounds}}\label{proof:ul_bounds}
First define $\mathbf{Q} = [\mathbf{U}^{\top}\;{\mathbf{U}^{\bot}}^{\top}]^{\top}$, where $\mathbf{U}^{\bot}$ represents the orthogonal complement of $\mathbf{U}$. Clearly, $\mathbf{Q}\in \mathcal{O}_N$ is an orthonormal matrix and thus $\mathbf{Q}^{-1}=\mathbf{Q}^{\top}$ always holds. According to the Woodbury matrix identity, the inverse matrix $(\mathbf{\Sigma}_k^{\prime})^{-1}$ is given by
\begin{align*}
	(\mathbf{\Sigma}_k^{\prime})^{-1} &=\left[\sigma_{\text{c}}^2\mathbf{I}_N+(\mathbf{U} + \mathbf{Z}_k)^{\top}(\mathbf{U} + \mathbf{Z}_k)\right]^{-1},\\
	& = \sigma_{\text{c}}^{-2}\mathbf{I}_N - \sigma_{\text{c}}^{-4}(\mathbf{U} + \mathbf{Z}_k)^{\top}\\
	&\qquad\cdot\left[\mathbf{I}_M+\sigma_{\text{c}}^{-2}(\mathbf{U} + \mathbf{Z}_k)(\mathbf{U} + \mathbf{Z}_k)^{\top}\right]^{-1}(\mathbf{U} + \mathbf{Z}_k),\\
	& = \sigma_{\text{c}}^{-2}\mathbf{I}_N - \sigma_{\text{c}}^{-4}\mathbf{Q}^{\top}\hat{\mathbf{Z}}_k^{\top}(\sigma_{\text{c}}^2\mathbf{I}_M +\hat{\mathbf{Z}}_k\hat{\mathbf{Z}}_k^{\top})^{-1}\hat{\mathbf{Z}}_k \mathbf{Q},
\end{align*}
where similarly $\hat{\mathbf{Z}}_k = [\mathbf{I}_M\;\mathbf{0}_{M,N-M}] + \mathbf{Z}_k\mathbf{Q}^{\top}$. Then, one can define the SVD of $\hat{\mathbf{Z}}_k$ as $\hat{\mathbf{Z}}_k = \hat{\mathbf{Q}}_k\hat{\mathbf{S}}_k\hat{\mathbf{P}}_k^{\top}$ where $\hat{\mathbf{Q}}_k$ is an $M$-dimensional orthonormal matrix, $\hat{\mathbf{P}}_k$ is an $N$-by-$M$ orthonormal matrix and $\hat{\mathbf{S}}_k = \mathsf{diag}\left(s_{1,k},s_{2,k},...,s_{M,k}\right)$. Therefore, the inverse of the covariance matrix is rewritten as
\begin{align*}
	(\mathbf{\Sigma}_k^{\prime})^{-1}
	&= \sigma_{\text{c}}^{-2}\mathbf{I}_N -\\ &\,\quad\mathbf{Q}^{\top}\hat{\mathbf{P}}_k \mathsf{diag}\left(\frac{\sigma_{\text{c}}^{-2}s_{1,k}^2}{\sigma_{\text{c}}^2 + s_{1,k}^2},...,\frac{\sigma_{\text{c}}^{-2}s_{M,k}^2}{\sigma_{\text{c}}^2 + s_{M,k}^2}\right)\hat{\mathbf{P}}_k^{\top} \mathbf{Q}, \\
	& 
	=  \sigma_{\text{c}}^{-2}\mathbf{I}_N - \sigma_{\text{c}}^{-2}\mathbf{Q}^{\top}\hat{\mathbf{P}}_k\hat{\mathbf{S}}_k^2\left(\sigma_{\text{c}}^{-2}\mathbf{I}_M + \hat{\mathbf{S}}_k^2\right)^{-1} \hat{\mathbf{P}}_k^{\top}\mathbf{Q}, 
\end{align*}
which finishes this proof. 

\subsection{Proof of Lemma~\ref{Lemma:mean:projection}}\label{proof:mean:projection}
$\hat{\mathbf{P}}_k$ is the principal eigenspace of $\hat{\mathbf{Z}}_k$ that follows from \emph{symmetric innovation}~\cite{JF2019estimation2019}. That is, let $l\in \{1,2,...,N\}$ and $\mathbf{D}_l = \mathbf{I}_N - 2\mathbf{e}_l\mathbf{e}_l^{\top}$ and we have $\hat{\mathbf{Z}}_k^{\top}\hat{\mathbf{Z}}_k\overset{\text{d}}{=}\mathbf{D}_l\hat{\mathbf{Z}}_k^{\top}\hat{\mathbf{Z}}_k\mathbf{D}_l$, where $\overset{\text{d}}{=} $ represents that both sides have the same distribution, so do their eigenspaces. One can observe that $\mathbf{D}_l\hat{\mathbf{P}}_k$ is the principal eigenspace of $\mathbf{D}_l\hat{\mathbf{Z}}_k^{\top}\hat{\mathbf{Z}}_k\mathbf{D}_l$. Therefore, we have $\mathsf{E}\left[\hat{\mathbf{P}}_k\hat{\mathbf{P}}_k^{\top}\right] = \mathsf{E}\left[\mathbf{D}_l\hat{\mathbf{P}}_k\hat{\mathbf{P}}_k^{\top}\mathbf{D}_l\right] = \mathbf{D}_l\mathsf{E}\left[\hat{\mathbf{P}}_k\hat{\mathbf{P}}_k^{\top}\right]\mathbf{D}_l$. As this equation holds for $l\in \{1,2,...,N\}$, we can conclude that $\mathsf{E}\left[\hat{\mathbf{P}}_k^{\top}\hat{\mathbf{P}}_k\right]$ is diagonal. Furthermore, define $\mathbf{p}_{k,i}$ as the $\hat{\mathbf{P}}_k$'s $i$-th row and we have the $i$-th diagonal element $\mu_i = \mathsf{E}\left[\Vert \mathbf{p}_{k,i} \Vert\right] \geq 0$.

\subsection{Proof of Corollary~\ref{Pro:unbiasedEwoCSI}} \label{proof:unbiasedEwoCSI}
Let $\mathbf{U}_{\hat{\mathbf{Y}}}$ denote the PCs of received signal $\hat{\mathbf{Y}} = \hat{\mathbf{H}}(\mathbf{U} + \mathbf{Z}) + \hat{\mathbf{W}}$, where the subscript $k$ is omitted. We aiming at proving $d(\tilde{\mathbf{U}}^{\star},\mathbf{U}) = 0$, where $ \tilde{\mathbf{U}}^{\star} = \mathcal{S}_M\left( \mathsf{E}[\mathbf{U}_{\hat{\mathbf{Y}}}^{\top}\mathbf{U}_{\hat{\mathbf{Y}}}]\right)$. 

First define 
\begin{align}
	\mathbf{M} 
	= &[\mathbf{Q}\hat{\mathbf{W}}^{\top}+([\mathbf{I}_M\;\mathbf{0}_{M,N-M}] + \mathbf{Z}\mathbf{Q}^{\top})^{\top}\hat{\mathbf{H}}^{\top}]\nonumber\\
	&\cdot[\hat{\mathbf{H}}([\mathbf{I}_M \mathbf{0}_{M,N-M}] + \mathbf{Z}\mathbf{Q}^{\top}) + \hat{\mathbf{W}}\mathbf{Q}^{\top}]\nonumber,
\end{align}
where $\mathbf{Q} = [\mathbf{U}^{\top}\;{\mathbf{U}^{\bot}}^{\top}]^{\top}$. Due to the isotropic property of matrix Gaussian $\mathbf{Z}$ and $\hat{\mathbf{W}}$, we have $\mathbf{Z}\mathbf{Q}$ and $\hat{\mathbf{W}}\mathbf{Q}$ still follow from matrix Gaussian endowed with \emph{symmetric innovation}. Similarly, let $l\in \{1,2,...,M\}$ and $\mathbf{D}_l = \mathbf{I} - 2\mathbf{e}_l\mathbf{e}_l^{\top}$ and we have $\mathbf{M}\overset{\text{d}}{=} \hat{\mathbf{M}}$, where $\hat{\mathbf{M}} = \mathbf{D}_l\mathbf{M}\mathbf{D}_l$. One can notice that $\mathbf{U}_{\hat{\mathbf{Y}}}$ is also the $M$-dimensional PCs of $\mathbf{Q}^{\top}\mathbf{M}\mathbf{Q}$. Then, let $\hat{\mathbf{U}}_{\hat{\mathbf{Y}}}$ denote the $M$-dimensional PCs of $\mathbf{Q}^{\top}\hat{\mathbf{M}}\mathbf{Q}$ and we have $\hat{\mathbf{U}}_{\hat{\mathbf{Y}}} = \mathbf{U}_{\hat{\mathbf{Y}}}\mathbf{Q}^{\top}\mathbf{D}_l\mathbf{Q}$ and $\mathbf{Q}\mathsf{E}[\mathbf{U}_{\hat{\mathbf{Y}}}^{\top}\mathbf{U}_{\hat{\mathbf{Y}}}]\mathbf{Q}^{\top} = \mathbf{Q}\mathsf{E}[\hat{\mathbf{U}}_{\hat{\mathbf{Y}}}^{\top}\hat{\mathbf{U}}_{\hat{\mathbf{Y}}}]\mathbf{Q}^{\top}= \mathbf{D}_l\mathbf{Q}\mathsf{E}[\mathbf{U}_{\hat{\mathbf{Y}}}^{\top}\mathbf{U}_{\hat{\mathbf{Y}}}]\mathbf{Q}^{\top}\mathbf{D}_l$.
As this equation holds for any $l$,  $\mathbf{Q}\mathsf{E}[\mathbf{U}_{\hat{\mathbf{Y}}}^{\top}\mathbf{U}_{\hat{\mathbf{Y}}}]\mathbf{Q}^{\top}$ is diagonal, meaning that the rows of $\mathbf{Q}$ comprise the eigenvectors of $\mathsf{E}[\mathbf{U}_{\hat{\mathbf{Y}}}^{\top}\mathbf{U}_{\hat{\mathbf{Y}}}]$. 

Then, to prove that the $M$-dimensional principal of $\mathsf{E}[\mathbf{U}_{\hat{\mathbf{Y}}}^{\top}\mathbf{U}_{\hat{\mathbf{Y}}}]$ matches with the ground-truth $\mathbf{U}$, we define $\mathbf{V} = \mathbf{U}_{\hat{\mathbf{Y}}}\mathbf{Q}^{\top} = [v_{i,j}]_{i=1:M,j=1:N}$ that can be regarded as $\mathbf{M}$'s PCs, where we have $\sum_{j=1}^N v_{i,j}^2 = 1, \forall i$. Using the definition yields $\mathbf{Q}\mathsf{E}[\mathbf{U}_{\hat{\mathbf{Y}}}^{\top}\mathbf{U}_{\hat{\mathbf{Y}}}]\mathbf{Q}^{\top} = \sum_{i=1}^{M}\mathsf{E}\left[\mathsf{diag}\left( v_{i,1}^2,...,v_{i,N}^2\right)\right]$. Then, due to the isotropic property of $\hat{\mathbf{H}}$, $\mathbf{Z}$ and $\hat{\mathbf{W}}$, the eigenvalue corresponding to the eigenvector $\mathbf{v}_i = [v_{i,1},v_{i,2},...,v_{i,N}]$ is given by
\begin{align}
	\lambda_i 
	& = \left\Vert [\hat{\mathbf{H}}([\mathbf{I}_M\;\mathbf{0}_{M,N-M}] + \mathbf{Z}\mathbf{Q}^{\top}) + \hat{\mathbf{W}}\mathbf{Q}^{\top}]\mathbf{v}_i^{\top}\right\Vert_2, \nonumber\\
	& \overset{\text{d}}{=} \left\Vert [\hat{\mathbf{H}}([\mathsf{diag}(|v_{i,1}|,...,|v_{i,M}|)\;\mathbf{0}_{M,N-M}] + \mathbf{z}) + \mathbf{w}]\right\Vert_2 \nonumber,
\end{align}
where Gaussian vectors $\mathbf{z}$ and $\mathbf{w}$ are independent of $\mathbf{v}_i$. It is clear that with larger $|v_{i,j}|, j\in \{1,...,M\}$, the above norm has a higher probability to get a larger value. Hence, $\mathsf{E}[\sum_{i=1}^M v_{i,j_1}^2] \geq \mathsf{E}[\sum_{i=1}^M v_{i,j_2}^2]$ holds for $j_1\in \{1,2,...,M\}$ and $j_2 \in \{M+1,M+2,...,N\}$. That is, the first $M$ diagonal elements of $\mathbf{Q}\mathsf{E}\left[\mathbf{U}_{\hat{\mathbf{Y}}}^{\top}\mathbf{U}_{\hat{\mathbf{Y}}}\right]\mathbf{Q}^{\top}$ are larger than the remaining diagonal elements and therefore the ground-truth $\mathbf{U}$ that aggregates the first $M$ rows of $\mathbf{Q}$ is the PCs of $\mathsf{E}\left[\mathbf{U}_{\hat{\mathbf{Y}}}^{\top}\mathbf{U}_{\hat{\mathbf{Y}}}\right]$.

\subsection{Proof of Lemma~\ref{Lemma:EwoCSIdistance}}\label{proof:EwoCSIdistance}
Let $\lambda^{\prime}$ and $\lambda^{\prime\prime}$ denote the eigenvalues of $\frac{1}{K}\sum_{k=1}^K \mathbf{U}_{\hat{\mathbf{Y}}_k}^{\top}\mathbf{U}_{\hat{\mathbf{Y}}_k}$ and $\mathbf{U}^{\top}\mathbf{U}$, respectively. Indeed, $\lambda_M^{\prime\prime} = 1$ and $\lambda_{M+1}^{\prime\prime} = 0$ hold. Based on the variant of the Davis-Kahan theorem (see~\cite[Corollary 3.1]{NIPS2013_81e5f81d}), we have
\begin{align*}
	d(\tilde{\mathbf{U}}^{\star},\mathbf{U}) 
	& \leq \frac{2\left\Vert \frac{1}{K}\sum_{k=1}^K \mathbf{U}_{\hat{\mathbf{Y}}_k}^{\top}\mathbf{U}_{\hat{\mathbf{Y}}_k} - \mathbf{U}^{\top}\mathbf{U} \right\Vert_F}{\max(\lambda_M^{\prime}-\lambda_{M+1}^{\prime},\lambda_M^{\prime\prime}-\lambda_{M+1}^{\prime\prime})},\\
	& \leq 2\left\Vert \frac{1}{K}\sum_{k=1}^K \mathbf{U}_{\hat{\mathbf{Y}}_k}^{\top}\mathbf{U}_{\hat{\mathbf{Y}}_k} - \mathbf{U}^{\top}\mathbf{U} \right\Vert_F,
\end{align*}

As a result, the square subspace distance is then upper bounded as $d^2(\tilde{\mathbf{U}}^{\star},\mathbf{U}) \leq \frac{4}{K}\sum_{k=1}^Kd^2(\mathbf{U}_{\hat{\mathbf{Y}}_k},\mathbf{U}) - \frac{2}{K^2}\sum_{k,j=1,k\neq j}^Kd^2(\mathbf{U}_{\hat{\mathbf{Y}}_k},\mathbf{U}_{\hat{\mathbf{Y}}_j})$.
We further define $\mathbf{E}_k = \mathbf{Z}_k + \hat{\mathbf{H}}_k^+\hat{\mathbf{W}}_k$ and let $\epsilon  = C_2[\sigma_{\text{d}}^2 + \frac{1}{2N_{\text{r}}-M-1}\sigma_{\text{c}}^2]$ with a constant $C_2$ such that $\Vert\epsilon^{-1}\mathbf{E}_k\Vert_2\leq 1$ is almost sure. Then, based on the Lemma~\ref{Lemma:subspaceDistanceBetweenTwoPerturbedSubspace} and \eqref{eq:approximation}, we have $d^2(\mathbf{U}_{\hat{\mathbf{Y}}_k},\mathbf{U}) 
= 2\mathsf{Tr}\left(\mathbf{\Delta}_{1,k}\mathbf{U^{\bot}}^{\top}\mathbf{U}^{\bot}\mathbf{\Delta}_{1,k}^{\top}\right) + O\left(\epsilon^3\right)$ and $d^2(\mathbf{U}_{\hat{\mathbf{Y}}_k},\mathbf{U}_{\hat{\mathbf{Y}}_j}) = 2\mathsf{Tr}\left( \mathbf{\Delta}_{2,k,j}\mathbf{U^{\bot}}^{\top}\mathbf{U}^{\bot}\mathbf{\Delta}_{2,k,j}^{\top}\right) +  O\left(\epsilon^3\right)$, where $\mathbf{\Delta}_{1,k} = \mathbf{E}_k - \mathbf{O}$ and $\mathbf{\Delta}_{2,k,j} = \mathbf{E}_k - \mathbf{E}_j$. Conditioned on matrices $\{\hat{\mathbf{H}}_k\}_{k=1}^{K}$, $\mathbf{E}_k\sim \mathcal{MN}[\mathbf{0},\sigma_{\text{d}}^2\mathbf{I}_M+\sigma_{\text{c}}^2\hat{\mathbf{H}}_k^+(\hat{\mathbf{H}}_k^{+})^{\top},\mathbf{I}_N]$. Therefore, we have $\mathbf{\Delta}_{1,k}\sim \mathcal{MN}[\mathbf{0},\sigma_{\text{d}}^2\mathbf{I}_M+\sigma_{\text{c}}^2\hat{\mathbf{H}}_k^+(\hat{\mathbf{H}}_k^{+})^{\top},\mathbf{I}_N]$, $\mathbf{\Delta}_{2,k,j}\sim\mathcal{MN}[\mathbf{0},2\sigma_{\text{d}}^2\mathbf{I}_M+\sigma_{\text{c}}^2(\hat{\mathbf{H}}_k^+(\hat{\mathbf{H}}_k^{+})^{\top} + \hat{\mathbf{H}}_j^+({\hat{\mathbf{H}}_j^{+})^{\top}}),\mathbf{I}_N]$.

\subsection{Proof of Theorem~\ref{The:upperBoundMSSDEwoCSI}}\label{proof:upperBoundMSSDEwoCSI}
Based on Lemma~\ref{Lemma:EwoCSIdistance}, in high SNR regime, the MSSD $\mathsf{E}[d^2(\tilde{\mathbf{U}}^{\star},\mathbf{U})]$ is bounded as
\begin{align}
	& \mathsf{E}[d^2(\tilde{\mathbf{U}}^{\star},\mathbf{U})], \nonumber\\
	& \leq \mathsf{E}\left[\frac{4}{K}\sum_{k=1}^K\mathsf{Tr}\left(\mathbf{\Delta}_{1,k}\mathbf{U^{\bot}}{}^{\top}\mathbf{U}^{\bot}\mathbf{\Delta}_{1,k}^{\top}\right)\right] \nonumber\\
	&\,\quad- \mathsf{E}\left[\frac{2}{K^2}\sum_{k,j=1,k\neq j}^K\mathsf{Tr}\left( \mathbf{\Delta}_{2,k,j}\mathbf{U^{\bot}}^{\top}\mathbf{U}^{\bot}\mathbf{\Delta}_{2,k,j}^{\top}\right)\right], \nonumber\\
	& = \frac{4(N-M)}{K}\left[M\sigma_{\text{d}}^2 + \sigma_{\text{c}}^2\mathsf{Tr}\left(\mathsf{E}\left[\hat{\mathbf{H}}_k^+(\hat{\mathbf{H}}_k^{+})^{\top}\right] \right)\right], \nonumber\\
    & \leq \frac{4M(N-M)}{K}\left(\sigma_{\text{d}}^2 + \frac{1}{2N_{\text{r}}-M-1}\sigma_{\text{c}}^2\right) \nonumber,
\end{align}
where the last inequality follows from the result in Appendix~\ref{proof:upperBoundMSSDEwCSI}. This completes the proof.

\bibliography{Ref}
\bibliographystyle{IEEEtran}
\end{document}